\newif\ificml
\newif\ifneurips
\newif\ificlr
\newif\ifusenix
\usenixfalse 
\newif\ifanonymous
\anonymousfalse 
\newif\iffull
\fulltrue
\PassOptionsToPackage{dvipsnames}{xcolor}
\documentclass[letterpaper,twocolumn,10pt]{article}
\usepackage{usenix}
\ifusenix
\usepackage[available]{usenixbadges}
\fi
\ificlr
\usepackage{iclr2026_conference,times}
\fi

\ifneurips
\usepackage[dblblindworkshop,final]{neurips_2025}
\workshoptitle{Regulatable Machine Learning}
\fi

\ificml
\usepackage{icml2026}
\fi

\usepackage[utf8]{inputenc} 
\usepackage[T1]{fontenc}    
\usepackage{hyperref}       
\usepackage{url}            
\usepackage{booktabs}       
\usepackage{amsfonts}       
\usepackage{nicefrac}       
\usepackage{microtype}      
\usepackage{xcolor}         
\usepackage{amsmath}        
\usepackage{amsthm}         
\usepackage{xspace}
\usepackage{algorithm}
\usepackage[noend]{algpseudocode}
\usepackage{comment}
\usepackage{graphicx}
\usepackage{multirow}
\usepackage{pifont}
\usepackage{makecell}
\usepackage{tablefootnote}
\usepackage{todonotes}
\usepackage[justification=centering]{caption}
\newtheorem{lemma}{Lemma}
\newtheorem{cor}{Corollary}
\newtheorem{defn}{Definition}
\newtheorem{thm}{Theorem}

\newif\ifdraft
\draftfalse  

\ifdraft
\newcommand{\olive}[1]{{\color{violet}{OLIVE: #1}}}
\newcommand{\carter}[1]{{\color{blue}{CARTER: #1}}}
\newcommand{\lisa}[1]{{\color{cyan}{Lisa: #1}}}
\newcommand{\akira}[1]{[{\color{red!50} \textsf{Akira:} #1}]}
\newcommand{\np}[1]{{\color{orange}{Nicolas: #1}}}
\newcommand{\jonas}[1]{{\color{teal}{jonas: #1}}}
\else
\newcommand{\olive}[1]{}
\newcommand{\carter}[1]{}
\newcommand{\lisa}[1]{}
\newcommand{\akira}[1]{}
\newcommand{\np}[1]{}
\newcommand{\jonas}[1]{}
\fi

\ificml
\icmltitlerunning{Certified in Theory, Broken in Practice: Assumption Gaps in Cryptographic Model Certification}
\else
\title{Certified in Theory, Broken in Practice: \\Assumption Gaps in
  Cryptographic Model Certification}
\fi

\makeatletter
\newcommand{\samethanks}[1]{%
  \hyperlink{Hfootnote.#1}{%
    \textsuperscript{\normalfont\@fnsymbol{#1}}%
  }%
}
\makeatother

\ifanonymous
\else
\author{%
  Carter Luck\thanks{Equal contribution}\\
  University of Massachusetts Amherst 
  \and
  Olive Franzese-McLaughlin\samethanks{1} \\
  Vector Institute \& University of Toronto 
  \and
  Elisaweta Masserova\samethanks{1} \\
  Carnegie Mellon University
  \and
  Akira Takahashi \\
  J.P.Morgan AI Research \& AlgoCRYPT CoE 
  \and
  Antigoni Polychroniadou \\
  J.P.Morgan AI Research \& AlgoCRYPT CoE 
  \ifneurips
  \else
  \and
  Nicolas Papernot \\
  Vector Institute \& University of Toronto 
  \fi
}

\fi

\newcommand{\class}{\mathsf{class}}
\newcommand{\thr}{\mathsf{thr}}
\newcommand{\attr}{\mathsf{attr}}

\newcommand{\cur}{\mathsf{cur}}
\newcommand{\troot}{\mathsf{root}}
\newcommand{\tleft}{\mathsf{left}}
\newcommand{\tright}{\mathsf{right}}

\newcommand{\ba}{\mathbf{a}}

\newcommand{\ZKP}{\mathsf{ZKP}}
\newcommand{\rel}{\ensuremath{\mathcal{R}}}

\newcommand{\cP}{\mathcal{P}}
\newcommand{\cV}{\mathcal{V}}

\newcommand{\cA}{\mathcal{A}}
\newcommand{\cS}{\mathcal{S}}
\newcommand{\cE}{\mathcal{E}}
\newcommand{\stm}{\mathsf{x}}

\newcommand{\wit}{\mathsf{w}}

\newcommand{\view}{\mathsf{view}}
\newcommand{\secp}{\lambda}
\newcommand{\interact}[2]{\ensuremath{\langle #1,#2\rangle}}
\newcommand{\probrv}[2]{\ensuremath{\Pr\left[\begin{aligned}#1\end{aligned}\,:\, \begin{aligned}#2\end{aligned}\right]}}

\newcommand{\probcond}[2]{\ensuremath{\Pr\left[\begin{aligned}#1\end{aligned}\,\middle|\, \begin{aligned}#2\end{aligned}\right]}}

\newcommand{\COM}{\mathsf{COM}}
\newcommand{\commit}{\mathsf{Commit}}

\newcommand{\com}{\mathsf{com}}

\newcommand{\settrain}{S_\text{train}}
\newcommand{\setaudit}{S_\text{audit}}
\newcommand{\settest}{S_\text{test}}

\newcommand{\train}{\mathsf{Train}}
\newcommand{\audit}{\mathsf{Audit}}
\newcommand{\prove}{\mathsf{Prove}}
\newcommand{\ext}{\mathsf{Ext}}
\newcommand{\simul}{\mathsf{Sim}}

\newcommand{\pext}{p_\text{ext}}
\newcommand{\pacc}{p_\text{acc}}

\newcommand{\auditcsp}{\Pi_\text{csp}}

\newcommand{\querysp}{Q}

\newcommand{\dpemp}{\Delta_\mathsf{dp}}
\newcommand{\dptrue}{\hat{\Delta}_\mathsf{dp}}

\newcommand{\gen}{\mathsf{Gen}}

\newcommand{\cmark}{{\color{green!50!black}\ding{51}}}  
\newcommand{\xmark}{{\color{red!70!black}\ding{55}}}    
\newcommand{\pmark}{{\color{orange!85!black}\ding{115}}}  

\newcommand{\myparagraph}[1]
{\noindent{\bf #1}}

\begin{document}
\ificml
\twocolumn[
\icmltitle{Data Forging Attacks on \\ Cryptographic Model Certification}


\icmlsetsymbol{equal}{*}

\begin{icmlauthorlist}
\icmlauthor{Firstname1 Lastname1}{equal,yyy}
\icmlauthor{Firstname2 Lastname2}{equal,yyy,comp}
\icmlauthor{Firstname3 Lastname3}{comp}
\icmlauthor{Firstname4 Lastname4}{sch}
\icmlauthor{Firstname5 Lastname5}{yyy}
\icmlauthor{Firstname6 Lastname6}{sch,yyy,comp}
\icmlauthor{Firstname7 Lastname7}{comp}
\icmlauthor{Firstname8 Lastname8}{sch}
\icmlauthor{Firstname8 Lastname8}{yyy,comp}
\end{icmlauthorlist}

\icmlaffiliation{yyy}{Department of XXX, University of YYY, Location, Country}
\icmlaffiliation{comp}{Company Name, Location, Country}
\icmlaffiliation{sch}{School of ZZZ, Institute of WWW, Location, Country}

\icmlcorrespondingauthor{Firstname1 Lastname1}{first1.last1@xxx.edu}
\icmlcorrespondingauthor{Firstname2 Lastname2}{first2.last2@www.uk}

\icmlkeywords{Machine Learning, ICML}

\vskip 0.3in
]



\printAffiliationsAndNotice{}  
\else 
\maketitle
\fi 

\begin{abstract}

Privacy-preserving machine learning auditing protocols allow auditors to assess models for properties such as accuracy or fairness, without revealing their internals or training data. This makes them especially attractive for auditing models deployed in sensitive domains such as healthcare or finance.
For these protocols to be meaningful in real-world audit settings, though, their guarantees must reflect how the model will behave once deployed, rather than merely certifying its behavior during an audit. Existing security definitions often miss this mark: most certify model behavior only on a \emph{fixed audit dataset}, without ensuring that the same guarantees \emph{generalize} to other datasets drawn from the same distribution.
As we show, this gap allows a model provider to attack many cryptographic model certification (CMC) schemes built on secure zero knowledge proofs (ZKP) by carefully engineering training data, resulting in models that exhibit benign behavior during an audit, but pathological behavior in practice. For example, we empirically demonstrate that an attacker can certify that a model achieves over 99\% accuracy on an audit dataset, but less than 30\% accuracy on fresh samples from the same distribution.

To address this gap, we formalize rigorous cryptographic security notions tailored to CMC frameworks, introduce a generic protocol template, and prove that it satisfies these requirements. Our results thus offer both cautionary evidence about existing approaches and constructive guidance for designing secure, privacy-preserving ML auditing protocols.

\end{abstract}

\section{Introduction}
\looseness=-1
Certifiable, privacy-preserving machine learning aims to formally prove desired properties of the model while keeping model parameters and training data confidential~\cite{ZhangFZS20,LiuXZ21,shamsabadi2022profitt}. In this context, the typical certification lifecycle follows a sequence in which the model provider first trains the model, then an auditor evaluates it according to desired criteria, and---after passing the audit---the certified model is deployed.\footnote{Some works require continuous auditing during deployment instead of a single audit pre-deployment; see Table~\ref{tab:ppml-landscape}.} Crucially, the trustworthiness of this certification does not rely on trusting the model provider or a specific machine learning algorithm. Instead, it is ensured by cryptographic mechanisms -- such as commitments and zero-knowledge proofs~\cite{GoldwasserMR85} -- which guarantee that the auditor’s evaluation is performed correctly on the claimed model. 
The usage of cryptographic techniques allows the auditor to not only verify whether the model has the intended property, but do so while keeping the model internals and training data private. 

However, as we observe, the guarantees that \emph{cryptographic model certifications} (CMC) provide often fail to reflect the model’s behavior beyond the specific setting in which certification is performed. In this paper, we show that such audit-specific guarantees risk creating a false sense of security: by themselves, they do not ensure that the certified properties will continue to hold once the model is deployed and applied to \emph{fresh data}, even when this data is drawn from the same distribution as the audit dataset.
We demonstrate that this is not merely a theoretical concern.

\begin{table*}[t]
  \centering
  \small
  \caption{Analysis of vulnerabilities to data-forging attacks in privacy-preserving ML audits.\\ \cmark = supported; \pmark = conditional; \xmark = not supported.}
  \label{tab:ppml-landscape}
  \setlength{\tabcolsep}{6pt}
  \begin{tabular}{lcccc c c}
    \toprule
    \multirow{2}{*}{Work} &
    \multicolumn{4}{c}{\textbf{Certified property}} &
    \makecell{\textbf{Resilience to}\\\textbf{data-forging}} &
    \makecell{\textbf{Continuous}\\\textbf{verification}}\\
    \cmidrule(lr){2-5}
         & Acc. & Group Fair & Indv. Fair & Diff. Priv.  \\
    \midrule
Zhang et al.~\cite{ZhangFZS20}   & \cmark & \xmark & \xmark & \xmark & \pmark\ (pd)  & \xmark\\
    Shamsabadi et al.~\cite{shamsabadi2022profitt}   & \xmark & \cmark  & \xmark & \xmark & \xmark & \xmark\\
    Yadav et al.~\cite{yadav2024fairproof}    & \xmark & \xmark & \cmark & \xmark & \cmark  & \cmark \\
    Liu et al.~\cite{LiuXZ21}     & \cmark & \xmark & \xmark & \xmark & \pmark\ (pd)  & \xmark \\
    Franzese et al.~\cite{franzese2024oath}     & \xmark & \cmark & \xmark  & \xmark & \cmark  & \cmark\\
    Shamsabadi et al.~\cite{shamsabadi2024dpproof} & \xmark & \xmark & \xmark  & \cmark & \xmark  & \xmark\\
    Kang et al.~\cite{kangdnn}         & \cmark & \xmark & \xmark  & \xmark & \cmark  & \xmark\\
    Wang and Hoang~\cite{WangH23}             & \cmark & \xmark & \xmark  & \xmark & \pmark\ (pd)  & \xmark\\
    Bourrée et al.~\cite{p2nia}             & \xmark & \cmark  & \xmark & \xmark  & \xmark & \xmark\\
    \bottomrule
  \end{tabular}
 
  {\footnotesize
Acc. = accuracy; Group/Indv. Fair = group/individual fairness; Diff. Priv.=differential privacy. “Conditional” works lack detail to assess resilience to data-forging, but indicate deployments with public datasets (pd), which would be make the solution vulnerable. Continuous verification means audits must run continuously during deployment (e.g., via clients) rather than once pre-deployment. 
  }
\end{table*}

In more detail, consider a certification workflow, where the audit dataset is drawn from public benchmarks, or derived from distributions that providers can closely approximate. In such scenarios, when the audit dataset is known to the model provider -- even partially -- prior to the start of the audit procedure\footnote{Or, alternatively, if the audit procedure does not require an audit dataset at all, such as in ~\cite{shamsabadi2024dpproof}.}, we describe simple yet powerful attack strategies that can be executed by an adversarial model provider. Our attacks allow the provider to pass an audit (thus enabling deployment) while simultaneously pursuing its own, potentially conflicting, interests. 
For example, while an auditor may seek to verify fairness, the model provider may instead prioritize accuracy—even when accuracy and fairness are in tension. 
As a result of the attack, the audited model passes an audit without exhibiting the intended behavior in practice, and this holds even if strong cryptographic tools are used to ensure that the model during the audit procedure is \emph{the same} as the one used during the deployment.
Our threat model reflects common certification practice (both in the academic literature and in practice). 
As surveyed in Appendix~\ref{sec:case_study}, many prior works in model certification utilize publicly known datasets for audit purposes. 
Moreover, when a company audits a model provided by another company, the audit dataset—whether sourced from the company's synthetic test portfolio or from public benchmarks—is typically disclosed to the model provider, with no mechanism in place to ensure that providers finalize their trained models \emph{before} seeing the audit data.

The key idea in our attacks that allows us to bypass common cryptographic defenses is that we do not alter the training process itself. Instead, we show that knowing the certification procedure enables the model provider to forge training data so that a model honestly trained on it passes the audit, yet still exhibits pathological behavior on other datasets coming from the same distribution as the audit dataset. 
We demonstrate the effectiveness of our attacks against certification procedures for accuracy, fairness, and differential privacy.


We empirically show that our \textbf{data forging attacks} can cause dramatic gaps between audit-time guarantees and true model performance: for instance, in one of our attacks a model can pass an audit requiring 99\% accuracy on the audit dataset, yet achieve only 30\% accuracy on new samples from the same distribution. We show that our attacks remain undetected by straightforward approaches such as statistical tests, e.g., Welch's $t$-test~\cite{welch1947generalization} are performed to check whether the training data and audit data were taken from the same distribution. 

To assess how widespread vulnerability to data forging attacks may be, we conduct a case study covering nine recent papers published at top venues in both security and machine learning. We find that six of these works are at least partially vulnerable\footnote{Several works do not specify the auditing protocol or formal security guarantee in sufficient detail to allow a definitive assessment. However, many explicitly consider deployments based on public audit datasets, under which their approaches are vulnerable.}. We provide    a summary of results in Table~\ref{tab:ppml-landscape}.

Given these vulnerabilities, we introduce a formal foundation for cryptographic model certification that provides rigorous guidance for secure auditing practice.
This foundation includes a security definition ensuring that a model provider passes the audit if and only if the model satisfies the desired property (up to a small quantifiable error) on the underlying data distribution. 
We further formalize an attack game that highlights the gap between certifying a property on a fixed dataset and certifying that the same property generalizes to fresh samples from the distribution. 
Finally, we propose a generic template for achieving secure CMC. 
Our approach is agnostic to the specific property that is being certified and, as we formally prove, guarantees that whenever a model passes an audit, the certified properties will also hold at deployment. The key ingredient is ensuring that the audit is conducted on test data that is \emph{independent} of both the model and its training data. This method might serve as a template for future works to obtain not only efficient, but also secure auditing solutions. Applying this method allows to restore/expand security in some of the case study works. 

\ifneurips
Motivated by these vulnerabilities, we introduce a unified syntax capturing existing auditing schemes and define a formal attack game that highlights the gap between certifying a property on a fixed dataset and certifying that the property generalizes to fresh samples from the distribution.
Given the existence of data-forging attacks in this setting, we underscore the importance of conducting audits on test data that is \emph{independent} of both the model and its training data. This approach may serve as a template for future work to achieve auditing solutions that are not only efficient but also secure.
\fi

In summary, our work advances the study of cryptographic auditing for machine learning by 
\textbf{(i)} identifying simple, yet effective, attack strategies that allow to pass an audit while enabling pathological model behavior at deployment with respect to new datasets from the same distribution as the audit one,  
\textbf{(ii)} empirically demonstrating the effectiveness of our attack against generic certification approaches for four objectives: accuracy, fairness, differentially private training, and statistics for distribution similarity testing,
\textbf{(iii)} conducting a systematic case study of prior certification approaches and corresponding threat models proposed in the literature and identifying several existing approaches that are susceptible to data forging vulnerabilities, and
\textbf{(iv)} introducing formal security definitions tailored to certifiable machine learning and a protocol template that mitigates the attack.
We emphasize that we \emph{do not suggest that prior cryptographic works are broken on a technical level}, rather that the guarantees these works provide deserve closer scrutiny. Our findings comprise strong evidence that secure audit solutions with any of the following properties are unlikely: a) those which utilize known public datasets for test purposes, and b) those that reuse test datasets (if model owner learns a substantial amount of this test dataset during the audit). This evidences the importance of continuous sampling of fresh data for a successful audit infrastructure. We hope that our work will inform the design of future cryptographically secure machine learning audit frameworks.

\section{Threat Model: Cryptographic Model Certification (CMC)}\label{sec:mlback}

We consider two parties: a malicious model provider a.k.a. prover, and an honest-but-curious auditor\footnote{We focus on honest-but-curious auditors for simplicity: it allows us to prove zero-knowledge (Theorem~\ref{thm:auditcsp-general}) from the weakest possible assumption, namely an honest verifier ZK. Our main focus is on the soundness gap caused by malicious model providers, so this choice does not affect the core contributions. This is not an inherent limitation: if the underlying proof system is ZK against fully malicious verifiers (e.g., a non-interactive proof system in the common reference string or random oracle model), then $\auditcsp$ achieves ZK against malicious auditors by the same argument.} a.k.a. verifier. The auditor and model provider are in agreement on an interactive audit protocol wherein both input data (some of which may be \emph{proprietary} data private to the model provider) and the model provider additionally inputs a set of model parameters. The auditor's goal is to assess whether the model has trustworthy properties such as accuracy, fairness, or differential privacy, while maintaining the confidentiality of the model and proprietary data. 
The protocol will output a bit indicating whether the model provider ``passes'' the audit. 

The model provider may submit arbitrary model parameters, and arbitrary data to the protocols requested by the auditor. In some scenarios in the paper, we consider a model provider who can gain access to some or all of the auditor's data (mimicking many real-world scenarios where audit data is reused) prior to the audit. The goal of the model provider is to pass the audit, while exploiting their hidden information to surreptitiously embed the model with nefarious properties.

\paragraph{Cryptographic Model Certification vs. Data Poisoning.}
Attacking cryptographic model certification is related, but distinct from, the problem of data poisoning~\cite{SteinhardtKL17}. Such attacks have traditionally been considered in the context of ML systems trained on user-provided data. Both data poisoning and our attacks on CMC (see~\S\ref{sec:attack-methods}) involve adversarial manipulations of training data. However, their \textbf{threat models differ}, leading to fundamentally different attacks and defenses:

In data poisoning the model provider is the \emph{defender}, and it seeks to learn an accurate model. The \emph{attacker} is an external party who aims to corrupt the learned model by inducing the provider to incorporate an auxiliary training dataset $D_p$ of size no larger than clean dataset $D_c$~\cite{BarrenoNJT10}. In contrast, in CMC the roles are reversed: \textbf{the attacker is the AI/ML service provider itself}, while the defender is an external auditor party. The goal of the service provider is to engineer a model that passes an audit, while violating the certified properties on real-world data. This scenario considers a vastly different set of adversarial capabilities, and imposes additional constraints on the defender:

\textbf{1. A CMC adversary can arbitrarily manipulate both the model and the training data.} This includes both label perturbation and feature-level manipulations. It further encompasses adversaries corrupting all training data, i.e., it \emph{allows for traditional poisoning attacks, but without any restrictions on the poisoning ratio}. In fact, an adversarial model provider might simply chose not to include any clean data at all. Finally, the adversary may even corrupt the model parameters directly. Note that simply asking a model provider to execute a certain protocol (including a data poisoning defense mechanism) is not sufficient – a malicious model provider could choose to simply ignore this instruction.
 
\textbf{2. Both the model and the training data are hidden from the defender.} In contrast to poisoning threat models where the defender has full access to the model and data, the defending party in CMC is an external auditor, who is not allowed to view the model and data as they are typically proprietary. \footnote{CMC can also be applied to auditing openly accessible (e.g., open-source) models, where confidentiality of model weights is not a concern. In this setting the system can be made more efficient: one can use a succinct proof system without the zero-knowledge property to reduce verification overhead, and skip generating cryptographic commitments to model weights and the corresponding commitment-opening proofs. The soundness guarantees of this work apply in this setting as well.} This imposes additional constraints on the defender, who is tasked with assessing the validity of the model while keeping its parameters confidential. This is why known solutions with provably secure guarantees (including ours) crucially rely on cryptographic techniques.

To summarize, a CMC adversary can make use of  classical poisoning techniques, including large-scale perturbations, label corruption, and visible backdoor triggers, to mount its attacks. However, the CMC adversary is strictly stronger than the adversary typically considered in traditional poisoning: It can use \emph{additional} attack strategies that involve, e.g., arbitrarily modifying all data and even directly setting model parameters.  Furthermore, the CMC's goal is fundamentally different: it does not attempt to detect or mitigate poisoned data, but instead ensure that any model that passes the audit must satisfy the audited property, regardless of how it was produced. 

\paragraph{Cryptographic Model Certification and Public Benchmarks in ML.}
The conclusions we draw about requiring fresh data for auditing are semantically related to work on the inadequacy of public benchmarks in machine learning~\cite{zhang2025benchmark,hardt2025emerging}, but those works do not consider cryptographic security. For additional related work and an overview of certifiable ML, see~\S\ref{sec:relatedworkcontinued}. 

In~\S\ref{sec:construction-overview}, our goal is to design a CMC such that -- even under these powerful adversarial capabilities -- if the model passes the audit, then it must genuinely satisfy the audited property (e.g., a fairness or accuracy threshold).

\section{Background}\label{sec:relatedwork}

Consider the following scenario: \textcolor{BlueViolet}{An auditor wishes to verify whether a model utilized by an insurance company to justify claim decisions (approve/deny claim) is accurate on a dataset of the auditor's choosing. At the same time, the company does not want to reveal its model due to concerns about privacy and business competition.} In CMC cryptographic techniques are deployed to reconcile these seemingly conflicting goals. 

\subsection{Zero Knowledge Proofs} 
Among these techniques, the central tool is \emph{zero-knowledge (ZK) proofs}\cite{GoldwasserMR85}, a classical cryptographic primitive, which allows one party (a \emph{prover}) prove a statement $\stm$ to another party (\emph{verifier}) without revealing anything else apart from the validity of this statement. 
Such proofs are constructed for a concrete NP relation $\rel$, which is used to formalize what it means for a statement to be true by specifying the type of evidence (witness $\wit$) that certifies it. The statement $\stm$ is public, the witness $\wit$ is private, and the zk proof checks $(\stm,\wit)\in\rel$, without revealing $\wit$.   
In certifiable ML, such proofs allow \textcolor{cyan}{model provider} (\textcolor{cyan}{prover}) to formally prove that a \textcolor{CarnationPink}{model} (\textcolor{CarnationPink}{witness}) satisfies a \textcolor{ForestGreen}{desired property (e.g., accuracy, fairness, or inference correctness) on a given test dataset} (\textcolor{ForestGreen}{statement}) without learning anything else about the model or the training data.\ More formally:


\begin{defn}[Proof System] An (interactive) \emph{proof system} $\ZKP$ for an NP relation $\rel$ is a tuple of interactive Turing machines $(\cP,\cV)$, where $\cP$ is prover and $\cV$ is verifier.
Let $b\gets\interact{\cP(\wit)}{\cV}(\stm)$ denote the interaction between $\cP$ and $\cV$, where both $\cP$ and $\cV$ take $\stm$ as common inputs, and $\cP$ additionally takes $\wit$ as a private input. At the end of interaction, $\cV$ halts by outputting a binary $b$.
\end{defn}

Proof systems that are used in ML auditing typically require the following security properties: For an NP relation $\rel$, they must provide \emph{completeness} (i.e., if the prover and verifier follow the protocol with input $(\stm,\wit)\in\rel$, the verifier always accepts), \emph{knowledge soundness} (i.e., if the verifier accepts, then it must be that the prover owns a valid witness $\wit$ satisfying the given NP relation w.r.t. statement $\stm$), and \emph{zero knowledge} (i.e., the transcript of the interaction between an honest prover and an honest verifier leaks nothing except that there exists a witness $\wit$ such that $(\stm, \wit) \in \rel$).
We defer formal definitions to \iffull \S\ref{sec:zksecurity}.\else the full version. \fi 


\subsection{Cryptographic Commitments} 
Returning to our example, suppose \textcolor{BlueViolet}{the insurance company has successfully passed an audit and can now deploy its model. How can a customer submitting inference queries be assured that the company continues to use the \emph{certified} model—rather than switching to a different, unverified one? Again, the company still wishes to keep its model private.}
The standard cryptographic tool here is \emph{commitments}, which bind the provider to a single private model during the audit. This prevents “model switching” and ensures that model used in deployment is the same as the one that was certified:

\begin{defn}[Commitment Scheme]
A commitment scheme is an algorithm $\commit$,
which is executed as $\com\gets\commit(m;\rho)$. 
It takes as input a message $m$, a uniformly sampled randomness $\rho$, and returns a commitment $\com$. 
\end{defn}

We require two security properties: \emph{hiding} (i.e., given a commitment $\com$, it leaks nothing about the message $m$), and \emph{binding} (i.e., it is computationally infeasible to open the same $\com$ to distinct $m$ and $m'$). 
We defer formal definitions to \iffull \S\ref{sec:comsecurity}. \else the full version. \fi
Now, auditing may require publishing such a commitment to the model,\footnote{The commitment may be signed by the auditor.} after which the client and insurance company engage in a ZK proof of inference against it\footnote{
In more detail, the insurance company will supply the inference outcome $y$ along with a ZK proof proving that 1) the inference outcome is $y$ has been computed using a private model $M$ and the given client's input, and 2) $M$ is the model hidden behind the previously published commitment that was approved by the auditor.}. Figure~\ref{fig:auditprocess} shows the full  certification workflow, where the audit dataset is revealed to the model provider prior to committing to the model.

\begin{figure*}[t]
    \centering
    \includegraphics[width=0.8\textwidth]{images/auditfull.png}
   \caption{ 
   \footnotesize{Simplified protocol flow for (insecure) ZK-based ML certification. Left: The model provider, after observing the audit dataset, commits to a model and engages with the auditor in a zero-knowledge proof of accuracy (ZKProofOfAcc). If the audit succeeds, the auditor certifies the committed model. Right: For each new inference query, the model provider interacts with the client in a zero-knowledge proof of inference (ZKProofOfInf) protocol, ensuring that the result is consistent with the previously certified commitment.}}
    \label{fig:auditprocess}
\end{figure*}

\subsection{Unifying Syntax for Prior CMC Schemes}\label{sec:previous-cmc-syntax}
We will next discuss the security guarantees of works that address the certification stage (Figure~\ref{fig:auditprocess}, Left)
---namely, proofs of accuracy, fairness, etc., between auditor and model provider. To analyze these systematically, rather than case by case, we abstract away implementation details and introduce a unifying syntax that captures a broad class of existing auditing systems \cite{ZhangFZS20,shamsabadi2022profitt,shamsabadi2024dpproof,LiuXZ21,WangH23,p2nia}. 
To this end, we introduce a \emph{certification predicate} $f$ outputting binary depending whether the input satisfies certain properties. 
Then given a commitment scheme $\commit$ and $f$, we can define a $\ZKP$ for the following relation $\rel$: given $\stm = (\mathsf{com}_h, \mathsf{com}_S, \setaudit)$ as a public statement and $\wit=(h,\rho_h,\rho_S)$ as private witness, $\rel$ outputs 1 iff (1) $f(h, \settrain, \setaudit) = 1$, (2) $h$ is the ML model (or hypothesis) committed in $\mathsf{com}_h$ under randomness $\rho_h$, and (3) $\settrain$ is the training data (which can also include randomness used for training if the training algorithm is randomized) committed in  $\mathsf{com}_S$ under randomness $\rho_S$.
With these building blocks, we define an (insecure) CMC scheme as follows:

\begin{enumerate}
    \item Auditor samples $\setaudit$ from some distribution $\mathcal{D}$ (or uses a public one) and sends $\setaudit$ to the model provider.
    \item Model provider sends cryptographic commitments to its model $\mathsf{com}_h = \commit(h;\rho_h)$ and to the training data $\mathsf{com}_S = \commit(\settrain;\rho_S)$ to the auditor, where $\rho_h$ and $\rho_S$ are sampled uniformly from the randomness space.
    \item They interact to execute $\ZKP$: $b\gets\interact{\cP(h,\settrain)}{\cV}(\mathsf{com}_h, \mathsf{com}_S, \setaudit)$, where model provider plays $\cP$ and the auditor plays $\cV$ and outputs $b$. 
\end{enumerate}

If the output $b$ is $1$, the auditor gets assurance that private model $h$ and training data $\settrain$ satisfy relation $\rel$ without learning anything else about $h$ or $\settrain$. 
We note that, depending on the concrete instantiation of $f$ (see \S~\ref{sec:predicates} for examples), some steps may be omitted.
For example, \cite{ZhangFZS20} verifies accuracy of a model $h$, and their predicate $f$ only needs to check that $h$'s empirical error on $\setaudit$ is below a given threshold. 
Hence, their protocol does not require $\settrain$ and $\mathsf{com}_S$ since they are never used. 
We keep our syntax intentionally general to capture a wide range of prior works. 

\medskip
\noindent\textbf{Security of CMC Schemes}
At first glance, one might expect the security of the above CMC scheme to follow directly from the aforementioned properties of the underlying zero-knowledge proof system $\ZKP$ and commitment scheme $\commit$. 
This is indeed the case for confidentiality, i.e., privacy of the model and training data. 
However, as we show in Section~\ref{sec:attack}, (knowledge) soundness of $\ZKP$ and binding of $\commit$ do \emph{not} necessarily imply the desired notion of soundness: a certified model may satisfy the audited property on the specific audit dataset yet fail to generalize beyond it.
In particular, the soundness of $\ZKP$ is defined only with respect to a fixed statement $\stm=(\mathsf{com}_h,\mathsf{com}_S,\setaudit)$, which binds the proof to a \emph{particular} dataset $\setaudit$. 
Consequently, the proof alone does not ensure good behavior on \emph{fresh} data sampled from the distribution, which is the central goal of auditing. 
To the best of our knowledge, prior CMC work does not formalize appropriate security goals. 
In Section~\ref{sec:construction-overview}, we introduce formal definitions that address it and present a CMC scheme that satisfies them.


\subsection{Example of Certification Predicates}\label{sec:predicates}
We now list concrete examples of predicates $f$ considered in prior works on CMC. 

\medskip
\noindent\textbf{Certifying Accuracy}
To certify accuracy, we consider the empirical error rate as follows:
\begin{align*}
	\hat{\ell}_S(h) &= \frac{1}{n}\sum_{(x,y)\in S}\mathbb{I}(h(x)\neq y)
\end{align*}
where $n = |S|$, $x$ is a data sample, and $y$ is the corresponding label. 
For a public error threshold $t$, the auditor can check whether the model $h$ has empirical error below $t$ on the audit dataset $\setaudit$ by defining the corresponding predicate $f$ as follows:
\begin{align*}
    f(h,\setaudit) = 1  & \iff \hat{\ell}_{\setaudit}(h) \leq t
\end{align*}

\noindent\textbf{Certifying Fairness with Demographic Parity}
Demographic parity \cite{calders2009demographic} is one of the most basic fairness metrics, measuring the difference between the prediction probabilities conditioned on a sensitive attribute.
We consider the empirical parity differences as follows: 
\begin{align*}
	\dpemp(h,\setaudit) &= \left|\frac{1}{n_0}\sum_{x\in S_0}\mathbb{I}(h(x) = 1) - \frac{1}{n_1}\sum_{x\in S_1}\mathbb{I}(h(x) = 1) \right| 
\end{align*}
where $s_x$ denotes the sensitive feature of a data point $x$, $S_0 = \{x \in \setaudit : s_x = 0 \}$, $S_1 = \{x \in \setaudit : s_x = 1 \}$, $n_0 = |S_0|$, and $n_1 = |S_1|$.
With this definition, the model is considered ``fair'' if the parity difference $\dpemp$ is low. 
For a public parity threshold $t$, the auditor can check whether the model $h$ has empirical parity difference below $t$ on the audit dataset $\setaudit$ by defining the corresponding predicate $f$ as follows:
\begin{align*}
    f(h,S_\text{audit}) = 1  & \iff \dpemp(h,\setaudit) \leq t
\end{align*}

\noindent\textbf{Certifying Correct Training Procedure} To check whether a model $h$ is correctly trained according to a specified training procedure $\train$ on a training dataset $\settrain$, we define the predicate $f$ as follows:
\begin{align*}
	f(h,\settrain) = 1  & \iff h = \train(\settrain)	
\end{align*}
where a randomized training procedure can be captured by assuming $\settrain$ contains training randomness. 

\medskip
\noindent\textbf{More Complex Predicates} 
We note that these examples are not exhaustive, and more complex predicates can be constructed by defining an arbitrary predicate $f$ amenable to general-purpose zero-knowledge proofs or by combining multiple predicates.
For instance, one may define $f$ to certify both accuracy \emph{and} correct training procedure simultaneously as follows:
\begin{align*}
	f(h,\settrain, \setaudit) = 1  & \iff \hat{\ell}_{\setaudit}(h) \leq t \, \land\, h = \train(\settrain)	
\end{align*}


\section{Attacking CMC}\label{sec:attack}
Returning to our example, suppose \textcolor{BlueViolet}{the insurance company saves costs by \emph{denying} claims.} 
Intuitively, an accuracy audit with provable guarantees—such as those provided by ZK proof-based systems—and with a sufficiently high threshold (e.g., passing only if accuracy on the auditor’s dataset exceeds 95\%) should prevent the company from deploying a model that unjustifiably denies too many claims.

We show that this intuition is false. Because machine learning is inherently data-dependent, certified properties need not hold once the model is deployed and applied to fresh data, even when drawn from the same distribution. More formally, while prior works certify that 
$$f(h,\settrain, \setaudit)=1$$ 
for some predicate $f$, a given model $h$, training data $\settrain$, and audit dataset $\setaudit$, this does not imply that the stronger property $F$ such that
$$F(h,\settrain)=1 \iff \Pr_{\settest \gets \mathcal{D}}[f(h,\settrain,\settest)=1]>\mathsf{p},$$
where $\mathsf{p}$ is a (non-negligible) probability and $\mathcal{D}$ is a distribution over the entire population $\querysp=\{(x_i,y_i)\}_{i=1}^m$.
The true goal of an audit, however, is precisely such stronger guarantees: an auditor typically seeks to ensure that a model remains fair, accurate, or robust not only on a particular dataset, but also on the unseen datasets it will encounter during deployment.

\begin{figure}[t]
\captionsetup{justification=raggedright,singlelinecheck=false}
    \includegraphics[width=0.49\textwidth]
    {images/auditmal.png}
   \caption{ 
   \footnotesize{Protocol flow during a data forging attack. The model provider, i.e., the attacker, after observing the audit dataset, commits to a model engineered using the knowledge (indicated by red arrow) of this dataset. Then, it engages with the auditor, i.e., the defender, in a zero-knowledge proof of accuracy (ZKProofOfAcc). If the audit succeeds, the auditor certifies the committed corrupted model.}}
    \label{fig:auditprocessmal}
\end{figure}

We show that this gap can be exploited. In particular, if $\setaudit$ is partially known to the model provider before it is required to cryptographically commit to the model, the provider can ensure $f(h,\settrain, \setaudit)=1$ (and thus pass the audit), \emph{without} additionally satisfying $F$, which is the actual intended security property (see Fig.~\ref{fig:auditprocessmal} for a visualization). A malicious model provider has strong incentives to do so: for example, \textcolor{BlueViolet}{the insurance company could deploy a model that maximizes accuracy on the audit dataset (and thus passes the audit), yet still unjustifiably denies numerous insurance claims.} 

\noindent\textbf{Attack Game with Known Audit Data}
Before providing a concrete attack example, we introduce a theoretical tool -- an attack game -- which showcases the gap between verifying $f(h,S_{train}, S_{audit})$ (which is what prior approaches certified) and $F(h)=(\Pr\limits_{S_{test}\leftarrow \mathcal{D}}[f(h, S_{train}, S_{test})=1]>\mathsf{p})$ (the intuitive property that one would want to ensure) for CMC schemes where the model provider is given the audit dataset at the beginning of the audit process. 
 
 For simplicity, we will assume that the audit process verifying $f(h,S_{train}, S_{audit})$ is perfectly secure, i.e., the outcome of $\interact{\cP(h,\settrain)}{\cV}(\mathsf{com}_h, \mathsf{com}_S, \setaudit)$, where $\mathsf{com}_S$ is a commitment to $S_{train}$ and $\mathsf{com}_h$ is a commitment to $h$, is 1 if and only if $f(h,S_{train}, S_{audit})=1$.

 In the game, the adversary will win only if it can come up with a model $h$ and training data $S_{train}$, such that: \textbf{(1)} $f(h,S_{train}, S_{audit})=1$, i.e, the adversary would pass an audit on the dataset $S_{audit}$, and \textbf{(2)} $F(h,S_{train})=0$. To make the attack even stronger, we require the adversary to additionally satisfy a utility requirement (formalized via a predicate $L$) in order to win the game. Intuitively, the goal of $L$ is to capture the actual intent of the malicious model owner: For example, in case of the insurance company that wishes to deny claims, we could use $L(h)=\Pr\limits_{x \sim \{0,1\}^d}[h(x) = 0]>0.9$.

\begin{defn}[Adaptive Training with Known Auditing Data]\label{def:forge-audit-new}
Let $\mathcal{M}$ be the model space, $\mathcal{X}$ be the data space and $\mathcal{D}$ be a distribution over $\mathcal{X}$.
Let $f:\mathcal{M}\times \mathcal{X}^k \times \mathcal{X}^\ell \to \{0,1\}$ be a predicate verified by the model certification, and let $F:\mathcal{M}\times \mathcal{X}^k  \to \{0,1\}$ be the actual intended security property. 
Let $L$ denote the utility predicate\footnote{We assume that distribution $\mathcal{D}$ is implicitly ``known'' to $F$ and $L$ (it is either hard-coded or provided as a parameter to each). 
For simplicity of notation, we omit $\mathcal{D}$ from the description of $F$ and $L$.}. Consider the following game played by an adversary $\mathcal{A}$:


\begin{enumerate}
	\item Sample $S_\text{audit} \sim \mathcal{D}^\ell$.
	\item Given $S_\text{audit}$, $\mathcal{A}$ outputs a hypothesis $h_A$ and a training dataset $S_\text{train}$.
	\item Obtain $b = f(h_A,S_{train}, S_{audit})$.
	\item The output of the game is 1 ($\mathcal{A}$ `wins') iff $b=1$, $F(h_A,S_{train})=0$, and
	$L(h_A,S_{train}) = 1$.
    
	The output is $0$ ($\mathcal{A}$ `loses') otherwise. 
\end{enumerate}
\end{defn}

\noindent\textbf{Restricting Knowledge of Audit Data}
Although the attack game in Def.~\ref{def:forge-audit-new} allows an adversary to learn the entire audit dataset $\setaudit$,
some of our results indicate that the attack can be successful even in a more challenging scenario where the adversary only has \emph{partial} knowledge of the audit dataset. 
For example, our experimental results in \S~\ref{sec:eval} exhibit that the attacks on accuracy and fairness certification are effective even when the adversary only has access to a small fraction of the audit dataset.
Moreover, our attack on differential privacy certification in \S~\ref{sec:attack-methods} does not require any knowledge of the audit dataset, as the adversary does not need to forge data based on the audit dataset in order to succeed.

\ificlr
Looking ahead, our security definition provides a (relaxed) guarantee that $F(h,S_{train})=1$, hence if an adversary wins the attack game, the corresponding audit scheme cannot be secure under the definition in \S\ref{sec:sec-definition}. 
We also note that proving security (\S\ref{sec:sec-definition}) does not require knowledge of the utility predicate $L(h)$; this predicate strengthens our attack examples by capturing additional objectives a malicious model provider may pursue beyond violating the audit guarantees. 
\fi

\subsection{Example Data Forging Attacks}
\label{sec:attack-methods}
To demonstrate the breadth of the attack surface for data forging, we now give three concrete examples of data forging attacks: two enable an attacker to evade audits on accuracy (e.g. \cite{ZhangFZS20}) and fairness respectively within the framework of Def.~\ref{def:forge-audit-new}. The third extends beyond this framework (specifically, it does not require the framework's first step, and the adversary \textbf{does not} obtain audit data) to attack differential privacy certification (e.g. \cite{shamsabadi2024dpproof}). Our example attacks target decision tree (DT) models as a proof of concept for ease of exposition and due to the ubiquity of real-world DT deployments. We show empirically that other model types can be effectively attacked as well in Sections~\ref{sec:eval} and~\ref{sec:eval-appendix}.

We first consider our running example of \textcolor{BlueViolet}{an insurance company audit. Say the company uses a \emph{decision tree} model\iffull (see \S\ref{sec:dtbackground} for background).\else.\fi The auditor wishes to check that the model is highly accurate, i.e., $F(h)=\Pr\limits_{x \sim \{0,1\}^d}[h(x) = y_x]>0.95$, where $y_x$ is the true label of $x$. The insurance company saves on costs when it denies claims, i.e., the model provider's utility $L(h)=(\Pr\limits_{x \sim \{0,1\}^d}[h(x) = 0]>0.9)$}.\footnote{For simplicity, we consider datapoints in $\{0,1\}^d$} Following Def.~\ref{def:forge-audit-new}, given audit dataset $S_{audit}$, we wish to construct a hypothesis $h_A$ that passes the audit, while simultaneously achieving $F(h_A,S_{train})=0$ and $L(h_A,S_{train})=1$.

~

\noindent\textbf{Attacking Accuracy Certification} At a high level, Algorithm~\ref{alg:attack} constructs a dataset $S_{train}'$ which causes $h_A$ to overfit to $S_{audit}$, allowing the attacker to achieve distinct levels of performance on the audit data compared to the rest of the distribution. In more detail, we build $S_{train}'$ from the audit dataset $S_{audit}$ using the corresponding true labels—to ensure the model passes the accuracy test on $S_{audit}$—and augment it with points in the neighborhood of those audit points. The points in the neighborhood are labeled $0$, so that outside the audit samples the model learns to label every point with a $0$. More formally, to construct $S_{train}'$, for every point $r\in S_{audit}$ and every (numeric) feature $i$, we add $r+\varepsilon \vec{b}_i$ with label $0$, where $\varepsilon$ is some small number and $\vec{b}_i$ is the $i$th basis vector. We also add every $r \in S_{audit}$ to $S_{train}'$ with honest labels (see Algorithm~\ref{alg:attack} for formal description). Thus, the time complexity of the attack is $\mathcal{O}(|S_{audit}|d)$. After constructing the forged data, we train a tree on $S_{train}'$ via a standard DT training algorithm until leaves are homogeneous.

Consider a baseline CMC protocol for accuracy auditing, wherein a model owner must prove (i) that a decision tree $M$ was trained correctly from a confidential dataset, and (ii) that the predictions $M$ makes on an audit set are accurate. By inputting data from Algorithm~\ref{alg:attack} and a DT trained on this data to such a CMC protocol, an attacker can manipulate the accuracy of the model on the audit set and the rest of the distribution essentially independently. The CMC will not catch this capability, since the attacker `honestly' executes both the verified training and verified inferences in the audit. The attack lies not in the process that the adversary used to obtain the model from the data, but rather in the data itself. The malfeasance is \emph{upstream} of anything this baseline CMC protocol certifies.

We demonstrate this idea empirically in Section~\ref{sec:eval}, and formally below. In particular, Theorem~\ref{thm:attack-isolation} shows that Algorithm~\ref{alg:attack} allows an adversary to win the attack game of Definition~\ref{def:forge-audit-new}.

\begin{algorithm}
\caption{Data Forging to Evade Accuracy Audits}\label{alg:attack}
\textbf{Input}: Audit set $S_{audit}$, dimension $d$, $\varepsilon >0$\\
\textbf{Output}: Training data $S_{train}'$
\begin{algorithmic}
    \Function{AttackA}{$S_{audit},d,\varepsilon,g$}
        \State $S_{train}' \gets S_{audit}$
        \For{$r \in S_{audit}$}
            \For{$i \in [d]$}
                \State $r_0 \gets (r + \varepsilon \vec{b}_i,0)$ \Comment{$\vec{b}_i$ is the one-hot vector in dimension $i$}
                \State $r_1 \gets (r - \varepsilon \vec{b}_i,0)$
                \State $S'_{train} \gets S'_{train} \cup \{r_0 , r_1\}$
            \EndFor
        \EndFor
        \State \Return $S_{train}'$
    \EndFunction
\end{algorithmic}
\end{algorithm}

As we confirm in Fig.~\ref{fig:denial}, this attack achieves good results. We now show that this attack allows the adversary to win in the attack game of Def~\ref{def:forge-audit-new}. To this end, we first state the following theorem:

\begin{thm}\label{thm:attack-isolation}
    Decision tree training with CART until leaves are homogeneous on the output of Algorithm \ref{alg:attack} yields a tree $\mathcal{T}$ such that for every $x\in \mathbb{R}^d$, $\mathcal{T}(x) = 1$ only if $||x-r||_\infty < \varepsilon$ for some $r\in S_{audit}$. 
\end{thm}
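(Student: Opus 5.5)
The plan is to use the geometry of axis-aligned decision trees: every leaf of a CART tree corresponds to an axis-aligned box (a product of intervals), and the training points routed to that leaf are exactly the training points lying in that box. Throughout I fix the conventions that each split has the form $x_i \le \theta$ versus $x_i > \theta$, that every leaf receives at least one training point (CART never creates empty children), and that a leaf's label is the common label of its training points, which are homogeneous by assumption. I also need that the forged data is consistent. No neighbor $r \pm \varepsilon \vec{b}_i$, which carries label $0$, may coincide with an audit point carrying label $1$, and no two identical audit points may carry different labels. Otherwise training "until leaves are homogeneous" cannot terminate, so this is implicitly assumed by the statement; in particular it holds whenever $\varepsilon$ is smaller than the minimal nonzero coordinate gap between audit points.

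\textbf{Step 1 (leaves are boxes).} By induction on depth, the region of the input space reaching any node is $B = \prod_{i=1}^d I_i$, where each $I_i$ is an interval. A training point reaches that node iff it lies in $B$. Each split intersects one coordinate's interval with a half-line, which preserves the interval and box structure.

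\textbf{Step 2 (a $1$-leaf contains a positive audit point).} Let $x$ satisfy $\mathcal{T}(x)=1$, and let $B$ be the leaf box containing $x$. The leaf is nonempty and labeled $1$. Every training point with label $1$ is an audit point $r \in S_{audit}$ with its true label, since all neighbors are labeled $0$. Hence some such $r$ lies in $B$. By homogeneity, none of the points $r \pm \varepsilon \vec{b}_i$ lies in $B$.

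\textbf{Step 3 (the box is trapped in the open $\varepsilon$-cube around $r$).} This is the key step. Fix a coordinate $i$. The point $r + \varepsilon \vec{b}_i$ agrees with $r$ in every coordinate $j \neq i$, and $r_j \in I_j$ for all such $j$. Its exclusion from $B$ therefore forces $r_i + \varepsilon \notin I_i$. Since $I_i$ is an interval containing $r_i$, every $y \in I_i$ then satisfies $y < r_i + \varepsilon$. Symmetrically, the exclusion of $r - \varepsilon \vec{b}_i$ gives $y > r_i - \varepsilon$ for all $y \in I_i$. Thus $I_i \subseteq (r_i - \varepsilon, r_i + \varepsilon)$ for every $i$. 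Because $x \in B$, we get $|x_i - r_i| < \varepsilon$ for all $i$, i.e.\ $\|x - r\|_\infty < \varepsilon$, which is the claim.

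The main obstacle I expect is not the geometry but making the implicit assumptions explicit. These are the non-collision condition on $\varepsilon$, the nonemptiness of leaves, and the way the leaf label is defined. The argument uses nothing about CART's impurity criterion: it holds for any axis-aligned tree whose leaves are homogeneous and nonempty.
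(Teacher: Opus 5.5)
Your proof is correct and follows essentially the same route as the paper's, which shows that if $a$ and $b$ share a leaf and $c$ agrees with $b$ off coordinate $i$ with $c_i$ strictly between $a_i$ and $b_i$, then $c$ shares that leaf (a consequence of your box structure), notes that every $1$-leaf contains an audit point $r$, and derives a contradiction for a far-away $x$ by placing $r\pm\varepsilon\vec{b}_i$ in its leaf. Your direct containment $I_i\subseteq(r_i-\varepsilon,r_i+\varepsilon)$ is the contrapositive of that argument and is slightly tidier: it also covers the boundary case $|x_i-r_i|=\varepsilon$, which the paper's proof skips by asserting $|r_i-x_i|>\varepsilon$ when only $\geq\varepsilon$ is available.
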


At a high level, the proof shows that if two points land in the same leaf, then any point lying between them on one coordinate must also fall in that leaf. Further, as the only non-zero points in $S_{train}'$ are audit points, every non-zero-labeled leaf contains an audit point. For any $x$ at least $\varepsilon$ away from all audit points, if $\mathcal{T}(x)=1$, one can construct a nearby training point with label $0$ that must lie in the same leaf, giving a contradiction as we trained until homogeneity. See \S\ref{sec:thm1proof} for details.\qed

\looseness=-1
Thus, whenever a model provider generates a training dataset using Algorithm~\ref{alg:attack}, an honestly trained decision tree that grows until homogeneity will achieve perfect accuracy on the audit dataset, yet predict zero for all inputs that lie outside an $\varepsilon$-neighborhood of the audit dataset points. Thus, for an appropriate choice of epsilon (which varies depending on the target distribution), the adversary wins in the game specified in Definition~\ref{def:forge-audit-new} with probability one.\footnote{Assuming that a model which almost always outputs 0 is not highly accurate in our scenario.}

~

\noindent\textbf{Attacking Fairness Certification} A similar method of attack can be used to evade fairness audits as well. Algorithm~\ref{alg:data-forge-fairness} demonstrates that the model owner can manipulate the sensitive attributes of a dataset to achieve a desired level of fairness (measured via e.g. demographic parity) on audit data.

\begin{algorithm}
\caption{Data Forging to Evade Fairness Audits}\label{alg:data-forge-fairness}
\textbf{Input}: Audit set $S_{audit}$, dimension $d$, $\varepsilon >0$, sensitive attribute $k$\\
\textbf{Output}: Training data $S_{train}'$
\begin{algorithmic}
    \Function{AttackF}{$S_{audit},d,\varepsilon,g$}
        \State $S_{train}' \gets \emptyset$
        \For{$x,y \in S_{audit}$}
            \State $y' \overset{\$}{\gets} \{0,1\}$
            \State $S_{train}' \gets S_{train}' \cup (x,y')$
        \EndFor
        \For{$r \in S_{audit}$}
            \For{$i \in [d]$}
                \State $r_0 \gets (r + \varepsilon \vec{b}_i,r[k])$
                \State $r_1 \gets (r - \varepsilon \vec{b}_i,r[k])$
                \State $S'_{train} \gets S'_{train} \cup \{r_0 , r_1\}$
            \EndFor
        \EndFor
        \State \Return $S_{train}'$
    \EndFunction
\end{algorithmic}
\end{algorithm}
As in Algorithm \ref{alg:attack}, this attack enforces certain behavior on the audit set while learning a different function on the rest of the distribution. Instead of returning the true labels on the audit dataset (which encourages the audit to appear highly accurate), we return a random label, causing the model to appear completely fair, as the $k=0$ and $k=1$ groups have the same outcome distribution in the audit dataset. Outside of the audit set, however, the model learns to label points according to their sensitive attribute, which results in maximally unfair decisions.

~

\noindent \textbf{Attacking Differential Privacy Certification} We also demonstrate a data forging attack which subverts zero-knowledge proofs of differential privacy certification. The goal of differential privacy is to protect the privacy of users in a dataset, most commonly by adding a controlled amount of random noise to the outputs~\cite{dwork2014algorithmic}. This induces a tradeoff between utility and privacy -- tighter protections on privacy require the addition of greater amounts of noise, which degrades the utility of the output. This tradeoff creates a conflict of interest, since institutions are often incentivized to choose high utility over privacy protections~\cite{greenberg.2017}. 

To address this, Shamsabadi et al. describe a scheme to certify the differential privacy of a model~\cite{shamsabadi2024dpproof}. It works by having the model owner perform a zero-knowledge proof, which shows that they correctly performed differentially private stochastic gradient descent (the standard algorithm for training differentially private models~\cite{song2013stochastic}) on a pool of committed data. 
In our syntax, their protocol be captured by instantiating the predicate $f$ for ``correct training procedure'' (see \S~\ref{sec:predicates}). 
However, this scheme's privacy analysis is broken if the adversary alters a dataset via Algorithm~\ref{alg:data-forge-dp}, resulting in a model with dramatically worse privacy than what is claimed by the certificate. 

The attack is straightforward: the attacker creates several copies of the data points in $S_{train}$ whose privacy they want to erode (denoted as the target set $T$), and then uses the modified dataset as an input to the scheme in~\cite{shamsabadi2024dpproof}. In differential privacy, the \emph{sensitivity} of a training algorithm (i.e. how much model parameters can vary if a single input is changed) is used to calibrate the amount of noise which needs to be added to the model parameters in order to protect the privacy of training data. By copying points in the dataset, Algorithm~\ref{alg:data-forge-dp}breaks the sensitivity assumptions that differentially private training algorithms rely upon. This is similar to the phenomenon noted in Wilson et al.~\cite{wilson2020differentially} who observed that standard differential privacy guarantees over SQL queries fail when multiple queries belong to a single user.

\begin{algorithm}

\caption{Data Forging to Weaken Differential Privacy}\label{alg:data-forge-dp}
\textbf{Input}: dataset $S_{train}$, set of targets $T \subset S_{train}$, duplicity $n$ \\
\textbf{Output}: Training data $S_{train}'$
\begin{algorithmic}
    \Function{AttackDP}{$S_{train},T,n$}
        \State $S_{train}' \gets S_{train}$
        \For{$x \in T$}
            \State add $n-1$ copies of $x$ to $S_{train}'$
        \EndFor
        \State \Return $S_{train}'$
    \EndFunction
\end{algorithmic}
\end{algorithm}

Together these three straightforward example attacks demonstrate the breadth of the data forging attack surface. The model owner's full control over data can be used to attack certificates of multiple properties, with and without prior knowledge of audit data.

\subsection{Statistical Methods Fail to Detect Data Forging}\label{sec:statdetection}

One might seek to strengthen CMC protocols against attacks in the previous section by leveraging statistical tests, e.g., Welch’s $t$-test to assess whether the model owner's dataset was drawn from a different distribution than the audit set. Indeed previous studies propose zero-knowledge proofs of statistical tests for this purpose~\cite{chang2023holmes,duddu2024attesting}. \iffull In \S\ref{sec:detection}\else In full version, \fi however, we prove that adversaries can forge datasets which successfully circumvent CMCs augmented with a variety of statistical tests. 

Our results show that an attacker can forge datasets that successfully evade CMC accuracy audits, while yielding outputs which are indistinguishable from the audit set for a broad family of functions. At a high level, we demonstrate that the attacker can cause the distributional properties of the training data to converge towards the audit data without sacrificing the efficacy of the attack in Algorithm~\ref{alg:attack} by adding more copies of the audit data to the training data. This causes the audit data and the training data to appear as if they were drawn from the same distribution under a variety of statistical tests without impacting the model's ability to learn the desired behavior.

Thus, functions within this family cannot effectively thwart attackers satisfying Definition~\ref{def:forge-audit-new}. We apply this result in Corollary~\ref{cor:welch-main} which shows that a data forging attacker can always evade detection by Welch's t-test.

\begin{cor} \label{cor:welch-main}
    Given an audit dataset $S_{audit}$ and significance level $\alpha$, we can use Algorithm \ref{alg:attack} to construct a training dataset $S_{train}'$ such that for any feature $j$, $S_{train}'$ passes Welch's t-test when its values in feature $j$ are compared to those of $S_{audit}$ with significance level $\alpha$. 
\end{cor}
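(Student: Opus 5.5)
Our plan is to augment the output of Algorithm~\ref{alg:attack} with $c$ additional copies of $S_{audit}$, where the integer $c$ is a free parameter. We then show that, for fixed feature $j$, Welch's statistic $t_j$ comparing $S_{train}'$ with $S_{audit}$ tends to $0$ as $c\to\infty$ while the degrees of freedom stay bounded below. Hence for $c$ large enough $|t_j|$ falls below the critical value at level $\alpha$ for every feature simultaneously, since there are only $d$ features.

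The first step is to check that duplicating audit points leaves the attack intact. The extra copies carry honest labels on exactly the points already present, so CART until homogeneity produces the same conclusion as in Theorem~\ref{thm:attack-isolation}. The leaf structure may change slightly, but the proof argument only uses which labelled points exist, not their multiplicities. Thus $\mathcal{T}(x)=1$ still only for $x$ within $\varepsilon$ of some audit point.

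The second step is to compute the moments. Let $n=|S_{audit}|$, and let $\mu_j,\sigma_j^2$ be the mean and variance of feature $j$ on $S_{audit}$. The forged neighbours $r\pm\varepsilon\vec b_i$ come in symmetric pairs. Their contribution to the feature-$j$ sum is therefore exactly $2d\sum_r r_j$, so the mean of feature $j$ over the $2dn$ neighbour points equals $\mu_j$ exactly. Consequently the mean of $S_{train}'$ in feature $j$ equals $\mu_j$ for every $c$, and the numerator of Welch's statistic is $0$. This means $t_j=0$ regardless of $c$ as long as the variances are positive.

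The third step handles the remaining cases. If we want robustness (e.g.\ a version of Algorithm~\ref{alg:attack} without symmetric pairs), we bound the mean difference by $O(\varepsilon/c)$. The training sample variance converges to $\sigma_j^2$, up to $O(\varepsilon^2)$ terms. The denominator is then about $\sqrt{\sigma_j^2/n+\sigma_j^2/((c+2d)n)}$, bounded below, so $t_j\to 0$. Welch–Satterthwaite degrees of freedom are at least $\min(n,|S_{train}'|)-1\ge n-1$, so the critical value is bounded by $t_{n-1,1-\alpha/2}$ and $|t_j|$ is eventually smaller.

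The main obstacle is degenerate features with $\sigma_j=0$ on $S_{audit}$, where Welch's statistic is undefined or the training variance is only the tiny $\varepsilon$ contribution. Here the means still coincide exactly by symmetry. We would adopt the convention that $0/0$ passes, or note that the $\varepsilon$-perturbations give positive variance with a zero numerator, so the test passes.
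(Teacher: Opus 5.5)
Your step 2 is correct and proves the corollary on its own, by a different and more elementary route than the paper's. Algorithm~\ref{alg:attack} adds the neighbours $r\pm\varepsilon\vec b_i$ in symmetric pairs. Hence the feature-$j$ mean of its output equals $\mu_j$ exactly, provided the output is read as a multiset, as the paper does when it uses $|\delta|=2d|S_{audit}|$. So Welch's numerator vanishes. Coordinate $j$ itself is perturbed by $\pm\varepsilon$, so the training variance is strictly positive and $t_j=0$ is well defined even when $\sigma_j=0$. The test therefore passes for every $\alpha\in(0,1)$. In particular $c=0$ already works, and the extra copies, your step 1, and the limiting framing are unnecessary.

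The paper argues only approximately. It forms $S_{audit}^k\|\delta$ and uses the magnitude-insensitivity machinery (Lemma~\ref{lem:mean-est} with Theorem~\ref{thm:metric-approx}) to push $|\mu_j(S_{train}')-\mu_j(S_{audit})|$ below a tolerance of order $\sigma_{audit,j}/\sqrt{|S_{audit}|}$ once $k$ is large. It then drops the training-variance term from the denominator and compares against the $\nu$-independent bound $T_{\alpha,\nu}\ge T'_\alpha$ derived from Lemma~\ref{lem:t-dist-bound}.

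Your argument is exact and needs no duplication. It also covers features with $\sigma_{audit,j}=0$, where the paper's choices of tolerance and $k$ divide by $\sigma_{audit,j}$ and break down. In exchange, the paper's argument does not rely on the perturbations being symmetric and applies to any magnitude-insensitive statistic. An example is label-conditioned means as in Table~\ref{tab:stats}, where your trick fails: all $2d|S_{audit}|$ neighbours carry label $0$, while the audit points are split by label.

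One correction, which matters only for your opening plan and the optional step 3. The bound $\nu\ge n-1$ on the Welch--Satterthwaite degrees of freedom gives $T_{\alpha,\nu}\le t_{n-1,1-\alpha/2}$. That is an \emph{upper} bound on the critical value, the wrong direction for concluding that a vanishing $|t_j|$ eventually drops below it. The limiting argument needs a uniform positive \emph{lower} bound on the critical value. Such a bound holds because the $t$ distribution has heavier tails than the standard normal, so $T_{\alpha,\nu}\ge z_{1-\alpha/2}$ for every $\nu$; this is exactly what the paper extracts from Lemma~\ref{lem:t-dist-bound}. With that substitution your robustness remark goes through.
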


Proof of Corollary~\ref{cor:welch-main} and the results leading up to it are deferred to \iffull Appendix~\ref{sec:detection}\else the full version of the paper \fi for brevity. 

The difficulty of detecting data forging attacks against CMCs, combined with the variety of certification protocols they can evade, motivate a new perspective on CMC security which we present in the following section.

\section{Secure Cryptographic Model Certification}\label{sec:construction-overview}\label{sec:sec-definition}


In this section, we present our positive results on \emph{provably secure} cryptographic model certification (CMC) protocols, with full details provided in \iffull\S~\ref{sec:construction}. \else the full version. \fi
We begin by introducing the syntax and security definitions for CMC in \S~\ref{sec:securitydefs-overview}.
We then present a template for secure CMC and our main theorem in \S~\ref{sec:construction-theorem-overview}, followed by an example instantiation for accuracy auditing in \S~\ref{sec:auditcsp-accuracy-overview}. 
As we will later see in the case study, our protocol template allows us to restore/expand the security of some of the constructions from Table~\ref{tab:ppml-landscape}.

\subsection{Syntax and Security Definitions}\label{sec:securitydefs-overview}
\begin{defn}[Cryptographic Model Certification Protocol]
A cryptographic model certification (CMC) protocol $\Pi$ for a distributional predicate $F$ is a tuple of algorithms $(\commit, \prove, \audit)$: a commitment, a proving, and an auditing algorithm.
Let $(\com, b)\gets\interact{\prove(h,\settrain)}{\audit}$ denote the interaction between $\prove$ and $\audit$, where $\prove$ takes a hypothesis $h$ and optionally a training dataset $\settrain$ as private input and outputs a commitment $\com$ during an execution, and $\audit$ halts by outputting a binary $b$.
\end{defn}

We propose the following security properties for a CMC protocol $\Pi$ for a distributional predicate $F$. 
We focus on high-level intuition and defer formal cryptographic definitions to \iffull \S~\ref{sec:securitydefs}.\else full version. \fi
While completeness, binding, and zero knowledge are direct adaptations of those of commitments and zero-knowledge proofs\iffull(see \S~\ref{sec:zksecurity} and \S~\ref{sec:comsecurity})\fi, 
$\tilde{F}$-relaxed knowledge soundness is a novel property tailored to our CMC setting.
In particular, we call this property ``$\tilde{F}$-relaxed'' to allow parameterizing the soundness requirement by a predicate $\tilde{F}$ that is only a close approximation of the desired property $F$.
As we shall see in concrete examples, this relaxation turns out necessary because the auditor only checks the property $f$ on a \emph{finite sample}, which may not perfectly reflect the property $F$ on the \emph{underlying distribution}.
In our concrete instantiations, we will quantify the gap between $F$ and $\tilde{F}$\iffull (see \S~\ref{sec:auditcsp-accuracy} and \S~\ref{sec:auditcsp-demographic-parity}).\else.\fi

\medskip
\noindent\textbf{Completeness.} If an honest prover holds a model $h$ and a training set $\settrain$ such that $F(h,\settrain)=1$, then this prover should pass the audit (i.e, $\audit$ outputs $b=1$) except with small error probability $p$.

\medskip
\noindent\textbf{Binding.} No computationally bounded prover can change its model $h$ or $\settrain$ after generating a commitment $\com$ to them.

\medskip
\noindent\textbf{Zero Knowledge.} An honest execution of the protocol between $\prove$ and $\audit$ does not reveal any information about the model $h$ beyond the fact that $F(h,\settrain)=1$.

\medskip
\noindent\textbf{$\tilde{F}$-relaxed Knowledge Soundness.}
If a (potentially dishonest prover) $\prove^*$ commits to a pair of invalid model $h$ and training dataset $\settrain$, i.e., $\tilde{F}(h,\settrain)\neq1$, then $\audit$ should detect it by outputting $b=0$ with high probability. 
In slightly more detail, for any $\prove^*$, there exists an efficient \emph{knowledge extractor} $\ext$ that extracts a pair $(h,\settrain)$ from $\prove^*$ such that the above condition holds (except with small error probability $\kappa$).

\subsection{Secure CMC Template}\label{sec:construction-theorem-overview}
We now define a \emph{commit-sample-prove} CMC protocol $\auditcsp=(\commit, \prove, \audit)$.
The protocol starts with the model provider committing to its model $h$ using a commitment scheme $\commit$\iffull  (\S~\ref{sec:comsecurity}).\else. \fi
Upon receiving the commitment, the auditor responds with an \emph{audit dataset} $\setaudit$, consisting of $n$ i.i.d. samples drawn from a distribution $\mathcal{D}$ over the query space $\querysp= \{(x_i,y_i)\}_{i=1}^m$.
Finally, the model provider and the auditor run a zero-knowledge proof (ZKP) protocol $\ZKP=(\cP,\cV)$ to prove that the committed model $h$ satisfies an \emph{empirical predicate} $f$ on the sampled audit dataset $\setaudit$, i.e., $\ZKP$ for the relation $\rel$ such that $(\stm,\wit)\in\rel \iff f(h,S_\text{audit})=1\land \com = \commit(h;\rho)$.
Here, the empirical predicate $f$ should be chosen to closely approximate the desired property $F$ over the underlying distribution $\mathcal{D}$; we will formalize this requirement as small \emph{false negative rate}.
Moreover, $f$ should be chosen such that the \emph{false positive rate} of $f$ is small with respect to the relaxed predicate $\tilde{F}$. 
Intuitively, this means that if a model $h$ does not satisfy the relaxed property $\tilde{F}$, then it should fail the audit with high probability (i.e., soundness), while if a model $h$ satisfies the desired property $F$, then it should pass the audit with high probability (i.e., completeness).
While we focus on a protocol checking $F$ on a hypothesis $h$ only, the construction below can be naturally extended to a more complex $F$ and $f$ that additionally take a training dataset $\settrain$ as input.

\medskip
\noindent\underline{$\interact{\prove(h)}{\audit}$}
\begin{enumerate}
    \item $\prove$ computes $\com=\commit(h;\rho)$ using a uniformly random string $\rho$ and sends $\com$ to $\audit$.
    \item $\audit$ samples $\setaudit \sim \mathcal{D}^n$ and sends it to $\prove$.
    \item $\prove$ and $\audit$ execute $b\gets\interact{\cP(\wit)}{\cV}(\stm)$, where $\stm = (\com,S_\text{audit})$ and $\wit = (h,\rho)$. Here, $\prove$ plays $\cP$ and $\audit$ plays $\cV$ of $\ZKP$. 
    \item $\prove$ outputs $\com$, while $\audit$ outputs $b$.
\end{enumerate}

\newcommand{\pfnr}{p_{\text{fnr}}}
\newcommand{\pfpr}{p_{\text{fpr}}}

The key takeaway is that the audit dataset $\setaudit$ should be chosen \emph{independently} of the model $h$ and the training dataset $\settrain$.
Sampling $\setaudit$ after $h$ is committed ensures independence, rendering our earlier attack ineffective. The following theorem (formally stated and proved in \iffull\S~\ref{sec:constructiondetails}\else the full version\fi) states that, if the empirical predicate $f$ satisfies the bounds mentioned below, $\auditcsp$ satisfies the desired security properties in a general fashion.
\textbf{Essentially, our result allows the protocol designer to choose an empirical predicate $f$ that approximates the desired property $F$ well enough, and then plug in any commitment scheme \iffull(\S~\ref{sec:comsecurity})\fi and ZK proof system \iffull(\S~\ref{sec:zksecurity})\fi satisfying the required security properties \iffull(\S~\ref{sec:zksecurity},\S~\ref{sec:comsecurity})\fi to get a secure auditing scheme.}

\iffull
\noindent\textbf{Theorem \ref{thm:auditcsp-general} (informal).} \textit{
Let the empirical predicate $f$, the distributional predicate $F$, and the relaxed distributional predicate $\tilde{F}$ satisfy the following false negative and false positive rate bounds for every model $h$:
\begin{align*}
	& \Pr_{\setaudit \sim \mathcal{D}^n}[f(h,\setaudit)\neq 1 \mid F(h)=1] \leq \pfnr \quad\\
	& \Pr_{\setaudit \sim \mathcal{D}^n}[f(h,\setaudit)=1 \mid \tilde{F}(h) \neq 1] \leq \pfpr
\end{align*}
If $\commit$ and $\ZKP$ satisfy the standard security properties (\S~\ref{sec:zksecurity}-\ref{sec:comsecurity}), then the protocol $\auditcsp$ is \textbf{complete} with error probability $\pfnr$, \textbf{binding}, \textbf{zero knowledge}, and \textbf{$\tilde{F}$-relaxed knowledge sound} with error probability $\approx\pfpr$.
}
\else
\begin{thm}\label{thm:auditcsp-general}
Let the empirical predicate $f$, the distributional predicate $F$, and the relaxed distributional predicate $\tilde{F}$ satisfy the following false negative and false positive rate bounds for every model $h$:
\begin{align*}
	& \Pr_{\setaudit \sim \mathcal{D}^n}[f(h,\setaudit)\neq 1 \mid F(h)=1] \leq \pfnr \quad\\
	& \Pr_{\setaudit \sim \mathcal{D}^n}[f(h,\setaudit)=1 \mid \tilde{F}(h) \neq 1] \leq \pfpr
\end{align*}
If $\commit$ and $\ZKP$ satisfy the standard security properties\iffull (\S~\ref{sec:zksecurity}-\ref{sec:comsecurity})\fi, then the protocol $\auditcsp$ is \textbf{complete} with error probability $\pfnr$, \textbf{binding}, \textbf{zero knowledge}, and \textbf{$\tilde{F}$-relaxed knowledge sound} with error probability $\approx\pfpr$.
\end{thm}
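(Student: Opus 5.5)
We prove the four properties separately; each reduces to a property of $\commit$ or $\ZKP$, plus the false negative/positive bounds on $f$.

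\textbf{Completeness and binding.} For completeness, fix an honest prover with $F(h)=1$. It commits honestly, and the auditor then samples $\setaudit\sim\mathcal{D}^n$ independently of $(h,\rho)$. By the false negative bound, $f(h,\setaudit)=1$ except with probability at most $\pfnr$. On that event $(\stm,\wit)=((\com,\setaudit),(h,\rho))\in\rel$, so perfect completeness of $\ZKP$ makes $\cV$ accept. Hence $\audit$ outputs $b=1$ except with probability $\pfnr$. Binding of $\auditcsp$ is inherited directly from binding of $\commit$: an adversary opening $\com$ to two distinct models is, verbatim, an adversary against $\commit$.

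\textbf{Zero knowledge.} Since the auditor is honest-but-curious, its view is $(\com,\setaudit,\text{transcript})$. The simulator, given only that $F(h)=1$, proceeds in three steps. It computes $\com'=\commit(0;\rho')$. It samples $\setaudit\sim\mathcal{D}^n$ exactly as $\audit$ would. It then runs the HVZK simulator of $\ZKP$ on $(\com',\setaudit)$. Indistinguishability follows from a two-step hybrid. First, we replace the real ZK transcript by the simulated one (HVZK); this step applies only on the event $f(h,\setaudit)=1$, i.e.\ except with probability $\pfnr$, so we either absorb $\pfnr$ or let the simulator output a rejecting view with the appropriate probability. Second, we replace $\com$ by $\com'$ (hiding). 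No other information about $h$ appears in the view.

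\textbf{$\tilde{F}$-relaxed knowledge soundness.} This is the main step. Given a malicious $\prove^*$, we build $\ext$ in two stages. First, run $\prove^*$ to obtain $\com$. Second, resample $\setaudit$ and, for each sample, invoke the knowledge extractor of $\ZKP$ on the residual prover $\prove^*(\com,\setaudit)$ to obtain $(h,\rho)$ with $\com=\commit(h;\rho)$ and $f(h,\setaudit)=1$. Binding of $\commit$ guarantees that, across different samples $\setaudit$, the extracted $h$ is the same except with negligible probability. Therefore $h$ is a well-defined function of $\com$, and it is fixed before $\setaudit$ is drawn. This is exactly the independence that the commit-sample-prove ordering buys.

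We then bound the probability that $\audit$ accepts while $\tilde{F}(h)\neq 1$, splitting it into three terms:
\begin{itemize}
\item \textbf{Acceptance without a valid witness:} bounded by the knowledge error $\kappa$ of $\ZKP$.
\item \textbf{A witness opening $\com$ to some $h'\neq h$:} bounded by the binding advantage of $\commit$.
\item \textbf{$f(h,\setaudit)=1$ with $\tilde{F}(h)\neq1$ for the committed $h$:} since $h$ is independent of $\setaudit$, the false positive bound applies conditionally on $\com$, giving at most $\pfpr$.
\end{itemize}
Summing yields error $\pfpr+\kappa+\mathsf{negl}$, which is the "$\approx\pfpr$" in the statement.

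The obstacle is making "the $h$ defined by $\com$" rigorous for a computationally bounded $\prove^*$. Two issues must be handled. The extractor has to be run on a prover conditioned on a fixed $\com$, which requires an averaging argument over $\com$ so that extraction succeeds whenever the conditional acceptance probability exceeds $\kappa$. In addition, a prover that never opens $\com$ at all must be handled, which is done by charging that case to the knowledge error.
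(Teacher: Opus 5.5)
Your arguments for completeness, binding and zero knowledge match the paper's. In the zero-knowledge part, one of your two options is not available: the simulator cannot output a rejecting view ``with the appropriate probability'', because it does not know $h$. The paper takes your other option and assumes $p_{\text{fnr}}$ is negligible.

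\textbf{The gap is in $\tilde{F}$-relaxed knowledge soundness,} exactly at the step you flag as the obstacle. The two issues you list there (averaging over $\com$, provers that never open) are not the difficulty. Both are handled by applying the knowledge soundness of $\ZKP$ to the prover that outputs $(\com,S)$ as its statement.

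The real problem is your claim that $h$ is ``a well-defined function of $\com$, fixed before $\setaudit$ is drawn''. This does not follow from computational binding. The commitment $\com$ may have many openings, and the witness that $\mathcal{E}$ extracts from $(\com,S)$ can depend on $S$. Since $f(h,S)=1$ holds by construction on the sample used for extraction, the false-positive bound cannot be applied to that sample, so your third term is unjustified.

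What binding actually gives is weaker: two \emph{successful} extractions on independent samples rarely disagree. For a fixed prover state whose extraction succeeds with probability $\epsilon$, that event has weight $\epsilon^2$. The inequality $\pext\ge\pacc-\kappa-p_{\text{fpr}}$, however, concerns an event of weight $\epsilon$. Bridging this mismatch is the whole content of the proof. Your three-term split presupposes it, and your ``binding'' term would itself need a reduction that runs two extractions.

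\textbf{The paper bridges it with a rewinding meta-extractor.} Fix the prover's randomness $r_{\mathcal{P}}=i$ and run $\mathcal{E}$ on $(\com,S)$ for a fresh $S$. Abort on failure. Otherwise re-run $\mathcal{E}$ on fresh $S'$ \emph{until it succeeds}, and output both openings. The key steps are:
\begin{itemize}
\item Repeating until success keeps the joint success probability at $\Pr[E]\ge\pacc-\kappa$, rather than roughly the average of $\epsilon_i^2$.
\item The expected number of calls to $\mathcal{E}$ is $\frac{1}{|R_{\mathcal{P}}|}\sum_i\bigl(\epsilon_i(1+\frac{1}{\epsilon_i})+(1-\epsilon_i)\bigr)=2$.
\item If $h'\neq h$, the extractor has output two openings of $\com$. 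The relaxed definition counts this as success, so no separate binding reduction or negligible term is needed.
\item If $h'=h$ and $\tilde{F}(h)\neq 1$, then $f(h,S')=1$ for an $S'$ independent of $h$. Conditioning $S'$ on the good set $\mathcal{G}_i$ costs a factor $1/\epsilon_i$. This cancels the probability $\epsilon_i$ that the first extraction succeeded, giving exactly $p_{\text{fpr}}$.
\end{itemize}

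Alternatively, your one-shot structure can be repaired. Let $h^*$ be the most likely output of $\mathcal{E}$ for the fixed prover state; it is independent of $S$, so the false-positive bound applies to it. Let $\eta$ be the disagreement probability of two successful extractions. Then $\Pr[\text{valid}\wedge h\neq h^*]\le\epsilon\eta$. By Cauchy--Schwarz, $\mathbb{E}[\epsilon\eta]\le\sqrt{\mathbb{E}[\epsilon^2\eta]}$, which is the square root of the advantage of a two-extraction binding adversary. Either way, some argument of this kind is required, and your proposal does not contain one.
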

\fi

\smallskip
\noindent\textit{Proof sketch.}
We focus on $\tilde{F}$-relaxed knowledge soundness, as the other properties directly follow from the underlying commitment scheme and ZK proof system with standard security guarantees.
Our proof carefully unifies standard arguments from learning theory with the proof technique of ``knowledge extraction with interleaved probabilistic tests'', commonly used in advanced proof systems~\cite{Nguyen,AlmostSpecialSoundness,AdaptiveSpecialSoundness}.
To illustrate the intuition, let us assume the commitment scheme is perfectly binding and straightline extractable, i.e., once the prover sends $\com$ to the auditor, one can immediately extract a unique model $h$ and a unique randomness $\rho$ such that $\com = \commit(h;\rho)$.
Such a commitment scheme can be constructed in the common reference string (CRS) model using public key encryption.
If the committed model $h$ does \emph{not} satisfy $\tilde{F}$, then by the assumption on the false positive rate, the probability that $f(h,S_\text{audit})=1$ is at most $\pfpr$ over the choice of $\setaudit$.

The actual proof is significantly more involved, when the commitment scheme is \emph{only computationally binding} (which is the case for most practical instantiations). 
In particular, we need to construct a meta-extractor that runs the knowledge extractor $\cE$ for $\ZKP$ to obtain a candidate model $h$, and rewinds $\cE$ with fresh $\setaudit'\sim\mathcal{D}^n$ to ensure the validity of $h$ via probabilistic tests (or otherwise break binding of the commitment).
Our formal security proof takes care of these subtle technicalities.


\subsection{Example: CMC for Accuracy}\label{sec:auditcsp-accuracy-overview}
We detail example instantiations of $f$ and $F$ for accuracy \iffull(\S~\ref{sec:auditcsp-accuracy})\fi and demographic parity auditing\iffull (\S~\ref{sec:auditcsp-demographic-parity})\fi, both of which enable arbitrary small false positive and negative rates with a sufficient number of samples $n$.
To instantiate $\auditcsp$ for accuracy auditing, we consider the empirical and true error rates as follows:
\begin{align*}
	\hat{\ell}_S(h) = \frac{1}{n}\sum_{(x,y)\in S}\mathbb{I}(h(x)\neq y)\quad
	\ell(h) = \mathbb{E}_{(x,y)\sim \mathcal{D}}[\mathbb{I}(h(x)\neq y)] 
\end{align*}
where $n = |S|$, and define the empirical predicate $f$, the distributional predicate $F$, and the relaxed predicate $\tilde{F}$ as follows.
\begin{align*}
    f(h,S_\text{audit}) = 1  & \iff \hat{\ell}_S(h) \leq t+\delta \\
    F(h) = 1 & \iff \ell(h) \leq t  \\
	\tilde{F}(h) = 1 & \iff \ell(h) \leq t + 2\delta
\end{align*}
With these definitions in place, we can bound both the false negative $\pfnr$ and false positive rates $\pfpr$ by $2 e^{-2n\delta^2}$ by a standard concentration bound argument \iffull(see Lemma~\ref{lem:fn-fp-accuracy} in \S~\ref{sec:auditcsp-accuracy}).\else (see full version). \fi
By plugging them into Theorem~\ref{thm:auditcsp-general}, we conclude that $\auditcsp$ instantiated for accuracy auditing is a secure CMC with the following properties for a sufficiently large $n\delta^2$. 
\begin{itemize}
    \item If the true error of the model $h$ is $\leq t$, then the auditor accepts with high probability.
    \item If the auditor accepts, then the auditor gets assurance that the true error of the model $h$ is at most $t + 2\delta$, even if the prover is misbehaving.
\end{itemize}
As usual in learning theory, there is a trade-off between the sample size $n$ and the tightness of the confidence parameter $\delta$. 
By tuning $\delta$ appropriately, one can reduce the number of samples $n$ needed to achieve desired false positive and negative rates.
This is particularly important in practice since 1) the computational cost of $\ZKP$ typically scales with $n$, and 2) gathering large audit datasets can be challenging.
For instance, by choosing $(n,\delta)=(10\text{k},0.02)$ or $(n,\delta)=(40\text{k},0.01)$, we can achieve both false positive and negative rates $\leq 0.001$.
Since typical dataset sizes in practice are in the order of tens or hundreds of thousands, these configurations enable high-confidence auditing while maintaining reasonable computational costs for $\ZKP$.

\section{Case Studies: Vulnerability to Data Forging in Previous Work}\label{sec:casestudy}
Our formalization from \S\ref{sec:sec-definition} and \S\ref{sec:attack} lets us test whether a protocol is vulnerable to data-forging, and provides a way to lift the security guarantees of prior works to resist data forging attacks. For CMC works which reveal neither the model nor the training data, the most important check boils down to whether the prover is required to commit to the training data and/or to the model before seeing the audit dataset. 

We examined several prior works~\cite{ZhangFZS20,shamsabadi2022profitt,yadav2024fairproof,LiuXZ21,franzese2024oath,shamsabadi2024dpproof,kangdnn,WangH23,p2nia} with formal security guarantees. Surprisingly, the majority of them either do not explicitly state when the audit dataset is revealed, or assume the prover's training dataset and/or the model itself to be trusted (and are susceptible to data forging if the prover is actually malicious). Works that do not discuss the timing of the commitment often point out that their solution can be used to conduct audits using publicly known datasets, in which case the public dataset can be assumed to be known to the adversary prior to auditing, making the solution vulnerable to data-forging. 

We present two of our case studies below, and discuss the rest in \S\ref{sec:case_study}. We note that works that are not susceptible to our data-forging attacks nonetheless provide only dataset-specific guarantees, i.e., their proofs certify properties solely on the chosen audit set/inference queries already submitted by clients, without extending to the underlying data distribution. 
It would be interesting to perform an analysis similar to that in \iffull \S\ref{sec:auditcsp-accuracy} and \S~\ref{sec:auditcsp-demographic-parity} \else \S\ref{sec:construction-overview} \fi to derive formal guarantees that hold for the corresponding distributions.
 We summarize the results of our findings in Table~\ref{tab:ppml-landscape}.

\subsection{Zero Knowledge Proofs for Decision Tree Predictions and Accuracy}

\noindent\textbf{Goal and Solution Details.} Zhang et al.~\cite{ZhangFZS20} introduce protocols for auditing accuracy and verifying decision tree predictions without revealing any additional information about the model itself. Zhang et al.'s main contribution is in designing a custom zero-knowledge proof tailored to efficiently verifying the decision tree prediction. The proof consists of algorithms to generate public parameters, custom commitment algorithm for decision tree models, the prover's algorithm which outputs a proof of inference/accuracy, and verifier algorithm to check this proof. The prover, i.e., model provider, must first commit to its model and subsequently demonstrate that the predictions on client queries are consistent with this commitment. For accuracy verification, the authors propose a batching technique that allows to more efficiently checks the correctness of model predictions across multiple inputs, followed by a ZKP-verified calculation of the proportion of predictions that match the ground truth labels of the data.

\noindent\textbf{Security Model.} Zhang et al.'s security notion follows the traditional definition structure of a zero-knowledge proof system, where the protocol is required to satisfy correctness, soundness, and zero-knowledge. Both parties can be malicious. In the context of our analysis we are interested in soundness, which specifies whether a malicious prover can deceive the verifier (i.e., auditor), that the prover's hypothesis passed the test. At a high level, the authors' soundness definition can be summarized as follows: A prover should not be able to output a commitment to a tree $\mathcal{T}$ along with a proof $\pi$, an accuracy threshold $acc$, and a dataset $D$, such that the verifier accepts the proof, and at the same time, $\mathcal{T}'s$ accuracy on $D$ is not equal to $acc$. 

\noindent\textbf{Discussion.} The security notion that Zhang et al's work provides aligns well with the intuitive goals of verifying both the correctness of individual predictions and the accuracy of a model on a given dataset. However, it does not provide formal guarantees for datasets beyond the audited one, i.e., the accuracy verification solution does not generalize to other datasets drawn from the same distribution. The protocol description further does not explain how the audit dataset is sampled.  In fact, Zhang et al. explicitly note that it is possible to use their solution to check accuracy on a \emph{public dataset}. In this setting, their approach falls within our framework of Definition~\ref{def:forge-audit-new}, and is vulnerable to the same attack as outlined in \S\ref{sec:attack-methods}. In fact, note that our example works even if the prover supplies an \emph{additional} proof of training to complement its proof of accuracy. 

\noindent\textbf{Expanding/Restoring Security.} Using our techniques, Zhang et al.'s construction can be easily adapted to satisfy the guarantee with respect to new datasets drawn from the same distribution. In fact, this corresponds precisely to our example in \S~\ref{sec:auditcsp-accuracy-overview}, where Zhang et al.'s construction can be plugged in as the ZKP of our protocol $\auditcsp$ in \S\ref{sec:construction-theorem-overview}.

\subsection{Confidential-DPproof: Confidential Proof Of Differentially Private Training}
\noindent\textbf{Goal and Solution Details.} Shamsabadi et al.~\cite{shamsabadi2024dpproof} present Confidential-DPproof, 
a framework that enables the model provider to prove to an auditor that their model was correctly trained via DP-SGD~\cite{abadi2016deep}, a classic approach for training models with differential privacy guarantees. The certification of DP-SGD's training run is done in zero-knowledge.

\noindent\textbf{Security Model.} Shamsabadi et al.~\cite{shamsabadi2024dpproof} consider a setting with a mutually distrusting model provider and auditor. The model provider is fully malicious, while the auditor is semi-honest. Confidential-DPproof considers standard definitions of correctness, soundness, and zero-knowledge.

\noindent\textbf{Discussion.}
As we describe in \S\ref{alg:data-forge-dp} (``Attacking Differential Privacy Certification''), Shamsabadi et al.'s solution is susceptible to data forging attacks. 

\noindent\textbf{Limitations of Our Secure Template in the Privacy Setting.} Defining practical empirical predicates for privacy testing is more challenging for privacy than for accuracy or fairness. Empirical audits of differential privacy, such as membership inference attack success rate~\cite{hu2022membership}, often incur hefty computational costs. Further, both privacy audits and differentially private training necessitate assumptions about training data which are difficult to validate in the CMC model without further cryptographic infrastructure. 

As a result, the approach we introduced above for Zhang et al.~\cite{ZhangFZS20} does not apply to Confidential-DPproof. The core of the issue that Confidential-DPproof is essentially a proof of training, and the data for this training is necessarily chosen by the model provider, not by the auditor. One possible way to still make use of Confidential-DPproof guarantees would be to verify that the training data was generated by a set of trusted parties independent from the model provider. For instance, each individual in the dataset could sign their data record with a distinct key pair. The service provider could then prove that each training record verifies under a single individual’s public key. This would thwart the attack in Algorithm~\ref{alg:data-forge-dp}, as duplication of data points would not be possible. However, instantiating such a scheme efficiently and proving its security to more general data forging attacks is a challenge beyond the scope of the present paper.  We reserve it as an interesting direction for future work.



\section{Evaluation}\label{sec:eval}

In this section we underscore the importance of data forging attacks by mounting proof of concept attacks for models trained on a variety of datasets. We show that our attack is effective in making inaccurate models appear accurate and unfair models appear fair, and empirically demonstrate a variety of other qualities, e.g. undetectability with a variety of statistical tools.
\iffull
We provide source code for our implementation and for reproducing our experiments.\footnote{\url{https://zenodo.org/records/20337318}} 
\fi

\begin{figure*}[t]
    \centering
    \includegraphics[width=\linewidth]{images/total_acc_fig1.png}
    \caption{Accuracy of models trained on datasets constructed to minimize real-world accuracy while maintaining high accuracy on audit data for several benchmarks. Attack parameter controls what portion of audit data is known to the model owner and what portion of training data is labeled maliciously. Values are averages over ten runs, error bars represent one standard deviation. 
}
    \label{fig:accuracy}
\end{figure*}

\begin{figure*}[t]
    \centering
    \includegraphics[width=\linewidth]{images/total_fair_fig.png}
    \caption{Fairness of models trained on datasets constructed to maintain demographic parity on audit data while producing unfair results on real-world data using various benchmarks. Attack parameter controls what portion of audit data is known to the model owner and what portion of training data is labeled maliciously. Values are averages over ten runs, error bars represent one standard deviation. Fairness means $1-\text{demographic parity gap}$.
}
    \label{fig:fairness}
\end{figure*}

\myparagraph{Experimental Setup.} We use six well-known fairness benchmarking datasets in our experiments: ACSEmployment~\cite{ding2021retiring}, Adult~\cite{Adult_income}, COMPAS~\cite{angwin2016machine}, German Credit~\cite{german}, Default Credit~\cite{Default_credit}, and Communities \& Crime~\cite{communities_and_crime}. We implemented a modified version of our attack from \S\ref{sec:attack-methods} in Python 3.12.3 using SciKit-Learn version 1.6.1 that attempts to minimize an objective when deployed and evaluated its performance against a variety of datasets. For a given run, we split the dataset into an evaluation dataset consisting of $30\%$ of the data, an audit dataset containing 1000 data points (500 for German Credit due to its smaller size), and an initial training data set. We represent the interpolation between a fully honest training run and a fully malicious one by the \emph{attack parameter} $p \in [0,1]$.
Concretely, $p$ controls two components of the forged training set $S'_{\mathit{train}}$: (i)~the fraction of audit points that contribute $\pm\varepsilon$-perturbed neighbors (with flipped labels), and (ii)~the fraction of $S_{\mathit{train}}$ rows whose labels are flipped.
\iffull At $p=0$ the construction degenerates to honest training (no perturbed neighbors, no label flips), i.e., $S'_{\mathit{train}} = S_{\mathit{train}}$; at $p=1$ $S_{\mathit{audit}}$ gets included in $S'_{\mathit{train}}$, every audit point generates $2d$ flipped-label neighbors, and every $S_{\mathit{train}}$ label is inverted.\fi
The specifics of how honest and malicious data points are labeled depends on the objective.

To attack accuracy, we constructed a training dataset using a modified Algorithm~\ref{alg:attack}, labeling the additional data from the attack with $1-r_y$ rather than $0$ and adding them to the initial training data set. For our fairness attack, we constructed the training data similarly, changing how honest and malicious data points were labeled to match Algorithm~\ref{alg:data-forge-fairness}. Honest data points were given random labels, while malicious data points were labeled according to their sensitive attribute. Both attacks then fit a decision tree to their constructed training data using SciKit-Learn's decision tree classifier class.

To ensure that our attack would evade statistical detection, we added extra copies of the audit data to the training data, as computed in \iffull Corollary~\ref{cor:welch-t-test} \else Corollary~\ref{cor:welch-main} \fi to pass Welch's $t$-test with significance level $0.05$.

\myparagraph{Attacking Accuracy Audits.} We ran our attack on six benchmark datasets -- three shown in Figure~\ref{fig:accuracy} (remaining in Figure~\ref{fig:accuracy-attack-additional}). Across all datasets, our data forging attack enforces high audit accuracy while simultaneously encouraging low performance on real-world evaluation data. Thus {\bf our attack successfully makes inaccurate models \emph{appear} accurate to an auditor.} In practice, this could mean that a fraudulent insurance provider could enrich themselves by using a highly inaccurate model to deny claims, despite passing accuracy audits. 

\myparagraph{Attacking Fairness Audits.} We also performed the attack while targeting demographic parity (using sex as the sensitive attribute) on three datasets, which we present in Figure~\ref{fig:fairness}. We were able to reliably train a model with close to $0$ fairness gap on the audit dataset, but close to $1$ fairness gap when deployed. In other words, {\bf our attack successfully makes unfair models \emph{appear} fair to an auditor.} 

\myparagraph{Evading Detection via Statistical Methods.}  \iffull We show how our attack can be executed in ways that evade detection by a variety of statistical approaches in Appendix Table~\ref{tab:stats}. We were able to construct malicious training datasets with summary statistics that match those of the audit dataset very closely, and Welch's $t$-test and Levene's test regularly concluded that the audit and test datasets were drawn from the same distribution. This is consistent with our theoretical results in  Appendix~\ref{sec:detection}.\else In the full version, we show how our attack can be executed in ways that evade detection by a variety of statistical approaches. We were able to construct malicious training datasets with summary statistics that match those of the audit dataset very closely, and Welch's $t$-test and Levene's test regularly concluded that the audit and test datasets were drawn from the same distribution. \fi 


\myparagraph{Additional Results.} An adversary can use data forging attacks to achieve concrete goals beyond degradation of accuracy or fairness, as we show in the Appendix~\ref{sec:eval-appendix}. For example, Figure~\ref{fig:denial} shows how an insurance provider could use our attack to hide the claim denial rate of a model from auditors. We have also conducted preliminary experiments that suggest that variations of our attack generalize to XGBoost (\S~\ref{sec:xgboost}) and neural network models (\S~\ref{sec:neural}).

\section{Conclusion and Future Work}
This work brings attention to data-dependent vulnerabilities in CMC. 
We showed that, in several proposed certification frameworks, a provider can pass an audit while undermining its intended guarantees in real-world deployment. We proposed concrete attack strategies and provided preliminary evidence that standard statistical tests are unlikely to reliably detect them. Empirically, we demonstrated the effectiveness of these attacks for certifications for accuracy and fairness.
We then introduced new formal security definitions which allow property certification on data \emph{distributions} (rather than data \emph{sets}), thereby ruling out the data-dependent attacks identified in this work. We then presented a general-purpose protocol satisfying our definition. 

Our results should not be interpreted as showing that prior CMC constructions are technically incorrect. Rather, they show that the guarantees they provide may not align with the guarantees required in realistic auditing scenarios. More broadly, our work illustrates the importance of close communication between the cryptography, security, and machine learning communities when designing future certification frameworks, so that formal guarantees are aligned with the threats that arise in deployment.

The attack strategy presented in this work poses several open questions. 
While we demonstrate the data forging attack is undetected even in the presence of Welch's $t$-test and Levene's test, it remains to be seen whether other statistical tests could effectively detect the attack. 
Based on the results that we have derived, we find it unlikely that other statistical tests will be effective in detecting the attack. However, we reserve such analysis for future work.
We provide rigorous formal proofs that our attacks are effective on decision trees, and preliminary evidence that a similar approach generalizes to neural networks. Characterizing a formal relationship between neural network model capacity and attack effectiveness could be a promising direction in future work.

Our secure CMC template underscores  the importance of keeping audit data hidden until the service provider's model is committed. 
This imposes a limitation on auditing in practice: auditors must either regularly gather fresh data (since the audit dataset is typically revealed during the audit), use additional cryptographic techniques such as secure multiparty computation to keep data hidden during the audit, or perform continuous auditing on user data. Each of these options has strengths and drawbacks which should be evaluated in more detail by future work.
Moreover, while we focused on a CMC template based on fresh audit datasets, it remains unclear how to securely certify distributional properties that do not involve an audit dataset, such as differentially private training. 
A promising direction for future work is a countermeasure against the third attack in \S\ref{sec:attack-methods}, which likely requires a more advanced CMC mechanism to certify the well-formedness of the training data.

\ifanonymous
\else
\section*{Acknowledgments \& Disclaimer}
We thank \ifneurips Nicolas Papernot for the contributions to this work, \fi Chen-Da Liu-Zhang for many helpful discussions, and Jonas Guan, Sierra Wyllie and Mohammad Yaghini for helpful feedback on an earlier draft.
Resources used in preparing this research were provided, in part, by the Province of Ontario, the Government of Canada through CIFAR, and companies sponsoring the Vector Institute. The work was further supported by the CBI postdoctoral fellowship.

This paper was prepared in part for information purposes by the Artificial Intelligence Research group of JPMorgan Chase \& Co and its affiliates (“JP Morgan”), and is not a product of the Research Department of JP Morgan. JP Morgan makes no representation and warranty whatsoever and disclaims all liability, for the completeness, accuracy or reliability of the information contained herein. This document is not intended as investment research or investment advice, or a recommendation, offer or solicitation for the purchase or sale of any security, financial instrument, financial product or service, or to be used in any way for evaluating the merits of participating in any transaction, and shall not constitute a solicitation under any jurisdiction or to any person, if such solicitation under such jurisdiction or to such person would be unlawful.
\fi


\ifusenix
\section*{Ethical Considerations}
In this paper, we pointed out assumption gaps both in prior works and real-world auditing practices on cryptographic model certification (CMC), and demonstrated data forging attacks that exploit these gaps. 
We provide a stakeholder ethics analysis to evaluate the ethical implications of our work.

\smallskip
\noindent\textbf{Stakeholders:} The primary stakeholders affected by our work include the authors, machine learning practitioners, organizations deploying machine learning models, end-users relying on these models, and researchers in the field of machine learning security and cryptography.

\smallskip
\noindent\textbf{Impacts:}
This study does not involve human subjects or personal data, and thus does not raise direct ethical concerns related to privacy or consent. 
Our findings point out an assumption (i.e., whether the auditing dataset is known to the model provider ahead of time) that was not explicitly considered in prior works and real-world auditing practices, which is crucial for the security of CMC schemes.

\smallskip
\noindent\textbf{Mitigation Strategies:}
As the research community and industry are currently only experimenting with proof-of-concept implementations of CMC and have not yet deployed such systems in production, we conclude that there is no risk to end-users due to our findings.
We hope that our work will raise awareness of the importance of explicitly considering the assumption we identified in both research and practice, and contribute to the development of more robust and secure CMC schemes.

\section*{Open Science}
In accordance with the open science policy, we provide the source code of our implementation and necessary files required for running our experiments. These can be found at \url{https://zenodo.org/records/20337318}.
\fi

\bibliographystyle{plainurl}
\bibliography{ref}

\ificml
		\bibliographystyle{icml2026}
		\appendix
		\onecolumn 
\else
\fi

\appendix
\section{Related Work}\label{sec:relatedworkcontinued}

We now briefly discuss further prior works in the field of model certification.
A number of recent works aim to prove desirable model properties. In terms of \emph{what} these works prove, they can be roughly categorized into proofs of training, inference, accuracy, and fairness. In terms of \emph{how} the corresponding protocols work,
recent works on certifiable ML can be categorized as follows:  

\noindent\textbf{Cryptographic approaches.} A prolific line of research adapts 
various cryptographic techniques to certify properties such as accuracy, fairness, etc., without revealing the model's details. The most common technique is \emph{zero-knowledge proofs} (zk proofs), which allow to formally prove that a model satisfies certain properties without revealing anything else about the model.
They have been used to certify
fairness~\cite{shamsabadi2022profitt,yadav2024fairproof,franzese2024oath,zhang2025scalable}, inference~\cite{ZhangFZS20}, accuracy~\cite{ZhangFZS20}, and to prove that the model has been trained using a certain algorithm \cite{abbaszadeh2024DNNzkpot,garg2023experimenting,sun2024zkdl,pappas2024sparrow} (without revealing the training data). 
Works such as Chang et al.~\cite{chang2023holmes} and Duddu et al.~\cite{duddu2024attesting} use \emph{secure multi-party computation} (MPC)~\cite{Yao82b,GoldreichMW87}, a foundational cryptographic primitive introduced and refined through decades of research~\cite{GargMPP16,GoyalMPS21,ChaumCD87,Ben-OrGW88,EscuderoMP25,Yao86,IshaiPS08,KreuterSS12,GentryHKMNRY21,ChoudhuriGGJK21,BandarupalliJKL25,DamgardN07}. MPC allows mutually distrusting parties to jointly compute on private inputs without revealing anything about the inputs apart from the outcome. 

\noindent\textbf{Black box auditing/Statistical testing.} These approaches probe a model by submitting inputs, collecting outputs, and analyzing them for undesirable behavior. Works by Saleiro et al.~\cite{saleiro2018aequitas} and Tramer et al.~\cite{tramer2017fairtest} use black-box testing to check for potential unfairness or bias, while Tan et al.~\cite{TanCHL18} distill a new model to gain insight into the black box one.

\noindent\textbf{Outside-the-box auditing.} Here the model owner provides access to information beyond query responses, such as source code, documentation~\cite{mitchell2019model}, hyperparameters, training data, deployment details, or internal evaluation results.

\iffull
\section{Additional Preliminaries}\label{sec:more_preliminaries}
\subsection{Welch's $t$-test}
\noindent\textbf{Welch's $t$-test}
The goal of $t$-test is to determine whether the unknown population means of two groups are equal or not.
That is, for random variables $X$ and $Y$, it compares the following hypotheses on their means $\mu_X = \mathrm{E}[X]$  and $\mu_Y = \mathrm{E}[Y]$: 
\begin{itemize}
\item Null Hypothesis $H_0$: $\mu_X =\mu_Y$
\item Alternative Hypothesis $H_1$: $\mu_X\neq \mu_Y$
\end{itemize}

Assuming that $X$ and $Y$ independently follow Gaussian distributions with unknown variances, Welch's $t$-test proceeds as in Algorithm~\ref{alg:t-test}.
\begin{algorithm}[htb]
	\caption{Welch's $t$-test}
	\label{alg:t-test}
	\textbf{Input:}  $\mathcal{X} = \{x_i\}_{i\in[n]}$, $\mathcal{Y}= \{y_i\}_{i\in[m]}$, where $x_i\sim X$ and $y_i\sim Y$, and a significance level $\alpha$
	\textbf{Output}: Null hypothesis $H_0$ (i.e., $\mu_X =\mu_Y$) or alternative hypothesis $H_1$ (i.e., $\mu_X\neq \mu_Y$)
	\begin{algorithmic}[1]
		\State Compute sampled means $\bar{x} = \frac{\sum_i x_i}{n}$ and $\bar{y}=\frac{\sum_i y_i}{m}$
		\State Compute sampled variances $v_x = \frac{\sum_i (\bar{x}-x_i)^2}{n-1}$ and $v_y=\frac{\sum_i (\bar{y}-y_i)^2}{m-1}$.
		\State Compute the test statistic $t = \frac{\bar{x}-\bar{y}}{\sqrt{v_x/n + v_y/m}}$
		\State Compute the degree of freedom $d = \frac{(g_x+g_y)^2}{g_x^2/(n-1) + g^2_y/(m-1)}$, where $g_x = v_x/n$ and $g_y = v_y/m$
		\State Obtain the critical value $t_{\textit{cr}}$ from the $t$-table, given $d$ and $\alpha$.
		\State \textbf{If} {$|t|< t_{\textit{cr}}$} \textbf{return} $H_0$ \textbf{else} \textbf{return} $H_1$ 
	\end{algorithmic}
\end{algorithm}

\subsection{Decision Trees}\label{sec:dtbackground}
In our attack constructions we focus on decision tree models. Decision tree-based solutions are among the most popular machine learning algorithms, particularly known for their effectiveness in classification problems such as loan approval and fraud detection. A decision tree is trained by recursively partitioning the dataset from the root to the leaves. At each step, a split is determined by a splitting rule that aims to maximize an objective function, such as information gain. For prediction, the input follows a path from the root to a leaf, where at each internal node, the decision depends on whether the input satisfies the corresponding threshold (see Algorithm~\ref{alg:dtinference}). 

For completeness, in Algorithm~\ref{alg:dtinference} we present the algorithm for decision tree inference.
		
\begin{algorithm}[htb]
	\caption{Decision Tree Inference}
	\label{alg:dtinference}
    
    \textbf{Input:} Decision tree $h$, input $\ba$. 

    \textbf{Output}: Classification result.
  
    \begin{algorithmic}[1]
        \State Let $\cur: = h.\troot$ \Comment{Set $\cur$ to be root of the tree}
        \While{$\cur$ is not a leaf}
        \If{$\ba[\cur.\attr] < \cur.\thr$}
        \State $\cur:= \cur.\tleft$. \Comment{Set $\cur$ to be current node's left child}
        \Else
        \State $\cur:= \cur.\tright$. \Comment{Set $\cur$ to be current node's right child}
        \EndIf
        \EndWhile
        \State \textbf{return} $\cur.\class$
    \end{algorithmic}
\end{algorithm}

\subsection{Security Properties of Zero-Knowledge Proofs}\label{sec:zksecurity}
Let $\ZKP = (\cP, \cV)$ be an interactive proof system for a relation $\rel = \bigcup_{\secp\in \mathbb{N}} \rel_\secp$.
In what follows, we denote by PPT \emph{probabilistic polynomial time}.

\medskip
\noindent\textbf{Completeness}
$\ZKP$ is (perfectly) \emph{complete} if for any $(\stm,\wit)$ satisfying $\rel$, it holds that: 
$$\Pr[1\gets\interact{\cP(\wit)}{\cV}(\stm)] =1.$$

\medskip
\noindent\textbf{Knowledge Soundness}
$\ZKP$ is (adaptively) \emph{knowledge sound} with knowledge error $\kappa$ if for any (stateful) PPT adversary $\cP^*=(\cP_0,\cP_1)$, there exists an expected polynomial time extractor $\cE$ such that the following holds:
\begin{align*}
	\pext \geq \pacc - \kappa
\end{align*}
where 
\begin{align*}
	\pext &= \probrv{R_\secp(\stm,\wit)=1}{\stm\gets\cP_0(1^\secp); \wit\gets\cE_{\cP}(\stm)}\\
	\pacc &= \probrv{b=1}{\stm\gets\cP_0(1^\secp); b\gets\interact{\cP_1}{\cV}(\stm)}
\end{align*}
where $\cE$ has non-black-box access to $\cP^*$.
Informally, this means that any cheating prover must know a valid witness if it convinces verifier.

\medskip
\noindent\textbf{Zero-Knowledge}
Let $\view_{\cV}^{\cP(\wit)}(\stm)$ be a string consisting of all the incoming messages that $\cV$ receives from $\cP$ during the interaction $\interact{\cP(\wit)}{\cV}(\stm)$, and $\cV$'s random coins.
$\Pi$ is (honest verifier) \emph{zero-knowledge} if there exists a PPT simulator $\cS$ such that for any adversary $\cA$ and any $(\stm,\wit)\in\rel_\secp$, the following is negligible in $\secp$.

\begin{align*}
	\left|
	\begin{aligned}
		&\probrv{b=1}{
			& b\gets\cA(\view_{\cV}^{\cP(\wit)}(\stm))} \\
		- &\probrv{b=1}{
			& \view' \gets\cS(\stm);
			& b\gets\cA(\view')}
	\end{aligned}\right|
\end{align*}

Informally, this means that the protocol execution reveals no information to $\cV$ about $\wit$.

\subsection{Security Properties of Commitment Schemes}\label{sec:comsecurity}
Let $\commit$ be a commitment scheme. For simplicity, we omit the key generation algorithm $\gen$ and present a class of the simplest commitments whose openings are checked by re-computing and comparing (e.g., hash commitment $H(m||\rho)$). More generally, some commitment schemes require a separate verification algorithm $\mathsf{Verify}$ to check the validity of a commitment given some \emph{decommitment} information. Our auditing framework can be extended to such schemes by having the model provider prove the knowledge of the decommitment information in zero knowledge.

\begin{defn}[Commitment Scheme]
A commitment scheme is an algorithm $\commit$,
which is executed as $\com\gets\commit(m;\rho)$. 
It takes as input a message $m\in\{0,1\}^{\ell_m(\secp)}$, a uniformly sampled randomness $\rho \in \{0,1\}^{\ell_r(\secp)}$, and returns a commitment $\com \in \{ 0,1\}^{\ell_c(\secp)}$. Here $\ell_m, \ell_r, \ell_c$ are some polynomials in $\secp$, the security parameter (determining the desired level of security).
\end{defn}

\medskip
\noindent\textbf{Binding}
$\commit$ is \emph{computationally binding} if for any PPT adversary $\cA$, the following is negligible in $\secp$:
\begin{align*}
    \probrv{\com &= \commit(m;\rho) \\ 
    \com &= \commit(m';\rho') \\  m&\neq m'}{(\com,m,m',\rho,\rho')\gets\cA(1^\secp)}
\end{align*}

\medskip
\noindent\textbf{Hiding}
$\commit$ is \emph{computationally hiding} if for any PPT adversary $\cA$, the following is negligible in $\secp$:
\begin{align*}
    \left|
    \probrv{b=b'}{ (m_0,m_1)\gets\cA(1^\secp); \\ \rho\gets\{0,1\}^{\ell_r(\secp)}; b\gets\{0,1\}; \\ \com\gets\commit(m_b;\rho); b'\gets\cA(\com)} - \frac{1}{2}
    \right|
\end{align*}
\fi
\section{Proof of Theorem \ref{thm:attack-isolation}}\label{sec:appendix}\label{sec:thm1proof}

\begin{proof}
    First, let us show that if points $a$ and $b$, where $a_{i} < c_{i} < b_{i}$ and $c_{j}=b_{j}$ for all $j\neq i$, get sorted into the same decision tree leaf, then $c$ is also sorted into that same leaf. Notice that in order for two points $x$ and $y$ to get sorted into different leaves, there must be a node which splits on a feature $i$ such that $x_{i}\neq y_{i}$ and $x_{i}\leq t \leq y_{i}$ or $y_{i} \leq t \leq x_{i}$ where $t$ is the threshold to split upon. Then if $b$ and $c$ were sorted into separate leaves, there must be a node on the path that $b$ takes through the tree that splits on feature $i$ with a threshold $t$ that satisfies $c_{i} \leq t \leq b_{i}$. However, such a node would also sort $a$ distinctly from $b$, so such a node cannot occur. Thus, $b$ and $c$ must be sorted into the same leaf.
    
    Now, note that as we train the tree until its leafs are homogeneous, every datapoint in $S_{train}'$ must be classified correctly (according to the label we assigned to it in Algorithm~\ref{alg:attack}). Further, since the only datapoints in $S_{train}'$ with non-zero labels are datapoints from $S_{audit}$, for every leaf in $\mathcal{T}$ that is associated with a non-zero class, we have at least one $r \in S_{audit}$ that gets sorted into this leaf.

    Consider $x\in \mathbb{R}^d$ such that $||x-r||_\infty \geq \varepsilon$ for all $r\in S_{audit}$. Say $\mathcal{T}(x)=1$, i.e., there exists a leaf such that $x$ belongs to this leaf and the leaf corresponds to class one. Consider $r \in S_{audit}$ that belongs to this leaf (by above, such $r$ exists). By the definition of the L-infinity norm there exists some dimension $i$ where $|r_i-x_i| > \varepsilon$. Suppose $r_i-x_i > \varepsilon$. Notice that there is a point $r - \varepsilon\vec{b}_i \in S_{train}'$ which satisfies that $(r-\varepsilon\vec{b}_i)_j=r_j$ for all $j\neq i$, and where $x_i < (r-\varepsilon \vec{b}_i)_i < r_i$. Then by above, $r-\varepsilon \vec{b}_i$ must be sorted into the same leaf as $r$ and $x$. But $r-\varepsilon \vec{b}_i$ has label $g(r)=0$, while for $x$ holds $\mathcal{T}(x)=1$. Thus, we found a contradition. The same argument holds if $x_i-r_i>\varepsilon$, but using the point $r+\varepsilon \vec{b}_i$ instead of $r-\varepsilon 
    \vec{b}_i$.
\end{proof}

\iffull
\section{Attack Detection}\label{sec:detection}

While proof of training alone cannot detect the attack above (as it relies on training the decision tree entirely honestly), nor can a black-box audit where the model owner knows the audit data before training time, we might still hope to detect when these attacks occur. For example, we might hope to conduct statistical tests on the training data to determine if it was honestly sampled from the underlying distribution or if it was adversarially constructed. In such a case, we cannot directly compare the training data to the true distribution of real data because the underlying distribution is not fully known to the auditor. Instead, we must compare the training data with a sample from that distribution. In the most simple case, this sample is the reference set $S_{audit}$.

We argue that under a certain family of functions, our constructed training set is indistinguishable from $S_{audit}$.


\begin{defn}
    Suppose $\vec{\alpha}$ is a set of bins over $d$ dimensions. Then $H_{\vec{\alpha}}:(\mathbb{R}^d\times \{ 0,1 \})^*\to \mathcal{H}$ is the function which takes databases over $d$ features and a binary classification to their normalized histogram with bins $\vec{\alpha}$.
\end{defn}

\begin{defn}
    A function $f:(\mathbb{R}^d\times \{0,1\})^*\to \mathbb{R}$ is called ($\gamma,c$)-magnitude insensitive if there exists a choice of bins $\vec{\alpha}$ and function $f':\mathcal{H}\to \mathbb{R}$ such that $|f(D)-f'(H_{\vec{\alpha}}(D))| < \gamma$ for all $D\in (\mathbb{R}^d\times \{0,1\})^*$ and $|f'(H_{\vec{\alpha}}(D))-f'(H_{\vec{\alpha}}(D||r))|\leq \frac{c}{|D|}$ for all $D\in(\mathbb{R}^d\times \{0,1\})^*$ and $r\in\mathbb{R}^d\times \{0,1\}$. 
\end{defn}

\begin{thm} \label{thm:metric-approx}
    If $f$ is ($\gamma,c$)-magnitude insensitive, then $|f(S_{audit})-f\left( S_{audit}^k||\delta \right)|\leq \varepsilon$ for any $\varepsilon>2\gamma$ and $k\geq \frac{2dc}{\varepsilon - 2\gamma}$, where $\delta$ is the additional training data created by Algorithm \ref{alg:attack} when run with input $S_{audit},d,\varepsilon,g$ for any $g$.
\end{thm}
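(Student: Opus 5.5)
We reduce everything to the histogram surrogate $f'$ and then control how far the histogram moves when the forged points $\delta$ are appended to $k$ copies of $S_{audit}$. By the triangle inequality,
\begin{align*}
|f(S_{audit})-f(S_{audit}^k\|\delta)| \le\; & |f(S_{audit})-f'(H_{\vec\alpha}(S_{audit}))| \\
&+ |f'(H_{\vec\alpha}(S_{audit}))-f'(H_{\vec\alpha}(S_{audit}^k\|\delta))| \\
&+ |f'(H_{\vec\alpha}(S_{audit}^k\|\delta))-f(S_{audit}^k\|\delta)|.
\end{align*}
By the first clause of $(\gamma,c)$-magnitude insensitivity, the first and third terms are each less than $\gamma$. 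This contributes the $2\gamma$, so it remains to show the middle term is at most $\varepsilon-2\gamma$.

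For the middle term we use two observations. First, the normalized histogram is invariant under replication: $H_{\vec\alpha}(S_{audit}^k)=H_{\vec\alpha}(S_{audit})$, because every bin count scales by $k$ and so does the total. Hence the middle term equals $|f'(H_{\vec\alpha}(S_{audit}^k))-f'(H_{\vec\alpha}(S_{audit}^k\|\delta))|$.

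Second, by Algorithm~\ref{alg:attack}, $|\delta| = 2d\,|S_{audit}|$. We append the points of $\delta$ one at a time, $r_1,\dots,r_{2dn}$ with $n=|S_{audit}|$, and apply the second clause of the definition at each step. Before adding $r_j$ the current database has size $kn+j-1\ge kn$, so each step changes $f'\circ H_{\vec\alpha}$ by at most $c/(kn)$. Telescoping gives
\begin{align*}
|f'(H_{\vec\alpha}(S_{audit}^k))-f'(H_{\vec\alpha}(S_{audit}^k\|\delta))| \le 2dn\cdot \frac{c}{kn} = \frac{2dc}{k}.
\end{align*}
The condition $k\ge \frac{2dc}{\varepsilon-2\gamma}$ makes this at most $\varepsilon-2\gamma$, which completes the bound.

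The main obstacle is bookkeeping rather than mathematics. One must make sure the second clause is applied with the right $|D|$: it is the growing database, whose size stays at least $kn$, not $S_{audit}$ itself. One must also confirm that the size of $\delta$ is exactly $2d|S_{audit}|$ and does not depend on $g$, which holds because Algorithm~\ref{alg:attack} adds two perturbed points per audit point per feature regardless of their labels. A further possible subtlety is that the base set in Algorithm~\ref{alg:attack} already includes $S_{audit}$, so $S_{train}'=S_{audit}\|\delta$. Here I interpret $S_{audit}^k\|\delta$ as already accounting for that copy, or equivalently absorb it into $k$, since the histogram-invariance argument is unaffected. Also, the symbol $\varepsilon$ is overloaded between the perturbation size and the error bound, but the argument never uses the perturbation size.
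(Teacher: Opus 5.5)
Your proof is correct and follows essentially the same route as the paper. Both use invariance of the normalized histogram under replication, telescoping one appended point at a time to bound the surrogate's change by $2dc/k$, and the two $\gamma$-approximation errors. The only difference is cosmetic: the paper splits into two sign cases where you apply a single triangle inequality.
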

\begin{proof}
    We will write $f'$ to be the $\gamma$-approximation of $f$ guaranteed to exist by the fact that $f$ is ($\gamma,c$)-magnitude insensitive. Observe that because $H_{\vec{\alpha}}$ takes databases to their normalized histograms, $H_{\vec{\alpha}}(S_{audit})=H_{\vec{\alpha}}\left( S_{audit}^k \right)$, because the non-normalized histograms of the two databases are simply scaled versions of one another.

Next, it will be helpful to show that for any two databases $D_{1},D_{2}\in (\mathbb{R}^d\times \{ 0,1 \})^*$, we have $|f'(H_{\vec{\alpha}}(D_{1}))-f'(H_{\vec{\alpha}}(D_{1}||D_{2}))| \leq c\frac{|D_{2}|}{|D_{1}|}$. Let us write $D_{2}=d_{1}||d_{2}||\ldots||d_{|D_{2}|}$. Then we get that 
\begin{equation*}
\begin{split}
&\mathrel{\phantom{=}}|f'(H_{\vec{\alpha}}(D_{1}))-f'(H_{\vec{\alpha}}(D_{1}||D_{2}))|\\
&= |f'(H_{\vec{\alpha}}(D_{1}))-f'(H_{\vec{\alpha}}(D_{1}||d_{1}))+f'(H_{\vec{\alpha}}(D_{1}||d_{1}))-\ldots\\&\qquad+f'(H_{\vec{\alpha}}(D_{1}||d_{1}||d_{2}||\ldots||d_{|D_{2}|-1}))-f'(H_{\vec{\alpha}}(D_{1}||D_{2}))|\\
&\leq |f'(H_{\vec{\alpha}}(D_{1}))-f'(H_{\vec{\alpha}}(D_{1}||d_{1}))|+\\&\qquad|f'(H_{\vec{\alpha}}(D_{1}||d_{1}))-f'(H_{\vec{\alpha}}(D_{1}||d_{1}||d_{2})|+\ldots\\&\qquad+|f'(H_{\vec{\alpha}}(D_{1}||d_{1}||d_{2}||\ldots||d_{|D_{2}|-1}))-f'(H_{\vec{\alpha}}(D_{1}||D_{2}))|\\
 &\leq \frac{c}{|D_{1}|}+\frac{c}{|D_{1}|+1} + \ldots + \frac{c}{|D_{1}|+|D_{2}|-1}\\
 &\leq c\frac{|D_{2}|}{|D_{1}|}
\end{split}
\end{equation*}
Then we can apply this to $S_{audit}^{k}$ and $S_{audit}^{k}||\delta$; recall that $|\delta| = 2d|S_{audit}|$. Then we see that \begin{align*}
    &\mathrel{\phantom{=}} \left|f'\left( H_{\vec{\alpha}}\left( S_{audit} \right) \right)-f'\left( H_{\vec{\alpha}}\left( S_{audit}^{k}||\delta \right) \right)\right|\\
    &=\left|f'\left( H_{\vec{\alpha}}\left( S_{audit}^{k} \right) \right)-f'\left( H_{\vec{\alpha}}\left( S_{audit}^{k}||\delta \right) \right)\right|\\&\leq c \frac{2d|S_{audit}|}{k|S_{audit}|}\\&\leq c \frac{2d}{\left( \frac{2dc}{\varepsilon-2\gamma} \right)}=\varepsilon-2\gamma
\end{align*} We have two cases now.

Case 1: $f'\left( H_{\vec{\alpha}}\left( S_{audit} \right) \right)\geq f'\left( H_{\vec{\alpha}}\left( S_{audit}^{k}||\delta \right) \right)$. Then we have \begin{equation*}\begin{split}
\varepsilon-2\gamma & \geq f'\left( H_{\vec{\alpha}}\left( S_{audit} \right) \right)-f'\left( H_{\vec{\alpha}}\left( S_{audit}^{k}||\delta \right) \right)\\
&= f(S_{audit})-f(S_{audit})+f'(H_{\vec{\alpha}}(S_{audit}))\\&\qquad-f(S_{audit}^{k}||\delta)+f(S_{audit}^{k}||\delta)-f'(H_{\vec{\alpha}}(S_{audit}^{k}||\delta))\\
& \geq f(S_{audit}) - |f(S_{audit})-f'(H_{\vec{\alpha}}(S_{audit}))|\\&\qquad - f(S_{audit}^{k}||\delta)-|f(S_{audit}^{k}||\delta)-f'(H_{\vec{\alpha}}(S_{audit}^{k}||\delta))|\\
& \geq f(S_{audit}) - \gamma - f(S_{audit}^{k}||\delta) - \gamma
\end{split}\end{equation*}
and so we see that $\varepsilon \geq f(S_{audit})-f(S_{audit}^{k}||\delta)$. We also have
\begin{equation*}\begin{split}
&\mathrel{\phantom{=}}f(S_{audit})-f(S_{audit}^{k}||\delta) \\
& = f'(H_{\vec{\alpha}}(S_{audit}))-f'(H_{\vec{\alpha}}(S_{audit}))+f(S_{audit})\\&\qquad-f'(H_{\vec{\alpha}}(S_{audit}^{k}||\delta))+f'(H_{\vec{\alpha}}(S_{audit}^{k}||\delta))-f(S_{audit}^{k}||\delta)\\
& \geq f'(H_{\vec{\alpha}}(S_{audit}))-|f'(H_{\vec{\alpha}}(S_{audit}))-f(S_{audit})|\\&\qquad-f'(H_{\vec{\alpha}}(S_{audit}^{k}||\delta))-|f'(H_{\vec{\alpha}}(S_{audit}^{k}||\delta))-f(S_{audit}^{k}||\delta)|\\
& \geq f'(H_{\vec{\alpha}}(S_{audit}))-\gamma-f'(H_{\vec{\alpha}}(S_{audit}^{k}||\delta))-\gamma\\
& \geq -2\gamma\\
& > -\varepsilon
\end{split}\end{equation*}
Then $|f(S_{audit})-f(S_{audit}^{k}||\delta)| \leq \varepsilon$.

Case 2: $f'(H_{\vec{\alpha}}(S_{audit})) \leq f'(H_{\vec{\alpha}}(S_{audit}^{k}||\delta))$. Then we have\begin{equation*}\begin{split}
\varepsilon-2\gamma & \geq f'\left( H_{\vec{\alpha}}\left( S_{audit}^{k}||\delta \right) \right)-f'\left( H_{\vec{\alpha}}\left( S_{audit} \right) \right)\\
&= f(S_{audit}^{k}||\delta)-f(S_{audit}^{k}||\delta)+f'(H_{\vec{\alpha}}(S_{audit}^{k}||\delta))\\&\qquad-f(S_{audit})+f(S_{audit})-f'(H_{\vec{\alpha}}(S_{audit}))\\
& \geq f(S_{audit}^{k}||\delta)-|f(S_{audit}^{k}||\delta)-f'(H_{\vec{\alpha}}(S_{audit}^{k}||\delta))|\\& \qquad - f(S_{audit}) - |f(S_{audit})-f'(H_{\vec{\alpha}}(S_{audit}))|\\
& \geq f(S_{audit}^{k}||\delta) - \gamma-f(S_{audit}) - \gamma
\end{split}\end{equation*}
and so we see that $\varepsilon \geq f(S_{audit}^{k}||\delta)-f(S_{audit})$. We also have
\begin{equation*}\begin{split}
& \mathrel{\phantom{=}}f(S_{audit}^{k}||\delta)-f(S_{audit})\\
& = f'(H_{\vec{\alpha}}(S_{audit}^{k}||\delta))-f'(H_{\vec{\alpha}}(S_{audit}^{k}||\delta))+f(S_{audit}^{k}||\delta)\\&\qquad\qquad -f'(H_{\vec{\alpha}}(S_{audit}))+f'(H_{\vec{\alpha}}(S_{audit}))-f(S_{audit})\\
& \geq f'(H_{\vec{\alpha}}(S_{audit}^{k}||\delta))-|f'(H_{\vec{\alpha}}(S_{audit}^{k}||\delta))+f(S_{audit}^{k}||\delta)|\\&\qquad\qquad-f'(H_{\vec{\alpha}}(S_{audit}))-|f'(H_{\vec{\alpha}}(S_{audit}))-f(S_{audit})|\\
& \geq f'(H_{\vec{\alpha}}(S_{audit}^{k}||\delta))-\gamma-f'(H_{\vec{\alpha}}(S_{audit}))-\gamma\\
& \geq -2\gamma\\
& \geq -\varepsilon
\end{split}\end{equation*}

Then $|f(S_{audit})-f(S_{audit}^{k}||\delta)| \leq \varepsilon$.
\end{proof}

This theorem does not suggest that it is completely impossible to detect the attack given in Algorithm \ref{alg:attack}. Rather, it only precludes detection by a certain class of functions. However, we argue that this class is expansive and covers many intuitive approaches.

The sole requirement for the audit metric $f$ is that it must be approximable by $f'$ which satisfies three properties. Firstly, $f'$ operates over histograms for some choice of bins $\vec{\alpha}$. This is a necessary condition, as if $f$ were not approximable by a function over a binning of the training data, we could drastically change the audit outcome by simply adding a small amount of noise to the data. Next, $f'$ must be relatively insensitive to additional data. The intuition here is that no individual datapoint should dramatically change the outcome of the audit. Finally, $f'$ operates over normalized histograms. This property is necessary for the proof to go through, but is satisfied by many intuitive audit metrics. For example, the mean and standard deviation of a feature (even conditioned on any arbitrary set of features) are approximable from a normalized histogram. 

\begin{lemma}\label{lem:mean-est}
    Let $\mu_{j}(D)$ be the mean of (bounded) feature $j$ of a dataset $D$. Then for every $\gamma>0$, $\mu_j(D)$ is $(\gamma,M-m)$-magnitude insensitive, where $B$ is the set of bins in the histogram and $M, m$ are an upper and lower bound on possible $j$-values respectively.
\end{lemma}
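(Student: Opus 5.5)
We need to exhibit bins $\vec{\alpha}$ and a histogram function $f'$ that meet the two conditions in the definition of $(\gamma,c)$-magnitude insensitivity, with $c = M-m$. Since feature $j$ takes values in $[m,M]$, partition this interval into $K$ consecutive bins of width $w=(M-m)/K$, choosing $K$ large enough that $w<\gamma$. All other features and the label are left unbinned (each bin covers their entire range). Let $p_b$ be the normalized mass of bin $b$ and $c_b$ its midpoint, and set
\[
f'(H)=\sum_b p_b c_b ,
\]
the mean of the histogram at bin midpoints.

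\textbf{Approximation.} For any dataset $D$, $\mu_j(D)=\frac{1}{|D|}\sum_{x\in D}x_j$. Replacing each $x_j$ by the midpoint of its bin changes it by at most $w/2<\gamma$. Averaging over $D$ gives $|\mu_j(D)-f'(H_{\vec{\alpha}}(D))|\le w/2<\gamma$, which is the first condition.

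\textbf{Insensitivity to one added point.} Let $n=|D|$ and append a point $r$ whose feature-$j$ value lies in bin $b^*$. Then
\[
f'(H_{\vec{\alpha}}(D\|r))=\frac{n\,f'(H_{\vec{\alpha}}(D))+c_{b^*}}{n+1},
\]
so
\[
f'(H_{\vec{\alpha}}(D\|r))-f'(H_{\vec{\alpha}}(D))=\frac{c_{b^*}-f'(H_{\vec{\alpha}}(D))}{n+1}.
\]
Both $c_{b^*}$ and $f'(H_{\vec{\alpha}}(D))$ lie in $[m,M]$: the latter is a convex combination of midpoints, all of which lie in $[m,M]$. Hence the numerator has absolute value at most $M-m$, and the difference is at most $\frac{M-m}{n+1}\le\frac{M-m}{|D|}$. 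This is the second condition with $c=M-m$.

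Neither step is difficult. The only points needing care are choosing $K$ so that $w/2<\gamma$ holds strictly, and keeping the midpoints inside $[m,M]$ so the convex-combination bound applies. Both conditions hold for every $D$, with no dependence on its size, so the lemma follows for every $\gamma>0$.
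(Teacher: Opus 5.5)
Your proof is correct and follows the paper's argument. Like the paper, you bin feature $j$ finely, take $f'$ to be the histogram mean at bin representatives, and bound the one-point change by $\bigl(c_{b^*}-f'(H_{\vec{\alpha}}(D))\bigr)/(n+1)$ with both terms in $[m,M]$; your choice of width $w<\gamma$ with midpoints is in fact slightly more careful than the paper's width-$\gamma$ bins, which only yield $\le\gamma$ rather than the strict $<\gamma$ the definition requires.
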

\begin{proof}
    Notice that $\mu_{j}(D) \approx \sum_{i\in B} p_{i}x_{j,i}$ where $B$ is the set of bins in the histogram, $p_{i}$ is the height of bin $i$ in the normalized histogram of $D$, and $x_{j,i}$ is the $j$-value of bin $i$. Let us show that for any $\gamma>0$, there exists a binning of the data such that this is a $\gamma$-approximation of $\mu_{j}(D)$. Let the bins in feature $j$ have width $\gamma$. Then for each datapoint $d$ with $j$ value $j_{d}$, bin $i$, and binned $j$-value $x_{j,i}$, we have that $|x_{j,i}-j_{d}|\leq \gamma$. Then \begin{equation*}\begin{split}
\sum_{i\in B}p_{i}x_{j,i} & = \sum_{i\in B} \frac{c_{i}}{|D|}x_{j,i}\\
& = \sum_{d\in D} \frac{1}{|D|}x_{j,i}\\
\implies\left|\sum_{i\in B}p_{i}x_{j,i} - \sum_{d\in D} \frac{1}{|D|}j_{d}\right| & = \left|\sum_{d\in D} \frac{1}{|D|}x_{j,i} - \sum_{d\in D} \frac{1}{|D|}j_{d}\right|\\
& = \left|\frac{1}{|D|}\sum_{d\in D} (x_{j,i} - j_{d})\right|\\
& \leq \frac{1}{|D|}\sum_{d\in D} \left|x_{j,i} - j_{d}\right|\\
& \leq \frac{1}{|D|}\sum_{d\in D} \gamma\\
& = \gamma\\
\end{split}
\end{equation*}
Next, let us show that the sensitivity of our approximation of $\mu_{j}$ is upper bounded by $\frac{M-m}{|D|}$. Notice that by adding a single point, one histogram bin will increase by 1 and the rest will be unchanged. Then for every bin $k$, \begin{equation*}\begin{split}
& \sum_{i\in B} \frac{c_{i}}{|D|+1}x_{j,i} + \frac{1}{|D|+1}x_{j,k}- \sum_{i\in B} \frac{c_{i}}{|D|}x_{j,i}\\
& \quad= \sum_{i\in B} c_i x_{j,i} \left(\frac{1}{|D|+1}-\frac{1}{|D|}\right) + \frac{x_{j,k}}{|D|+1}\\
& \quad= -\left(\sum_{i\in B} \frac{c_i x_{j,i}}{|D|^2+|D|}\right) + \frac{x_{j,k}}{|D|+1}\\
& \quad\leq - \left(\frac{m}{|D|+1}\right)+\frac{M}{|D|+1}\\
& \quad\leq \frac{M-m}{|D|}\\
& \sum_{i\in B} \frac{c_{j}}{|D|+1}x_{j,i} + \frac{1}{|D|+1}x_{j,k}- \sum_{i\in B} \frac{c_{j}}{|D|}x_{j,i}\\
& \quad= -\left(\sum_{i\in B} \frac{c_i x_{j,i}}{|D|^2+|D|}\right) + \frac{x_{j,k}}{|D|+1}\\
& \quad\geq -\left(\frac{M}{|D|+1}\right)+\frac{m}{|D|+1}\\
& \quad\geq \frac{m-M}{|D|}\\
\end{split}\end{equation*}
So we have that the sensitivity is no greater than $\frac{M-m}{|D|}$.
\end{proof}

We will proceed to use this fact to show that Welch's $t$-test will fail to detect this attack.

\begin{cor} \label{cor:welch-t-test}
    (Restatement of Corollary~\ref{cor:welch-main}) Given an audit dataset $S_{audit}$ and significance level $\alpha$, we can use Algorithm \ref{alg:attack} to construct a training dataset $S_{train}'$ such that for any feature $j$, $S_{train}'$ passes Welch's t-test when its values in feature $j$ are compared to those of $S_{audit}$ with significance level $\alpha$. 
\end{cor}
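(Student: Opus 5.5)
The plan is to take $S_{train}' = S_{audit}^k\|\delta$, where $\delta$ is the set of $2d|S_{audit}|$ perturbed points produced by Algorithm~\ref{alg:attack} and $k$ is a replication factor to be chosen. Then we show that, for every feature $j$, the Welch statistic $t$ comparing $S_{train}'$ with $S_{audit}$ tends to $0$ as $k\to\infty$. Once this holds, we fix $k$ large enough that $|t|<t_{cr}$. Replication does not affect the attack: duplicating audit points leaves the leaves produced by CART-until-homogeneous unchanged, so Theorem~\ref{thm:attack-isolation} still applies.

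First I control the numerator, the difference of sample means. By Lemma~\ref{lem:mean-est}, $\mu_j$ is $(\gamma, M-m)$-magnitude insensitive for any $\gamma>0$. Theorem~\ref{thm:metric-approx} then gives $|\mu_j(S_{audit})-\mu_j(S_{audit}^k\|\delta)|\le \eta$ for any target $\eta>2\gamma$, once $k\ge 2d(M-m)/(\eta-2\gamma)$. (Here $M,m$ bound feature $j$ over the audit data enlarged by $\pm\varepsilon$.)

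A more direct route also works, and I would use it to get an explicit bound. Each perturbed pair $r\pm\varepsilon\vec b_i$ averages to $r$. Hence $\delta$ has exactly the feature-$j$ mean of $S_{audit}$, and the mean difference is exactly $0$. That alone makes $t=0$, provided the denominator is nonzero. I would present the argument through Theorem~\ref{thm:metric-approx}, as the corollary intends, and note the exact cancellation as a remark.

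Second, I bound the denominator away from zero. The term $\sqrt{v_x/n+v_y/m}$ is at least $\sqrt{v_{audit}/|S_{audit}|}$, which does not depend on $k$. If feature $j$ is not constant on $S_{audit}$, this is a positive constant $\sigma_0$. Therefore $|t|\le \eta/\sigma_0$. If feature $j$ is constant, the exact-mean observation gives a zero numerator, and we adopt the convention that the test returns $H_0$.

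Third, I handle the critical value. As $k$ grows, the Welch degrees of freedom stay at least $\min(n,m)-1\ge |S_{audit}|-1$, since the Welch–Satterthwaite value is bounded below by the smaller sample size minus one. The critical value $t_{cr}(d,\alpha)$ decreases in $d$, so it is at least the $z$-quantile $z_{1-\alpha/2}>0$. Choosing $\eta<\sigma_0 z_{1-\alpha/2}$ and the matching $k$ then gives $|t|<t_{cr}$. Finally, we take $k$ as the maximum of these values over the $d$ features, which is finite.

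The main obstacle is this last coupling step. The $t$ statistic is not itself a magnitude-insensitive function, because its denominator involves $n,m$ and the variances change with $k$. I would therefore bound the numerator and denominator separately as above, rather than applying Theorem~\ref{thm:metric-approx} to $t$ directly. I would also check carefully the lower bound on the degrees of freedom, which is what keeps $t_{cr}$ bounded below.
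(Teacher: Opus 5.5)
Your main line is the paper's proof. You take $S'_{train}=S_{audit}^k\|\delta$, bound the mean gap via Lemma~\ref{lem:mean-est} and Theorem~\ref{thm:metric-approx}, drop the training-variance term so the denominator is at least $\sqrt{\sigma^2_{audit}/|S_{audit}|}$, bound the critical value below by the normal quantile, and take $k$ as a maximum over features. The one difference is how you get $t_{cr}\ge z_{1-\alpha/2}$. The paper proves it for every $\nu$ via Lemma~\ref{lem:t-dist-bound}, a Jensen comparison of the $t$ and normal CDFs; you cite monotonicity of $t$-quantiles in $\nu$. Either suffices. With monotonicity, however, the Welch--Satterthwaite bound $\nu\ge\min(n,m)-1$ plays no role: a lower bound on $\nu$ gives an \emph{upper} bound on $t_{cr}$. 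So the step you flag as the main obstacle is not needed.

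Your exact-cancellation remark is correct, and it is a genuinely different and stronger argument. Treat $\delta$ as a multiset, as the paper's count $|\delta|=2d|S_{audit}|$ assumes. Then the $2d$ neighbours of $r$ contribute $2(d-1)r_j+(r_j+\varepsilon)+(r_j-\varepsilon)=2d\,r_j$ to feature $j$, so $\mu_j(\delta)=\mu_j(S_{audit})$. Welch's statistic is therefore exactly $0$ for every $k\ge 1$, including the raw output of Algorithm~\ref{alg:attack}. The denominator is automatically positive, because $S'_{train}$ contains both $r_j\pm\varepsilon$, so no convention is needed. This also covers features that are constant on $S_{audit}$, where the paper's choice of $k$ divides by $\sigma_{audit,j}=0$.

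What the paper's replication route buys is generality. Theorem~\ref{thm:metric-approx} covers any magnitude-insensitive statistic, such as variances (relevant to Levene's test) or label-conditioned means as in Table~\ref{tab:stats}. For those, cancellation fails: all of $\delta$ carries label $0$, and the neighbours inflate the feature variance by $\varepsilon^2/d$. Replicating $S_{audit}$ still drives those statistics to their audit values.

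One correction to your side remark: duplicating audit points can change which splits CART chooses, so the leaves need not be unchanged. Theorem~\ref{thm:attack-isolation} still applies, because its proof uses only homogeneity and which labelled points are present.
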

Before we can prove this corollary, we will need a lemma which bounds the concentration of the Student's $t$-distribution.

\begin{lemma}\label{lem:t-dist-bound}
    If $X$ and $Z$ are random variables drawn independently from the Student's t-distribution with $\nu$ degrees of freedom and the standard normal distribution respectively, then for every $t>0$, we have
\begin{equation*}
    \Pr[|X|< t] \leq \Pr[|Z|<t]
\end{equation*}
\end{lemma}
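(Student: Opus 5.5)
We use the standard representation of the Student's $t$-distribution as a scale mixture of normals. Write $X = Z'/\sqrt{V/\nu}$, where $Z'\sim N(0,1)$ and $V\sim\chi^2_\nu$ are independent. Set $W=\sqrt{V/\nu}$, so that $X = Z'/W$ with $W>0$ independent of $Z'$. Conditioning on $W$ gives
\begin{align*}
\Pr[|X|<t] = \mathbb{E}_W\bigl[\Pr[|Z'|< tW \mid W]\bigr] = \mathbb{E}_W\bigl[g(tW)\bigr],
\end{align*}
where $g(s)=\Pr[|Z|<s]=2\Phi(s)-1$ for $s\ge 0$. The goal then reduces to showing $\mathbb{E}[g(tW)]\le g(t)=g(t\,\mathbb{E}[\cdot])$ for a suitable comparison.

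The key observation is that $g$ is concave on $[0,\infty)$: its derivative $2\phi(s)$ is decreasing for $s\ge 0$. By Jensen's inequality, $\mathbb{E}[g(tW)]\le g(t\,\mathbb{E}[W])$. Since $g$ is nondecreasing, it remains to check that $\mathbb{E}[W]\le 1$. For this we apply Jensen once more, now to the concave square root: $\mathbb{E}[W]=\mathbb{E}[\sqrt{V/\nu}]\le\sqrt{\mathbb{E}[V]/\nu}=\sqrt{\nu/\nu}=1$. Combining the two steps gives
\begin{align*}
\Pr[|X|<t]\le g(t\,\mathbb{E}[W])\le g(t)=\Pr[|Z|<t],
\end{align*}
which is the claim. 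The fact that $X$ and $Z$ are drawn independently plays no role, since the statement compares only marginal probabilities.

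The main point needing care is the conditioning step. We must justify that, given $W=w$, the event $|Z'/w|<t$ is exactly $|Z'|<tw$; this holds because $w>0$ almost surely. We must also ensure the concavity is applied only on $[0,\infty)$, where all the arguments $tW$ lie. Both points are routine, so we do not expect a genuine obstacle. The only non-obvious idea is to pass through $\mathbb{E}[W]\le 1$ rather than comparing densities directly. A density-crossing argument would also work, but it would be messier.
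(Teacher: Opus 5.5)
Your proof is correct and follows essentially the paper's argument: the normal scale-mixture representation of the $t$-distribution, Jensen on the normal CDF, then Jensen on the square root to get $\mathbb{E}[\sqrt{V/\nu}]\le 1$, and finally monotonicity. The only cosmetic difference is that the paper applies Jensen to the one-sided CDF, which is convex on $(-\infty,0)$, and then symmetrizes, whereas you apply it directly to $2\Phi-1$, which is concave on $[0,\infty)$; this skips the symmetrization step.
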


\begin{proof}
    We will write $F_{X}(t)$ to denote the CDF of random variable $X$ evaluated at $t$, and $f_{X}(t)$ the PDF. We will also write $\mathbb{E}_{X}(g(X))$ to be the expected value of $g(X)$ with randomness over $X$. Let us begin by demonstrating that for all $t<0$, we have $F_{X}(t) > F_{Z}(t)$. First, recall that if $W$ and $Y$ are drawn from the $\chi^2$ distribution with $\nu$ degrees of freedom and the standard normal distribution respectively, then $Y\sqrt{ \frac{\nu}{W} }$ is distributed according to the Student's $t$-distribution with $\nu$ degrees of freedom, so let us write $X=Y\sqrt{  \frac{\nu}{W} }$. Then according to the law of total probability, we have
\begin{equation*}\begin{split}
F_{X}(t) & = \int_{0}^\infty F_{Y}\left( t\sqrt{ \frac{w}{\nu} } \right) f(w) dw\\
& = \mathbb{E}_{W}\left( F_{Y}\left( t\sqrt{ \frac{W}{\nu} } \right) \right)
\end{split}\end{equation*}
Notice that $\frac{d^2}{dt^2}F_{Y}(t) = \frac{d}{dt}f_{Y}(t)=\frac{d}{dt} \frac{1}{\sqrt{ 2\pi }}e^{-\frac{t^2}{2}}=-\frac{t}{\sqrt{ 2\pi }}e^{-\frac{t^2}{2}}>0$ when $t<0$. Then since $t\sqrt{ \frac{W}{\nu} }$ must be less than $0$, we can apply Jensen's inequality to get
\begin{equation*}\begin{split}
F_{X}(t) & = \mathbb{E}_{W}\left( F_{Y}\left( t\sqrt{ \frac{W}{\nu} } \right) \right)\\
& \geq F_{Y}\left( \mathbb{E}_{W}\left( t\sqrt{ \frac{W}{\nu} } \right) \right)\\
& = F_{Y}\left( t\mathbb{E}_{W}\left(\sqrt{ \frac{W}{\nu} } \right) \right)
\end{split}\end{equation*}
Then since $\frac{d^2}{du^2}\sqrt{ u }=-\frac{1}{4\sqrt{ u^3 }}\leq 0$, we get that $\mathbb{E}_{W}\left( \sqrt{ \frac{W}{\nu} } \right) \leq \sqrt{ \frac{\mathbb{E}_{W}(W)}{\nu} }=\sqrt{ \frac{\nu}{\nu} }=1$. So because $t<0$, we can see that $t\mathbb{E}_{W}\left( \sqrt{ \frac{W}{\nu} } \right) \geq t$, and since $F_{Y}(u)$ is increasing, we get
\begin{equation*}\begin{split}
F_{X}(t) &\geq F_{Y}\left( t\mathbb{E}_{W}\left( \sqrt{ \frac{W}{\nu} } \right) \right)\\
& \geq F_{Y}\left( t \right)\\
\end{split}\end{equation*}
Since $f_{X}$ and $f_{Y}$ are both symmetric about $t=0$, it then follows by a symmetric argument that for all $t>0$, $F_{X}(t) \leq F_{Y}(t)$. Then we see that for any $t>0$, \begin{equation*}\begin{split}
\Pr[|X|<t] & = F_{X}(t)-F_{X}(-t)\\
& \leq F_{Y}(t) - F_{Y}(-t)\\
& = \Pr[|Y|<t]\\
& = \Pr[|Z|<t]\\
\end{split} 
\end{equation*}
Because $Y$ and $Z$ are independently and identically distributed.
\end{proof}

We are now ready to prove Corollary \ref{cor:welch-t-test}.
\begin{proof}[Proof of Corollary \ref{cor:welch-t-test}]
    A pair of datasets $D_1,D_2$ pass Welch's $t$-test on feature $j$ if \begin{equation*}
        \frac{|\mu_j(D_1)-\mu_j(D_2)|}{\sqrt{\frac{\sigma_1^2}{|D_1|}+\frac{\sigma_2^2}{|D_2|}}} \leq T_{\alpha,\nu}
    \end{equation*}
    where $\alpha$ is the desired significance level, $\nu$ is the degrees of freedom in the datasets, and $T_{\alpha,\nu}$ is the unique value such that \begin{equation*}
        \Pr_{x\sim t(\nu)}[|x|\geq T_{\alpha,\nu}]=\alpha
    \end{equation*}
    where $t(\nu)$ is the Student's $t$-distribution with $\nu$ degrees of freedom. In our case, the $t$-test compares the reference dataset $S_{audit}$ with the training dataset $S_{train}'$.

    The value of $\nu$, and thus the value of $T_{\alpha,\nu}$, depends on the size of the datasets, with the threshold $T_{\alpha,\nu}$ decreasing as the datasets grow large. However, we will use Lemma \ref{lem:t-dist-bound} to give a lower bound for $T_{\alpha,\nu}$ which is constant with respect to $|S_{train}'|$. Then, we will show that by Lemma \ref{lem:mean-est} and Theorem \ref{thm:metric-approx} we can use Algorithm \ref{alg:attack} to construct a malicious training dataset $S_{train}'$ which maintains an arbitrarily small test statistic, and in particular, a dataset such that the test statistic is below the lower bound on the threshold.

    First, let us establish a lower bound on $T_{\alpha,\nu}$. Let us define $T'_{\alpha}$ to be the unique positive value such that \begin{equation*}
        \Pr_{Z\sim \mathcal{N}(0,1)}[|Z|\geq T'_\alpha] = \alpha
    \end{equation*}
    Then recall that Lemma \ref{lem:t-dist-bound} gives us that \begin{equation*}
        \Pr_{X\sim t(\nu)}[|X|<T'_\alpha] \leq \Pr_{Z\sim \mathcal{N}(0,1)}[|Z| < T'_\alpha]
    \end{equation*}
    If we write $f_X$ and $f_Z$ to represent the probability density functions (PDFs) of $X$ and $Z$ respectively, then we get equivalently that \begin{equation*}
        \int_{-T'_{\alpha}}^{T'_{\alpha}} f_X(u)du \leq \int_{-T'_{\alpha}}^{T'_{\alpha}} f_Z(u)du
    \end{equation*}
    Then we see that \begin{equation*}
        \begin{split}
            \Pr_{Z\sim \mathcal{N}(0,1)}[|Z|\geq T'_\alpha] & = \Pr_{X\sim t(\nu)}[|X|\geq T_{\alpha,\nu}]\\
            \implies \int_{-T'_\alpha}^{T'_\alpha} f_Z(u)du & = \int_{-T_{\alpha,\nu}}^{T_{\alpha,\nu}}f_X(u)du\\
            & = \int_{-T_{\alpha,\nu}}^{-T'_{\alpha}}f_X(u)du+\int_{-T'_{\alpha}}^{T'_{\alpha}}f_X(u)du\\&\qquad+\int_{T'_{\alpha}}^{T_{\alpha,\nu}}f_X(u)du\\
            & \leq \int_{-T_{\alpha,\nu}}^{-T'_{\alpha}}f_X(u)du+\int_{-T'_{\alpha}}^{T'_{\alpha}}f_Z(u)du\\&\qquad+\int_{T'_{\alpha}}^{T_{\alpha,\nu}}f_X(u)du\\
            \implies 0 & \leq \int_{-T_{\alpha,\nu}}^{-T'_{\alpha}}f_X(u)du+\int_{T'_{\alpha}}^{T_{\alpha,\nu}}f_X(u)du\\
        \end{split}
    \end{equation*}
    Then because $f_X(x)$ is symmetric about $x=0$, this yields \begin{equation*}
        2\int_{T'_{\alpha}}^{T_{\alpha,\nu}}f_X(u)du \geq 0
    \end{equation*}and thus\begin{equation*}
        \int_{T'_{\alpha}}^{T_{\alpha,\nu}}f_X(u)du \geq 0
    \end{equation*}
    Now recall the simple result from calculus that states that if $g$ is positive valued, then \begin{equation*}
        \int_a^b g(x)dx \geq 0 \iff a\leq b
    \end{equation*}
    Then because $f_X$ is positive-valued, our prior result entails that $T_{\alpha,\nu} \geq T'_\alpha$, so $T_\alpha'$ is a lower bound on $T_{\alpha,\nu}$ that does not depend on $|S_{train}'|$.

    Next, observe that the test statistic for Welch's $t$-test has the following upper bound: \begin{equation*}
        \frac{|\mu_j(S_{train}')-\mu_j(S_{audit})|}{\sqrt{\frac{\sigma_{train}^2}{|S_{train}'|}+\frac{\sigma_{audit}^2}{|S_{audit}|}}} \leq \frac{|\mu_j(S_{train}')-\mu_j(S_{audit})|}{\sqrt{\frac{\sigma_{audit}^2}{|S_{audit}|}}}
    \end{equation*}
    Furthermore, Lemma \ref{lem:mean-est} implies that for any $\varepsilon>0$, we can choose $\gamma < \frac{\varepsilon}{2}$ such that $\mu_j$ is $(\gamma,c)$-magnitude insensitive, and so by Theorem \ref{thm:metric-approx}, Algorithm \ref{alg:attack} yields a dataset $S_{train}'$ such that $|\mu_j(S_{train}')-\mu_j(S_{audit})|\leq \varepsilon$ when appropriately parameterized. 
    Then let $\varepsilon = T'_\alpha\frac{\sigma_{audit}}{2\sqrt{|S_{audit}|}}$. This produces the result that \begin{equation*}
        \begin{split}
            \frac{|\mu_j(S_{train}')-\mu_j(S_{audit})|}{\sqrt{\frac{\sigma_{train}^2}{|S_{train}'|}+\frac{\sigma_{audit}^2}{|S_{audit}|}}} &\leq \frac{2\varepsilon}{\sqrt{\frac{\sigma_{audit}^2}{|S_{audit}|}}}\\
            & = \frac{2}{\sqrt{\frac{\sigma_{audit}^2}{|S_{audit}|}}}T'_\alpha\frac{\sigma_{audit}}{2\sqrt{|S_{audit}|}}\\
            & = T'_\alpha\\
            & \leq T_{\alpha,\nu}
        \end{split}
    \end{equation*}
    which passes the $t$-test for feature $j$. Finally, by choosing $k=\max_j \frac{4d(M_j-m_j)\sqrt{|S_{audit}|}}{T'_\alpha \sigma_{audit,j}}$ we get for every feature $i$ that $|\mu_i(S_{train}')-\mu_i(S_{audit})|\leq 2\min\limits_{j} T_\alpha' \frac{\sigma_{audit,j}}{2\sqrt{|S_{audit}|}}\leq 2 T'_\alpha \frac{\sigma_{audit,i}}{2\sqrt{|S_{audit}|}}$, so $S_{train}'$ passes the $t$-test for feature $i$.
\end{proof}

\fi
\section{Case Study}\label{sec:case_study}
We now discuss a number of state of the art works that consider the problem of privacy-preserving auditing.
These works are focused on different auditing functions (accuracy, fairness, etc), different types of machine learning models, and their security models they use are not necessarily aligned. We now briefly outline the techniques and security guarantees that are claimed in each of the works. Our goal is not to provide an exhaustive survey, but rather to illustrate the landscape through recent works that are broadly representative of the field—even though they span different years, venues, and communities (ranging from machine learning to security).

\subsection{zkCNN: Zero Knowledge Proofs for Convolutional Neural Network Predictions and Accuracy}
\noindent\textbf{Goal and Solution Details.}
\cite{LiuXZ21} propose zkCNN, a zero-knowledge proof protocol for inference and accuracy of convolutional neural networks (CNNs). The core contribution is a novel sumcheck protocol (which is the key ingredient in many zero-knowledge system) that is tailored to two-dimensional convolutions.

\noindent\textbf{Security Model.} zkCNN considers the standard setting with a prover and a verifier.
Either party can be malicious. \cite{LiuXZ21}'s security definition for inference is a zero-knowledge-style definition, and the scheme is required to satisfy correctness, soundness, and zero-knowledge. Similar to \cite{ZhangFZS20}, \cite{LiuXZ21}'s soundness intuitively states that a prover should not be able to output a commitment to a model and provide a proof $\pi$, prediction $y$ and datapoint $X$ such that the verifier accepts the proof, and at the same time, the committed model's prediction for $X$ is not equal to $y$. If instantiated with a specific commitment scheme, \cite{LiuXZ21}'s scheme further satisfies knowledge soundness, the stronger version of soundness where there exists an extractor to extract the CNN parameters from a valid proof and prediction with overwhelming probability. \cite{LiuXZ21} do not provide a security definition for their proof of accuracy.

\noindent\textbf{Discussion.}
As \cite{LiuXZ21} do not give a security definition for their proof of accuracy, the formal security guarantee they provide is not fully clear. However, the authors indicate that their scheme can be used to prove the accuracy on a public dataset. This scenario falls within our framework of definition~\ref{def:forge-audit-new}, and is vulnerable to the same style of attack as outlined in \S\ref{sec:attack-methods}.

\noindent\textbf{Expanding/Restoring Security.} Similar to Zhang et al.~\cite{ZhangFZS20}, \cite{LiuXZ21}'s construction can be adapted using our techniques from \S~\ref{sec:auditcsp-accuracy-overview} to satisfy the guarantee with respect to new datasets drawn from the same distribution. In particular, Liu et al.'s scheme can be plugged in as the ZKP of our protocol $\auditcsp$ in \S\ref{sec:construction-theorem-overview}.

\subsection{Confidential-PROFITT: Confidential PROof of FaIr Training of Trees}
\noindent\textbf{Goal and Solution Details.} Shamsabadi et al.~\cite{shamsabadi2022profitt} propose Confidential-PROFITT, a framework for certifying fairness of decision trees while preserving confidentiality of both the model and the training data. Confidential-PROFITT consists of a zero-knowledge-friendly decision tree learning algorithm that, when executed honestly, enforces fairness by design—up to a tunable degree controlled by a parameter. On top of this, Confidential-PROFITT designs a zero-knowledge proof system to verify fairness of a decision tree. The proof requires the model provider to commit to both the model and its training data, then prove in zero-knowledge that the paths taken by the committed training points through the (committed) decision tree satisfy specified fairness bounds. In terms of fairness metrics, Confidential-PROFITT supports demographic parity and equalized odds.

\noindent\textbf{Security Model.} Confidential-PROFITT considers a malicious model provider (that, however, is assumed to commit to the training data honestly) and
a malicious auditor, and obtains standard zero-knowledge proof properties (correctness, soundness, zero-knowledge) with respect to a statement that can be summarized roughly as follows ``With respect to a private dataset \emph{chosen by the model provider}, the committed model satisfies certain fairness guarantees''. 

\noindent\textbf{Discussion.} 
Confidential-PROFITT assumes that the model provider honestly commits to the training data. Under this assumption, the corresponding zero-knowledge proof certifies that the resulting model inherits the fairness guarantees of the fair learning algorithm introduced in Confidential-PROFITT (which the authors show indeed improves fairness). However, if the provider is not restricted to committing to the true training data, Confidential-PROFITT is vulnerable to data-forging attacks, as the provider can choose the audit dataset before committing to the model. 

\noindent\textbf{Expanding/Restoring Security.} We believe Confidential-PROFITT's zero-knowledge proof of fairness could potentially be combined with our techniques from \S~\ref{sec:auditcsp-accuracy-overview} if one would swap the training data for the data sampled by the auditor. However, the adaptation is not as straight-forward as that for Zhang et al.~\cite{ZhangFZS20} and Liu et al.~\cite{LiuXZ21}. This is because Confidential-PROFITT verifies not only the actual fairness guarantee, but also additional fairness checks for intermediate nodes in the decision tree. A careful analysis is required to ensure that the generalization of  fairness properties \iffull(similar to what we show for demographic parity in \S~\ref{sec:auditcsp-demographic-parity})\fi also extends to intermediate nodes. We leave this as an interesting
direction for future work.

\subsection{P2NIA: Privacy-Preserving Non-Iterative Auditing}
\noindent\textbf{Goal and Solution Details.} Bourrée et al.~\cite{p2nia} propose a novel auditing scheme that enables one-shot verification of a model’s group fairness while preserving privacy for both parties: the model provider is not required to open-source the model, and the auditor need not disclose any private information to support the audit. The main contribution of \cite{p2nia} is a mechanism that enables auditing without requiring the auditor to supply the audit dataset. Specifically, the model provider supplies a dataset together with the corresponding predictions (both in the clear), which the auditor then uses to verify the fairness condition. To construct this dataset, model provider draws on a portion of its internal training data. To preserve confidentiality of this data, it is not shared directly. Instead, model provider feeds it into a synthetic data generation algorithm, and the resulting synthetic dataset is what is sent to the auditor. 

\noindent\textbf{Security Model.} The work does not provide a formal security model. It is set up in the black-box setting and assumes that the auditor does not know the distribution of the model owner's training data.

\noindent\textbf{Discussion.} As \cite{p2nia} do not utilize cryptographic techniques to prove that the outputs actually correspond to the given inputs, the prover can easily cheat by simply adjusting the labels it supplies for the constructed dataset. However, even if one were to strengthen the scheme by adding a secure proof of training (e.g., \cite{pappas2024sparrow}) together with inference proofs (as in \cite{ZhangFZS20}), the fact that the model owner knows the dataset that is being used for the audit means that the solution falls within our framework of Definition~\ref{def:forge-audit-new}, and is thus vulnerable to data-forging attacks. An interesting open question would be to see if, since in this scenario the model owner not only knows, but directly influences the audit dataset, there can be an even simpler attack.

\subsection{OATH: Efficient and Flexible Zero-Knowledge Proofs of End-to-End ML Fairness}
\noindent\textbf{Goal and Solution Details.} \cite{franzese2024oath} present OATH, a model-agnostic fairness auditing framework. The core idea in OATH is to leverage clients (who query the model during deployment) to participate in the auditing process. OATH operates in two phases: (i) a certification protocol between the model provider and the auditor, and (ii) a query authentication protocol involving model provider, inference clients, and auditor (dubbed verifier in OATH). The first phase follows the standard certification flow we describe in \S\ref{sec:mlback}. In the second phase, the auditor receives commitments to client queries and the corresponding model predictions. These commitments can later be verified in zero knowledge for fairness, correctness, and consistency with the certified model.

\noindent\textbf{Security Model.} OATH considers three fully malicious entities: a model provider, inference clients, and an auditor. These parties are assumed not to collude with each other. The auditor assesses model fairness both with respect to the calibration dataset and the clients queries. The system provides standard correctness, soundness, and zero-knowledge with respect to these two datasets. 

\noindent\textbf{Discussion.} The calibration dataset which is used in the certification protocol between the model provider and the auditor might be supplied by either party. If the calibration dataset is chosen by the prover, same as P2NIA and Confidential-PROFITT, the corresponding fairness check is vulnerable to data forging. However, in contrast to prior works, OATH can fall back on guarantees based on client’s queries. 

\subsection{FairProof: Confidential and Certifiable Fairness for Neural Networks}
\noindent\textbf{Goal and Solution Details.} \cite{yadav2024fairproof} propose FairProof, a fairness certification approach that maintains confidentiality of the model. In contrast to Confidential-PROFITT and OATH, which focus on group fairness metrics, FairProof considers local individual fairness. This allows \cite{yadav2024fairproof} to issue a personalized certificate to every client. 

\noindent\textbf{Security Model.}
FairProof system involves a malicious model provider and malicious clients (who wish to learn model details/training data), and considers standard correctness, soundness, and zero-knowledge properties. The corresponding statement is roughly as follows: ``Given a datapoint $x$, the model's output is $y$ and a lower bound on an individual fairness parameter for $x$ is $\epsilon_x$''.

\noindent\textbf{Discussion.}
The usage of a specific fairness metric (local individual fairness) allows FairProof to provide per-client certificates of fairness, and escape the problems that arise from the usage of reference datasets (including vulnerability to data-forging attacks). On the flip side, FairProof requires to generate fairness certificates during deployment and does not provide any fairness guarantees prior to deployment.

\subsection{Scaling up Trustless DNN Inference with Zero-Knowledge Proofs}
\noindent\textbf{Goal and Solution Details.} Kang et al.~\cite{kangdnn} propose a zero-knowledge-based framework for verifying DNN inference and accuracy. Their key contribution is a careful translation of DNN specifications into arithmetic circuits suitable for zero-knowledge proofs. The system also introduces economic incentives to support ML-as-a-service. Concretely, when verifying accuracy, the model provider first commits to the model, and the client commits to the test set. Both parties then deposit monetary collateral into an escrow. The client reveals the test set, and the provider must produce a zero-knowledge proof that the committed model meets the claimed accuracy. If the provider fails or refuses to prove the required accuracy, it forfeits its collateral; otherwise, the client pays for the service.

\noindent\textbf{Security Model.}
Kang et al. study the standard two-party setting with a \emph{prover} (model provider) and a \emph{verifier} (client), either of whom may be malicious. Cryptographically, they aim for the standard zero-knowledge proof properties: \emph{completeness}, \emph{knowledge soundness}, and \emph{zero knowledge}. They further consider incentives, showing that—under certain assumptions—honest model providers and clients are motivated to participate in the accuracy verification protocol, while malicious parties are discouraged.

\noindent\textbf{Discussion.} In terms of cryptographic guarantees, Kang et al. gets the core design right: their protocol for proofs of accuracy closely follows the framework outlined in \S\ref{sec:sec-definition} and is not vulnerable to our data-forging attacks. However, \cite{kangdnn} provide no formal guarantees about accuracy on data outside the audited set.\iffull It would be interesting to perform an analysis similar to that in \S\ref{sec:auditcsp-accuracy} given their constraints.\fi

\subsection{ezDPS: An Efficient and Zero-Knowledge Machine Learning Inference Pipeline}
\noindent\textbf{Goal and Solution Details.} Wang and Hoang~\cite{WangH23} introduce ezDPS, a pipeline for zero-knowledge proofs of inference correctness and accuracy above a specified threshold. They construct arithmetic circuit gadgets for key ML operations, including exponentiation, absolute value, and array max/min, and further devise optimized methods for proving Discrete Wavelet Transform, Principal Component Analysis, and multi-class Support Vector Machines with various kernel functions using an efficient set of arithmetic constraints.

\noindent\textbf{Security Model.} \cite{WangH23} consider two mutually distrusting parties -- a malicious server and a semi-honest client, who follows the protocol but aims to learn information about the model's parameters. For their inference pipeline, they consider standard definitions of correctness, soundness, and zero-knowledge (similar to those by \cite{ZhangFZS20} and \cite{kangdnn}). \cite{WangH23} do not provide a security definition for their proof of accuracy.

\noindent\textbf{Discussion.}
Similar to Liu et al.~\cite{LiuXZ21}, as \cite{WangH23} do not provide a security definition for their proof of accuracy, the precise security guarantee they achieve is somewhat unclear. However, \cite{WangH23} indicate that their scheme can be used to prove the accuracy on a public dataset, which falls within our framework of definition~\ref{def:forge-audit-new}. This instantiation of their method is vulnerable to the same style of attack as outlined in \S\ref{sec:attack-methods}.

\noindent\textbf{Expanding/Restoring Security.} Similar to Zhang et al.~\cite{ZhangFZS20} and Liu et al.~\cite{LiuXZ21}, the scheme of Wang and Hoang can be adapted using our techniques from \S~\ref{sec:auditcsp-accuracy-overview} to satisfy the guarantee with respect to freshly sampled datasets. In particular, their protocol  can be plugged in as the ZKP of our protocol $\auditcsp$ in \S\ref{sec:construction-theorem-overview}.

\iffull
\section{Details of Secure CMC Framework}\label{sec:construction}

\subsection{Auditing Protocol and Security Definitions}\label{sec:securitydefs}
In this section, we formally define the security properties of an auditing protocol.

\noindent\textbf{Completeness}
An auditing protocol $\Pi$ is complete with error $p$ if for any model $h$ such that $F(h,\settrain)=1$, the following holds:	
\begin{align*}
	\probrv{b = 1}{(\com, b)\gets\interact{\prove(h,\settrain)}{\audit}} \geq 1-p
\end{align*}

\noindent\noindent\textbf{Binding}
An auditing protocol $\Pi$ is \emph{computationally binding} if for any PPT adversary $\cA$, the following is negligible in $\secp$:
\begin{align*}
    \probrv{\commit(h||\settrain;\rho) \\ = \commit(h'||\settrain';\rho') \\ \land\; (h \neq h' \lor \; \settrain\neq\settrain')}{(h,h',\settrain, \settrain', \rho,\rho')\gets\cA(1^\secp)}
\end{align*}
Note that we require the binding property to hold both for the model and the training dataset.
This can be easily achieved by separately committing to the model and the training dataset with a standard binding commitment scheme (\S~\ref{sec:comsecurity}), and then outputting the concatenation of the two commitments.

\medskip
\noindent\textbf{$\tilde{F}$-Relaxed Knowledge Soundness} 
An auditing protocol $\Pi$ is $\tilde{F}$-relaxed knowledge sound with knowledge error $\kappa$ if for any PPT adversary $\prove^*$, there exists an expected polynomial time extractor $\ext_{\prove^*}$ such that the following holds:
\begin{align*}
	\pext \geq \pacc - \kappa
\end{align*}
where 
\begin{align*}
	\pacc = \probrv{b=1}{(\com,b)\gets\interact{\prove^*}{\audit}}
\end{align*}
and $\pext$ is defined as
\begin{align*}
\Pr\left[\begin{aligned}
(\com = \commit(h||\settrain;\rho) &\land\ \tilde{F}(h,\settrain)= 1 ) \\
	                             &\lor \\
		\big(\com = \commit(h||\settrain;\rho) &= \commit(h'||\settrain',\rho') \\
		   & \land (h \neq h' \lor \settrain\neq\settrain') \big)
\end{aligned}\right]
\end{align*}
where $(\com,b)\gets\interact{\prove^*}{\audit}$ and $(h,\settrain,\rho,h',\settrain',\rho')\gets\ext_{\prove^*}(\com)$,

Intuitively, this notion guarantees that if a cheating prover $\prove^*$ convinces the auditor to accept with non-negligible probability, then it must either know a model $h$ and a training dataset $\settrain$ satisfying a predicate $\tilde{F}$, \textbf{or} find two distinct openings to the same commitment $\com$.
As the latter event happens with negligible probability if $\commit$ is computationally binding, this implies that $\prove^*$ must know a valid model $h$ and a training dataset $\settrain$ satisfying $\tilde{F}$.
We call this property ``$\tilde{F}$-relaxed'' knowledge soundness because the predicate $\tilde{F}$ is a relaxation of the original predicate $F$.
This is necessary because the auditor only checks the property $f$ on a \emph{finite sample}, which may not perfectly reflect the property $F$ on the \emph{underlying distribution}.
In our concrete instantiations, we will quantify the gap between $F$ and $\tilde{F}$ (see \S~\ref{sec:auditcsp-accuracy} and \S~\ref{sec:auditcsp-demographic-parity}).

\medskip
\noindent\textbf{Zero Knowledge} 
Let $\view_{\audit}^{\prove(h,\settrain)}$ be a string consisting of all the incoming messages that $\audit$ receives from $\prove$ during the interaction $\interact{\prove(h,\settrain)}{\audit}$, and $\audit$'s random coins.
$\Pi$ is \emph{zero-knowledge} against semi-honest auditor if there exists a PPT simulator $\simul$ such that for any PPT adversary $\cA$, and any $h$ such that $F(h,\settrain)=1$, the following is negligible in $\secp$.

\begin{align*}
	\left|
	\begin{aligned}
		&\probrv{b=1}{
			& b\gets\cA(\view_{\audit}^{\prove(h,\settrain)})}\\
		- &\probrv{b=1}{
			& \view' \gets\simul(1^\secp); 
			& b\gets\cA(\view')}
	\end{aligned}\right|
\end{align*}

\subsection{Construction of Auditing Protocol}\label{sec:constructiondetails}
We construct a commit-sample-prove auditing scheme $\auditcsp$. 
While we focus on a protocol checking $F$ on a hypothesis $h$ only, the construction below can be naturally extended to a more complex $F$ that additionally takes a training dataset as input. 
Let $\commit$ be a binding commitment scheme (\ref{sec:comsecurity}) and $\ZKP= (\cP,\cV)$ be a ZK proof system for the following relation $\rel$: for a pair of public statement $\stm = (\com,S_\text{audit})$ and private witness $\wit = (h,\rho)$, we have $(\stm,\wit)\in\rel \iff f(h,S_\text{audit})=1\land \com = \commit(h;\rho)$. 
We define a commit-sample-prove auditing protocol $\auditcsp=(\commit, \prove, \audit)$ using an emprical predicate $f$ and a distribution $\mathcal{D}$ over a query space $\querysp = \{(x_i,y_i)\}_{i=1}^m$ as follows:



\noindent\underline{$\interact{\prove(h)}{\audit}$}
\begin{enumerate}
    \item $\prove$ computes $\com=\commit(h;\rho)$ using a uniformly random string $\rho\in\{0,1\}^{l_{\commit}}$ and sends $\com$ to $\audit$.
    \item $\audit$ samples $S_\text{audit} \sim \mathcal{D}^n$ and sends it to $\prove$.
    \item $\prove$ and $\audit$ execute $b\gets\interact{\cP(\wit)}{\cV}(\stm)$, where $\stm = (\com,S_\text{audit})$ and $\wit = (h,\rho)$. Here, $\prove$ plays $\cP$ and $\audit$ plays $\cV$. \label{step:zkp-interaction}
    \item $\prove$ outputs $\com$, while $\audit$ outputs $b$.
\end{enumerate}

We now state our main theorem regarding the security of $\auditcsp$.
The result is stated for a general auditing task defined by a predicate $F$ and an empirical predicate $f$.

\begin{thm}\label{thm:auditcsp-general}
Suppose the empirical predicate $f$, the distributional predicate $F$, and the relaxed distributional predicate $\tilde{F}$ satisfy the following false negative and false positive rate bounds for every model $h$:
\begin{align*}
	& \Pr_{\setaudit \sim \mathcal{D}^n}[f(h,\setaudit)\neq 1 \mid F(h)=1] \leq \pfnr \\
	& \Pr_{\setaudit \sim \mathcal{D}^n}[f(h,\setaudit)=1 \mid \tilde{F}(h) \neq 1] \leq \pfpr
\end{align*}
Then $\auditcsp$ is a secure auditing protocol for $F$ satisfying the following properties: 
\begin{itemize}
	\item If $\ZKP$ is perfectly complete, then $\auditcsp$ is complete with error $\pfnr$ for any model $h$ such that $F(h)=1$.
	\item If the underlying commitment scheme $\commit$ is computationally binding, then $\auditcsp$ is computationally binding.
	\item If $\ZKP$ is knowledge sound with knowledge error $\kappa$, then $\auditcsp$ is $\tilde{F}$-relaxed knowledge sound with knowledge error $\kappa  + \pfpr$.
	\item If $\pfnr$ is negligible in $\secp$\footnote{We require this to realize negligible distinguishing advantage as in the standard zero-knowledge definition. If $\pfnr$ is small but not negligible, one could relax the zero-knowledge property for CMC to allow for non-negligible distinguishing advantage. This relaxation is analogous to knowledge soundness with potentially non-negligible knowledge error including $\pfpr$.}, $\commit$ is hiding, and $\ZKP$ is honest verifier zero-knowledge, then $\auditcsp$ is zero-knowledge against semi-honest auditor.
\end{itemize}
\end{thm}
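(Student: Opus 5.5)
We prove the four properties separately; completeness, binding and zero knowledge are short reductions, and the real work is $\tilde{F}$-relaxed knowledge soundness.

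\textbf{Completeness.} Fix $h$ with $F(h)=1$ and an honest prover. The honest prover always holds a valid opening $(h,\rho)$ of $\com$. Whenever $f(h,\setaudit)=1$, we have $(\stm,\wit)\in\rel$, so perfect completeness of $\ZKP$ gives $b=1$. Hence $\Pr[b=0]\le\Pr_{\setaudit}[f(h,\setaudit)\neq1\mid F(h)=1]\le\pfnr$.

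\textbf{Binding.} A CMC binding break directly outputs two distinct openings of the same commitment. If $h$ and $\settrain$ are committed separately and concatenated, a break gives a collision in one component. This is an immediate reduction.

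\textbf{Zero knowledge.} The simulator commits to a dummy model, $\com'\gets\commit(h_0;\rho')$. It then samples $\setaudit\sim\mathcal{D}^n$ itself, playing the semi-honest auditor's coins, and runs the HVZK simulator $\cS(\com',\setaudit)$. We argue via hybrids.
\begin{itemize}
\item[(i)] Real execution.
\item[(ii)] Condition on the event $f(h,\setaudit)=1$. This changes the distribution by at most $\pfnr$, which is negligible by assumption.
\item[(iii)] Replace the real ZK transcript with $\cS(\com,\setaudit)$. This is indistinguishable by HVZK, since $(\stm,\wit)\in\rel$ holds in hybrid (ii).
\item[(iv)] Replace $\com$ with a commitment to $h_0$. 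This is indistinguishable by hiding; the reduction can run $\cS$ without knowing $\rho$.
\end{itemize}

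\textbf{Knowledge soundness.} Let $\prove^*$ be a cheating prover. We view it as a ZK-prover $\cP^*=(\cP_0,\cP_1)$: $\cP_0$ runs $\prove^*$ to obtain $\com$, samples $\setaudit\sim\mathcal{D}^n$ itself, and outputs $\stm=(\com,\setaudit)$. Since $\setaudit$ is sampled after $\com$, its distribution is independent of the prover's state. Knowledge soundness of $\ZKP$ then yields an extractor $\cE$ satisfying
\begin{align*}
\Pr\bigl[f(h,\setaudit)=1\land\com=\commit(h;\rho)\bigr]\ge\pacc-\kappa,
\end{align*}
where $(h,\rho)\gets\cE(\stm)$.

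The obstacle is that the extracted $h$ may depend on $\setaudit$. A prover could open to different models on different samples, so we cannot immediately apply the false-positive bound to a fixed $h$.

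We define $\ext_{\prove^*}(\com)$ as follows. It samples $\setaudit$, runs $\cE$ to obtain $(h,\rho)$, then rewinds with a fresh independent $\setaudit'$ to obtain $(h',\rho')$. If both extractions are valid openings with $h\neq h'$, it outputs the binding break. Otherwise it outputs $h$.

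For the probability analysis, fix $\com$ and let $\epsilon_{\com}$ be the conditional success probability of $\cE$. Let $E$ be the event that the first run succeeds, i.e., $f(h,\setaudit)=1$ and the opening is valid, and consider the case $\tilde{F}(h)\neq1$. There are two possibilities:
\begin{itemize}
\item Some model is opened consistently across samples. Then $f$ must accept that model on a fresh sample, and the false-positive bound says this occurs with probability at most $\pfpr$.
\item No model is opened consistently. Then the second run yields $h'\neq h$, which is a binding break.
\end{itemize}

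To make this rigorous, let $\bar h$ denote the (unique, unless binding breaks) model opened on successful runs. We bound the bad event $\{E\land\tilde{F}(h)\neq1\land\text{no break}\}$ by $\Pr_{\setaudit}[f(\bar h,\setaudit)=1\mid\tilde{F}(\bar h)\neq1]\le\pfpr$, using independence of $\setaudit$ from $\com$. Summing the cases gives $\pext\ge\pacc-\kappa-\pfpr$.

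Handling the case where different successful runs open to different models is the delicate step. I would first try the standard ``interleaved probabilistic test'' argument: condition on $\com$ and use a Jensen/Cauchy–Schwarz-type inequality ($\Pr[\text{two successes}]\ge\epsilon_{\com}^2$) to show that either collisions occur or a single model carries the success mass. If this squared loss appears, a variant would repeat the rewinding until success, which keeps expected polynomial time, in order to recover the linear bound $\kappa+\pfpr$.
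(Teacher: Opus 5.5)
Your completeness, binding and zero-knowledge arguments are fine and match the paper's. Your hybrid order for zero knowledge (HVZK first, then hiding) is actually cleaner than the paper's. You only need one more statistical step at the end that removes the conditioning on $f(h,\setaudit)=1$, at a cost of another $\pfnr$.

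The gap is in $\tilde{F}$-relaxed knowledge soundness, which is the only nontrivial part. The extractor you actually define rewinds once, and for that extractor the claimed bound fails. Fix the prover's coins (hence $\com$), and let $\epsilon$ be the success probability of $\cE$ over $(r_\cE,\setaudit)$. Now suppose every successful extraction opens $\com$ to a model with $\tilde{F}\neq 1$, and different samples give different models.
\begin{itemize}
\item Your extractor then meets the $\pext$ event only when both runs succeed, i.e.\ with probability about $\epsilon^2$.
\item You need about $\epsilon-\pfpr$.
\item The event ``first run succeeds with a bad $h$, second run fails'' has mass $\epsilon(1-\epsilon)$, and you silently count it under ``no break''.
\end{itemize}
Your ``unique $\bar h$'' step hides exactly this. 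A single execution that finds no collision does not make the opened model unique. Nor can the first $h$ be fed into the false-positive bound, because $f(h,\setaudit)=1$ was checked on the very sample from which $h$ was extracted. The bound $\pfpr$ only controls the event that an \emph{independent} sample also certifies that same $h$, and observing that event requires a second \emph{successful} extraction.

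The missing ingredient is the variant you mention only in your last sentence, together with its analysis; this is the core of the paper's proof. After a first success (event $E$), rewind with fresh $(r'_\cE,S')$ but the \emph{same} prover coins $r_{\cP}=i$, and repeat until a second extraction succeeds. Both runs then have the same per-trial success probability $\epsilon_i$. Hence the expected number of calls to $\cE$ is at most $1+\epsilon_i\cdot\frac{1}{\epsilon_i}\le 2$ for each $i$, which is what keeps this loop expected polynomial time. The second success is distributed as $(r'_\cE,S')$ conditioned on lying in the good set $\mathcal{G}_i$, and $h'=h$ forces $f(h,S')=1$. Therefore
\begin{align*}
&\epsilon_i\cdot\Pr[h'=h\land\tilde{F}(h)\neq1\mid r_{\cP}=i\land E]\\
&\quad\le \Pr[f(h,S')=1\land\tilde{F}(h)\neq1\mid r_{\cP}=i\land E]\le\pfpr .
\end{align*}
Here the factor $1/\epsilon_i$ from conditioning on $\mathcal{G}_i$ cancels against $\Pr[E\mid r_{\cP}=i]=\epsilon_i$. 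Averaging over $i$ gives $\Pr[E\land h'=h\land\tilde{F}(h)\neq1]\le\pfpr$, and hence $\pext\ge\Pr[E]-\pfpr\ge\pacc-\kappa-\pfpr$. The Jensen/Cauchy--Schwarz route you propose first would only give a quadratic relation between $\pext$ and $\pacc$, not the stated knowledge error $\kappa+\pfpr$.
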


\begin{proof} 
Binding trivially follows from the computational binding property of the underlying commitment scheme.

\noindent\underline{Completeness:}
By the assumption on $f$ and $F$, an honest auditing prover $\prove$ fails to convince the auditor playing verifier $\cV$ after receiving fresh $\setaudit$ with probability at most $\pfnr$.
Since $\commit$ and $\ZKP$ are perfectly complete, an honest $\ZKP$ prover $\cP$ given any valid witness always convinces the verifier. 
Thus, the overall completeness error is at most $\pfnr$.

\medskip
\noindent\underline{Knowledge Soundness:}
The protocol can be viewed as a commit-and-prove zero-knowledge proof with interleaved probabilistic tests on the statement. This approach is common in lattice-based zero-knowledge proofs (Cf. Theorem 5.1.6 of \cite{Nguyen}).

Let $\prove^*$ be any PPT adversary. 
We denote by $\cP_1$ the interactive ZK prover algorithm that $\prove^*$ invokes in Step \ref{step:zkp-interaction} to prove the statement $\stm$ fixed by the previous steps.  
Moreover, denote by $R_\cE$ (resp. $R_{\cP}$) the randomness space of the extractor $\cE$ (resp. prover $\cP_1$) for $\ZKP$.
We first construct the following extractor: 

\noindent\underline{$\ext_{\prove^*}$}
\begin{enumerate}
	\item Run $\prove^*$ to get $\com$.
	\item Sample $S \sim \mathcal{D}^n$, $r_\cE \gets R_\cE$, and $r_{\cP} \gets R_{\cP}$.
	\item Let $\cP_0$ be the algorithm that outputs $\stm=(\com,S)$ as a statement and $\cP=(\cP_0,\cP_1)$, where $\cP_1$'s randomness is fixed to $r_{\cP}$. Run $\cE_{\cP}(\stm;r_\cE)$ to extract the witness $(h,\rho)$. \label{step:extract-first}
	\item If $f(h,S)\neq 1$ or $\com\neq\commit(h;\rho)$, abort. 
	\item Repeat the following process: \label{step:repeat}
	\begin{enumerate}
		\item Sample $S' \sim \mathcal{D}^n$ and $r'_\cE \gets R_\cE$.
		\item Let $\cP'_0$ be the algorithm that outputs $\stm'=(\com,S')$ as a statement and $\cP'=(\cP'_0,\cP_1)$, where $\cP_1$'s randomness is fixed to $r_{\cP}$. Run $\cE_{\cP'}(\stm';r'_\cE)$ to extract the witness $(h',\rho')$.
		\item If $f(h',S')= 1$ and $\com=\commit(h';\rho')$, terminate and output $(h, \rho, h', \rho')$
		\item Else, go to step (a).
	\end{enumerate}
\end{enumerate}

\noindent\textbf{Running time:} 
Let $T$ be the random variable counting the number of calls to the inner extractor $\cE$ until termination.
For each fixed $\com$ and prover's randomness $i \in R_{\cP}$ we denote by $\epsilon_i$ the probability that $\cE_{\cP}((\com,S);r_{\cE})$ successfully outputs $(h,\rho)$ with $f(h,S)=1$ and $\com=\commit(h;\rho)$, where the probability is taken over $r_{\cE}\gets R_\cE$ and $S\sim\mathcal{D}^n$.
Denote by $E$ the event that $\cE_{\cP}$ successfully outputs a valid witness at Step~\ref{step:extract-first}.
We now evaluate the expected running time of $\ext$ as follows:

\begin{align*}
	\mathbb{E}[T] &= \sum_{i\in R_{\cP}} \big(\mathbb{E}[T \,|\, r_{\cP} = i \land E]\cdot \Pr[r_{\cP}=i \land E] \\
	&\qquad + \mathbb{E}[T \,|\, r_{\cP} = i \land \lnot E]\cdot \Pr[r_{\cP}=i \land \lnot E]\big)  \\
	&= \frac{1}{|R_{\cP}|} \sum_{i\in R_{\cP}} \big(\mathbb{E}[T \,|\, r_{\cP} = i \land E]\cdot \Pr[E \,|\, r_{\cP}=i] \\
	&\qquad + \mathbb{E}[T \,|\, r_{\cP} = i \land \lnot E]\cdot \Pr[\lnot E \,|\,r_{\cP}=i ]\big) \\
	&= \frac{1}{|R_{\cP}|} \sum_{i\in R_{\cP}} \big(\mathbb{E}[T \,|\, r_{\cP} = i \land E]\cdot \epsilon_i \\
	&\qquad + \mathbb{E}[T \,|\, r_{\cP} = i \land \lnot E]\cdot (1-\epsilon_i)\big) \\
	&\leq \frac{1}{|R_{\cP}|} \sum_{i\in R_{\cP}} \big( \tfrac{1}{\epsilon_i} \cdot \epsilon_i + 1\cdot (1- \epsilon_i)\big) \\
	&= 2
\end{align*}
where we used the fact that $\mathbb{E}[T \,|\, r_{\cP} = i \land \lnot E]=1$ since the algorithm terminates after the initial extraction fails.
Moreover, since the underlying $\ZKP$ is knowledge sound, each call to $\cE$ runs in expected polynomial time.
Overall, we conclude that $\ext$ runs in expected polynomial time.

\medskip
\noindent\textbf{Knowledge error:}
We define the following events:
\begin{itemize}
	\item $E$: $\cE_{\cP}(\stm;r_\cE)$ succeeds at Step~\ref{step:extract-first}
	\item $E'$: $\ext_{\prove^*}$ outputs valid $(h,\rho,h',\rho')$ 
	\item $E_1$: $E \land E'$
	\item $E_2$: $h = h'$
	\item $E_3$: $\tilde{F}(h) = 1$
\end{itemize}
Our goal is to relate the success probability $\pext$ of the extractor to $\pacc$, where 
\begin{align*}
	\pext &\geq \Pr[E_1 \land (\lnot E_2 \lor E_3)] \\
	\pacc &= \probrv{b=1}{(\com,b)\gets\interact{\prove^*}{\audit}}	
\end{align*}

To this end, we first rewrite (the lower bound of) $\pext$ as follows:
\begin{align*}
	\pext &\geq \Pr[E_1 \land (\lnot E_2 \lor E_3)] = \Pr[E_1] - \Pr[E_1 \land E_2 \land \lnot E_3]
\end{align*}
We now bound $\Pr[E_1]$ and $\Pr[E_1 \land E_2 \land \lnot E_3]$ separately.

\noindent\textbf{Bounding $\Pr[E_1]$:}
We rewrite $\Pr[E_1]$ as follows:
\begin{align*}
	\Pr[E_1] = \Pr[E'\,|\,E]\cdot\Pr[E]  \geq \pacc - \kappa
\end{align*}
where $\Pr[E'\,|\,E]=1$ since if $E$ happens, then $\epsilon_i\neq 0$ and the extractor always terminates in expected polynomial time and outputs $(h,\rho,h',\rho')$ at Step~\ref{step:repeat}.
The last inequality follows from the definition of $\pacc$ and the knowledge soundness of $\ZKP$.

\newcommand{\good}{\mathcal{G}}
\medskip
\noindent\textbf{Bounding $\Pr[E_1 \land E_2 \land \lnot E_3]$:}
For each fixed $\com$ and $i\in R_{\cP}$, let $\good_i$ be the set of ``good'' pairs $(r_{\cE},S)$ that cause $\cE_{\cP}((\com,S);r_{\cE})$ to output a valid witness $(h,\rho)$ such that $f(h,S)=1$ and $\com=\commit(h;\rho)$.
Then we have that 
\begin{align*}
\epsilon_i=\Pr_{(r_{\cE},S)\gets R_{\cE} \times \mathcal{D}^n}[(r_{\cE},S)\in\good_i]=\Pr[E\,|\, r_{\cP}=i] \\
=\Pr_{_{(r'_{\cE},S')\gets R_{\cE} \times \mathcal{D}^n}}\left[(r'_\cE,S')\in\good_i \,|\, r_{\cP}=i \land E\right].
\end{align*}
Given this, we rewrite $\Pr[E_1 \land E_2 \land \lnot E_3]$ as follows:
\begin{align*}
	& \Pr[E_1 \land E_2 \land \lnot E_3] \\
	&= \sum_{i\in R_{\cP}} \Pr[r_{\cP}=i]\cdot \Pr[E \land E' \land E_2 \land \lnot E_3 | r_{\cP}=i]  \\
	&= \frac{1}{|R_{\cP}|}\sum_{i\in R_{\cP}} \Pr[E | r_{\cP}=i] \cdot\Pr[E' \land E_2 \land \lnot E_3 | r_{\cP}=i \land E ]  \\
	&= \frac{1}{|R_{\cP}|}\sum_{i\in R_{\cP}} \epsilon_i \cdot\Pr[E' \land E_2 \land \lnot E_3 | r_{\cP}=i \land E ]  \\
	&\leq \frac{1}{|R_{\cP}|}\sum_{i\in R_{\cP}} \epsilon_i \cdot\probcond{f(h,S')=1  \land \tilde{F}(h) \neq 1}{r_{\cP}=i  \land E \\ \land (r'_\cE,S')\in\good_i}  \\
	&\leq \frac{1}{|R_{\cP}|}\sum_{i\in R_{\cP}} \epsilon_i \cdot \frac{\probcond{f(h,S')=1  \land \tilde{F}(h) \neq 1}{r_{\cP}=i  \land E}}{\probcond{(r'_\cE,S')\in\good_i}{r_{\cP}=i \land E}}  \\
	&= \frac{1}{|R_{\cP}|}\sum_{i\in R_{\cP}} \probcond{f(h,S')=1  \land \tilde{F}(h) \neq 1}{r_{\cP}=i  \land E}  \\
	&\leq \frac{1}{|R_{\cP}|}\sum_{i\in R_{\cP}} \probcond{f(h,S')=1}{r_{\cP}=i  \land E \land \tilde{F}(h) \neq 1}  \\
	%
	&\leq \pfpr
\end{align*}
where the last inequality follows from the assumption on $f$ and $\tilde{F}$, as at this stage $h$ is fixed by event $E$ and elements in each element of $S'$ is sampled i.i.d from $\mathcal{D}^n$.

Combining the bounds, we get
\begin{align*}
	\pext &\geq \Pr[E_1] - \Pr[E_1 \land E_2 \land \lnot E_3] \\
	&\geq \pacc - \kappa - \pfpr
\end{align*}

This completes the proof.

\noindent\underline{Zero Knowledge:}
We construct the following simulator, internally using the simulator $\cS$ for $\ZKP$: 

\noindent\underline{$\simul(1^\secp)$}
\begin{enumerate}
	\item Generate a dummy commitment $\com \gets \commit(0;\rho)$ using a uniformly random string $\rho\in\{0,1\}^{\ell_\rho}$.
	\item Sample $S_\text{audit} \sim \mathcal{D}^n$.
	\item Run $\cS((\com,S_\text{audit}))$ to get a simulated $\view'$ for $\ZKP$.
	\item Output $\view = (\com,S_\text{audit},\view')$.
\end{enumerate}
Since $\commit$ is hiding, the dummy commitment is indistinguishable from a real commitment.
Moreover, the output of $\cS$ is indistinguishable from the view of $\audit$ during the interaction $\interact{\cP(h)}{\cV}((\com, S_\text{audit}))$ for any valid witness $h$.
Since a real execution of $\auditcsp$ defines $\stm = (\com, S_\text{audit})$ and $\wit = h$ such that $\rel(\stm,\wit)=1$ except with probability at most $\pfnr$, by setting $\pfnr$ to be negligible in $\secp$, the output of $\simul$ is indistinguishable from the view of $\audit$ during the interaction $\interact{\prove(h)}{\audit}$.
This completes the proof.
\end{proof}

\subsubsection{Example Instantiation: $\auditcsp$ for Accuracy Auditing}\label{sec:auditcsp-accuracy}
To instantiate $\auditcsp$ for accuracy auditing, we consider the empirical and true error rates as follows:
\begin{align*}
	\hat{\ell}_S(h) &= \frac{1}{n}\sum_{(x,y)\in S}\mathbb{I}(h(x)\neq y)\\
	\ell(h) &= \mathbb{E}_{(x,y)\sim \mathcal{D}}[\mathbb{I}(h(x)\neq y)] 
\end{align*}
where $n = |S|$, and define the empirical predicate $f$, the distributional predicate $F$, and the relaxed predicate $\tilde{F}$ as follows.
\begin{align*}
    f(h,S_\text{audit}) = 1  & \iff \hat{\ell}_S(h) \leq t+\delta \\
    F(h) = 1 & \iff \ell(h) \leq t  \\
	\tilde{F}(h) = 1 & \iff \ell(h) \leq t + 2\delta
\end{align*}

To apply Theorem~\ref{thm:auditcsp-general} to accuracy auditing, it would be sufficient to find the false negative rate $\pfnr$ and false positive rate $\pfpr$ by the following lemma, and then set $n\delta^2\in\Omega(\secp)$.
\begin{lemma}\label{lem:fn-fp-accuracy}
For any hypothesis $h$,
\begin{align}
	& \Pr_{S_\text{audit}\sim \mathcal{D}^n}[f(h,S_\text{audit})\neq 1 \mid F(h)=1] \leq 2 e^{-2n\delta^2} \label{eq:false-negative-accuracy} \\
	& \Pr_{S_\text{audit}\sim \mathcal{D}^n}[f(h,S_\text{audit})=1 \mid \tilde{F}(h) \neq 1] \leq 2 e^{-2n \delta^2} \label{eq:false-positive-accuracy}
\end{align}
\end{lemma}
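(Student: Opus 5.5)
Both bounds will follow from Hoeffding's inequality applied to the empirical error $\hat{\ell}_{S_\text{audit}}(h)$ of a fixed hypothesis. Fix $h$ and write $S_\text{audit}=\{(x_j,y_j)\}_{j=1}^n$ with the pairs drawn i.i.d.\ from $\mathcal{D}$. Define $Z_j=\mathbb{I}(h(x_j)\neq y_j)\in\{0,1\}$. These are i.i.d.\ bounded random variables with mean $\mathbb{E}[Z_j]=\ell(h)$, and $\hat{\ell}_{S_\text{audit}}(h)=\frac1n\sum_j Z_j$. The crucial point is that $h$ is fixed, i.e., chosen independently of $S_\text{audit}$. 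This is exactly what the commit-then-sample ordering of $\auditcsp$ guarantees, and Theorem~\ref{thm:auditcsp-general} only needs the bounds for every fixed $h$. Conditioning on $F(h)=1$ or $\tilde{F}(h)\neq 1$ is therefore not a probabilistic conditioning at all: it is a deterministic property of $h$ and only restricts which hypotheses we consider. Hoeffding's inequality then gives
\[
\Pr\bigl[\,|\hat{\ell}_{S_\text{audit}}(h)-\ell(h)|>\delta\,\bigr]\leq 2e^{-2n\delta^2}.
\]
The one-sided version already gives $e^{-2n\delta^2}$, so the stated factor of $2$ leaves slack.

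\textbf{False negatives (\eqref{eq:false-negative-accuracy}).} Suppose $F(h)=1$, i.e., $\ell(h)\leq t$. The event $f(h,S_\text{audit})\neq 1$ means $\hat{\ell}_{S_\text{audit}}(h)>t+\delta\geq \ell(h)+\delta$. This is contained in the deviation event $\hat{\ell}_{S_\text{audit}}(h)-\ell(h)>\delta$, whose probability is at most $2e^{-2n\delta^2}$ by the display above.

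\textbf{False positives (\eqref{eq:false-positive-accuracy}).} Suppose $\tilde{F}(h)\neq1$, i.e., $\ell(h)>t+2\delta$. The event $f(h,S_\text{audit})=1$ means $\hat{\ell}_{S_\text{audit}}(h)\leq t+\delta<\ell(h)-\delta$, so $\ell(h)-\hat{\ell}_{S_\text{audit}}(h)>\delta$. Again Hoeffding gives probability at most $2e^{-2n\delta^2}$.

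There is no real obstacle in this argument. The only care needed is to state explicitly that $h$ does not depend on $S_\text{audit}$, so that the $Z_j$ are genuinely i.i.d.\ with mean $\ell(h)$. This independence is exactly what fails in the data forging attacks of \S\ref{sec:attack}, where $h$ is built after seeing $S_\text{audit}$.
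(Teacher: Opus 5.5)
Your proposal is correct and matches the paper's proof. Both apply Hoeffding's inequality to the empirical error of the fixed hypothesis $h$, then observe that under $\ell(h)\leq t$ (resp.\ $\ell(h)>t+2\delta$) the event $\hat{\ell}_S(h)>t+\delta$ (resp.\ $\hat{\ell}_S(h)\leq t+\delta$) forces a deviation of more than $\delta$ from $\ell(h)$; your remarks that the conditioning is just a deterministic restriction on $h$ and that the one-sided bound already suffices are accurate points the paper leaves implicit.
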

\begin{proof}
Consider the empirical error $\hat{\ell}_S(h) = \frac{1}{n}\sum_{(x,y)\in S}\mathbb{I}(h(x)\neq y)$ and the true error $\ell(h) = \mathbb{E}_{(x,y)\sim \mathcal{D}}[\mathbb{I}(h(x)\neq y)]$.
Since each element of $\setaudit$ is sampled i.i.d. from $\mathcal{D}$, by Hoeffding's inequality, we have
\begin{align*}
	\Pr_{S_\text{audit}\gets \mathcal{D}^n}[|\hat{\ell}_S(h) - \ell(h)| \geq \delta] \leq 2e^{-2n\delta^2}.
\end{align*}
Thus, for any $h$ such that $F(h)=1$ (i.e., $\ell(h) \leq t$), the probability that $f(h,S_\text{audit})\neq 1$ (i.e., $\hat{\ell}_S(h) > t + \delta$) is at most $2e^{-2n\delta^2}$.

Similarly, for any $h$ such that $\tilde{F}(h)\neq 1$ (i.e., $\ell(h)  > t+2\delta$), the probability that $f(h,S_\text{audit})=1$ (i.e., $\hat{\ell}_S(h) \leq t + \delta$) is at most $2 e^{-2n\delta^2}$.
\end{proof}

Overall, by Theorem~\ref{thm:auditcsp-general} and Lemma~\ref{lem:fn-fp-accuracy}, we conclude that $\auditcsp$ instantiated for accuracy auditing is a secure auditing protocol for accuracy with the following properties for a sufficiently large $n\delta^2\in\Omega(\secp)$:

\begin{itemize}
    \item If the true error of the model $h$ is $\leq t$, then the auditor accepts with high probability.
    \item If the auditor accepts, then the auditor gets assurance that the true error of the model $h$ is at most $t + 2\delta$, even if the prover is misbehaving.
\end{itemize}

\subsubsection{Example Instantiation: $\auditcsp$ for Demographic Parity Auditing}\label{sec:auditcsp-demographic-parity}
Similarly, we can instantiate the protocol $\auditcsp$ for auditing fairness conditions.
Take the demographic parity as an example, for which we can set the query space to $\querysp = \{x_i\}^m_{i=1}$ without ground-truth labels.
We consider the empirical and true demographic parity differences as follows: 
\begin{align*}
	\dpemp(h,\setaudit) &= \left|\frac{1}{n_0}\sum_{x\in S_0}\mathbb{I}(h(x) = 1) - \frac{1}{n_1}\sum_{x\in S_1}\mathbb{I}(h(x) = 1) \right| \\ 
	\dptrue(h) &=  |\mathrm{E}_{x \sim \mathcal{D}}[\mathbb{I}(h(x) =1) \,|\, s_x = 0] \\
	&\qquad - \mathrm{E}_{x \sim \mathcal{D}}[\mathbb{I}(h(x) =1) \,|\, s_x = 1]| 
\end{align*}
where $s_x$ denotes the sensitive feature of a data point $x$, $S_0 = \{x \in \setaudit : s_x = 0 \}$, $S_1 = \{x \in \setaudit : s_x = 1 \}$, $n_0 = |S_0|$, and $n_1 = |S_1|$.

Define the empirical predicate $f$, the distributional predicate $F$, and the relaxed predicate $\tilde{F}$ as follows.
\begin{align*}
    f(h,S_\text{audit}) = 1  & \iff \dpemp(h,\setaudit) \leq t+2\delta \\
    F(h) = 1 & \iff \dptrue(h) \leq t  \\
	\tilde{F}(h) = 1 & \iff \dptrue(h) \leq t + 4\delta
\end{align*}

To apply Theorem~\ref{thm:auditcsp-general} to demographic parity auditing, we can prove the following lemma in place of Lemma \ref{lem:fn-fp-accuracy}, and then set $n_{\min}\delta^2\in\Omega(\secp)$.

\begin{lemma}\label{lem:false-positive-demographic-parity}
For any hypothesis $h$,
\begin{align}
	& \Pr_{S_\text{audit}\sim \mathcal{D}^n}[f(h,S_\text{audit})\neq 1 \mid F(h)=1] \leq 4 e^{-2n_{\min}\delta^2} \label{eq:false-negative-demographic-parity} \\
	& \Pr_{S_\text{audit}\sim \mathcal{D}^n}[f(h,S_\text{audit})=1 \mid \tilde{F}(h) \neq 1] \leq 4 e^{-2n_{\min}\delta^2} \label{eq:false-positive-demographic-parity}
\end{align}
where $n_{\min} = \min(n_0,n_1)$.
\end{lemma}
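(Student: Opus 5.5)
The plan mirrors the proof of Lemma~\ref{lem:fn-fp-accuracy}: apply Hoeffding's inequality once per sensitive group, then use the triangle inequality to pass from the two group rates to the parity difference. Write
\[
p_b(h)=\mathrm{E}_{x\sim\mathcal{D}}[\mathbb{I}(h(x)=1)\mid s_x=b]
\quad\text{and}\quad
\hat p_b(h)=\frac{1}{n_b}\sum_{x\in S_b}\mathbb{I}(h(x)=1)
\]
for $b\in\{0,1\}$. Then $\dptrue(h)=|p_0(h)-p_1(h)|$ and $\dpemp(h,\setaudit)=|\hat p_0(h)-\hat p_1(h)|$.

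The first step handles the fact that $n_0$ and $n_1$ are random. I would condition on the vector of sensitive attributes $(s_x)_{x\in\setaudit}$, which fixes the partition into $S_0,S_1$ and hence $n_0,n_1$; I assume both are nonzero, otherwise $\dpemp$ is undefined. Because the elements of $\setaudit$ are i.i.d.\ from $\mathcal{D}$, under this conditioning the points of $S_b$ are i.i.d.\ draws from $\mathcal{D}$ conditioned on $s_x=b$. Moreover $S_0$ and $S_1$ are independent of each other.

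Next, $\mathbb{I}(h(x)=1)\in[0,1]$ is a fixed function of $x$, since $h$ is fixed in the lemma. Hoeffding's inequality therefore gives
\[
\Pr\bigl[|\hat p_b(h)-p_b(h)|\ge\delta\bigr]\le 2e^{-2n_b\delta^2}\le 2e^{-2n_{\min}\delta^2}.
\]
A union bound over $b\in\{0,1\}$ shows that, except with probability $4e^{-2n_{\min}\delta^2}$, both group estimates are within $\delta$ of their true values. On that event the reverse triangle inequality yields
\[
\bigl|\dpemp(h,\setaudit)-\dptrue(h)\bigr|\le|\hat p_0-p_0|+|\hat p_1-p_1|<2\delta.
\]

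The two claimed bounds then follow directly. For the false negative bound, if $F(h)=1$, i.e.\ $\dptrue(h)\le t$, then on the good event $\dpemp<t+2\delta$, so $f(h,\setaudit)=1$. For the false positive bound, if $\tilde F(h)\ne1$, i.e.\ $\dptrue(h)>t+4\delta$, then on the good event $\dpemp>t+2\delta$, so $f(h,\setaudit)=0$. Each failure probability is at most $4e^{-2n_{\min}\delta^2}$. Because this holds conditionally for every realization of the group sizes, the bound reads as a conditional statement given $n_{\min}$. Alternatively one can take a worst-case $n_{\min}$, or assume stratified sampling so that $n_0,n_1$ are fixed.

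I expect the main obstacle to be this conditioning step. One must argue carefully that, conditioned on the group labels, the within-group samples remain i.i.d.\ from the correct conditional distribution, and that the randomness of $n_{\min}$ is handled in the way the statement intends. The rest is routine.
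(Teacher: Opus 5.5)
Your proposal is correct and follows essentially the same route as the paper: per-group Hoeffding bounds, a union bound over the two groups, and the triangle inequality to convert the $\delta$-accuracy of each group rate into a $2\delta$ slack on the parity difference, which settles both the false-negative and the false-positive direction. Your explicit conditioning on the sensitive-attribute vector fixes $n_0,n_1$ and makes the within-group samples i.i.d.\ from the conditional distributions; this makes rigorous a point the paper's proof leaves implicit, namely that $n_{\min}$ is itself random.
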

\begin{proof}	
We first prove \eqref{eq:false-positive-demographic-parity}. Define the following variables: 
\begin{align*}
	& g_0 = \frac{1}{n_0}\sum_{x\in S_0}\mathbb{I}(h(x) = 1) 
	& p_0 = \mathrm{E}_{x \sim \mathcal{D}}[\mathbb{I}(h(x) =1) \,|\, s_x = 0]
	\\
	& g_1 = \frac{1}{n_1}\sum_{x\in S_1}\mathbb{I}(h(x) = 1)
	& p_1 = \mathrm{E}_{x \sim \mathcal{D}}[\mathbb{I}(h(x) =1) \,|\, s_x = 1] 
\end{align*}
By Hoeffding's inequality, we have
\begin{align*}
	& \Pr[|g_0 - p_0| \geq \delta] \leq 2 e^{-2n_0\delta^2} \\
	& \Pr[|g_1 - p_1| \geq \delta] \leq 2 e^{-2n_1\delta^2}
\end{align*}
where the probability is taken over the randomness of $\setaudit \sim \mathcal{D}^n$.
Thus, for any $h$ such that $\tilde{F}(h)\neq 1$ (i.e., $|p_0 - p_1| > t+4\delta$), the probability that $f(h,S_\text{audit})=1$ (i.e., $|g_0 - g_1| \leq t + 2\delta$) can be bounded as follows: 

\begin{align*}
	\Pr[|g_0-g_1| \leq t + 2\delta] &\leq \Pr[|g_0-g_1| \leq |p_0 - p_1|-2\delta]  \\
	&\leq \Pr[2\delta \leq |p_0-p_1 - (g_0 - g_1)| ] \\
	&\leq \Pr[2\delta \leq |p_0-g_0| + |p_1 - g_1| ] \\
	&\leq \Pr[\delta \leq |p_0-g_0| \lor  \delta \leq |p_1 - g_1| ] \\
	&\leq \Pr[\delta \leq |p_0-g_0|] + \Pr[\delta \leq |p_1 - g_1| ]  \\
	&\leq 2 e^{-2n_0\delta^2} + 2 e^{-2n_1\delta^2} \leq 4 e^{-2n_{\min}\delta^2}
\end{align*}

Analogously, we can prove \eqref{eq:false-negative-demographic-parity}.
For any $h$ such that ${F}(h) = 1$ (i.e., $|p_0 - p_1| \leq t$), the probability that $f(h,S_\text{audit})\neq 1$ (i.e., $|g_0 - g_1| > t + 2\delta$) can be bounded in the same way:
\begin{align*}
	 &\Pr[|g_0-g_1| > t + 2\delta] \\
	&\leq \Pr[|g_0-g_1| > |p_0 - p_1|+2\delta] \leq 4 e^{-2n_{\min}\delta^2}
\end{align*}
\end{proof}

Overall, by Theorem~\ref{thm:auditcsp-general} and Lemma~\ref{lem:false-positive-demographic-parity}, we conclude that $\auditcsp$ instantiated for demographic parity auditing is a secure auditing protocol for demographic parity with the following properties for a sufficiently large $n_{\min}\delta^2\in\Omega(\secp)$:

\begin{itemize}
    \item If the true demographic parity of the model $h$ is $\leq t$, then the auditor accepts with high probability.
    \item If the auditor accepts, then the auditor gets assurance that the true demographic parity of the model $h$ is at most $t + 4\delta$, even if the prover is misbehaving.
\end{itemize}

\fi
\section{Further Evaluation}\label{sec:eval-appendix}

First, we present in Figure~\ref{fig:accuracy-attack-additional} results for attacking accuracy audits on additional datasets mentioned in \S\ref{sec:eval}.

\begin{figure}[t]
    \centering
    \includegraphics[width=\linewidth]{images/total_acc_fig2.png}
    \caption{Accuracy of models trained on datasets constructed by Algorithm \ref{alg:attack} on various benchmarks. Values are averages over ten runs, error bars represent one standard deviation. }
    \label{fig:accuracy-attack-additional}
\end{figure}

Next, we present the application of the attack described in \S~\ref{sec:attack-methods}. In this attack, the adversary is attempting to maximize the model's denial rate $\Pr_{x\sim \mathcal{D}}[h(x)=0]$ while still appearing accurate to the audit. The results of this attack are given in Figure~\ref{fig:denial}. Observe that as the attack parameter approaches $1$ (and the attack becomes maximally malicious), the denial rate of the model on the audit set remains close to the fully honest denial rate while the denial rate on independently sampled data approaches $1$. Similarly, the accuracy of the model on the audit set approaches $1$, while the true accuracy decreases down to roughly $0.6$ (this reflects the true denial rate of the distribution).

\begin{figure}[t]
    \centering
    \includegraphics[width=0.5\linewidth]{images/acs_denial_fig.png}\includegraphics[width=0.5\linewidth]{images/acs_zero_acc_fig.png}
    \caption{Accuracy and denial rates of models trained on datasets constructed by Algorithm \ref{alg:attack} on ACSEmployment. Values are averages over ten runs, error bars represent one standard deviation.}
    \label{fig:denial}
\end{figure}

\subsection{Attacking XGBoost}\label{sec:xgboost}

We next show that a minor modification of Algorithm~\ref{alg:attack} suffices to evade accuracy auditing of XGBoost models~\cite{chen2016xgboost}.
Unlike a single decision tree, XGBoost trains a gradient-boosted ensemble of trees, so the theoretical guarantee of Theorem~\ref{thm:attack-isolation} does not apply: with only one copy of $S_{\mathit{audit}}$ in $S'_{\mathit{train}}$, the boosting process assigns excessive weight to the $2d$ flipped-label neighborhood points, causing the ensemble to misclassify the audit points themselves and thereby fail the audit.
To counteract this, we include $2d$ copies of $S_{\mathit{audit}}$ (with true labels) in $S'_{\mathit{train}}$ rather than a single copy, so that the correct-label signal at each audit point outweighs its $2d$ flipped-label neighbors; all other details of the construction remain unchanged.
We used the perturbation magnitude $\varepsilon = 1.0$ throughout.
The XGBoost ensemble is configured with $500$ trees of maximum depth $10$, learning rate $\eta = 0.3$, L2 regularization $\lambda = 1.0$, and minimum split-loss $\gamma = 0$; all other XGBoost hyperparameters are left at their defaults.
We empirically evaluated this attack on the Default Credit, Adult Income, and ACS Income datasets; results are shown in Figure~\ref{fig:xgboost-accuracy}.
Our modified attack successfully evades accuracy auditing across all benchmarks.

\begin{figure}[t]
    \centering
    \includegraphics[width=\linewidth]{images/xgboost_acc_fig.png}
    \caption{Accuracy of XGBoost models trained on inflated datasets constructed by the modified Algorithm~\ref{alg:attack} on Adult, Default Credit, and ACS Income datasets. Values are averages over five runs; error bars represent one standard deviation.}
    \label{fig:xgboost-accuracy}
\end{figure}

\subsection{Attacking Neural Networks}\label{sec:neural}
Finally, we present an evaluation of a modified version of the attack that targets neural networks rather than decision trees. Whereas decision trees have very specific conditions that allow us to constrain their behavior, it is much harder to provide theoretical guarantees for neural networks. In order to encourage memorization of the training data, we used a relatively shallow network with very large individual layers. Our attack samples a large amount of training data, and decides whether to label each point with the honest label or dishonest label depending on its proximity to the nearest audit data point. We evaluated this attack on an 8-dimensional mixture of Gaussian distributions; the results are shown in Figure~\ref{fig:neuralnet}.

We also performed this attack on ACS Employment data, and over 10 runs, trained a model with average accuracy of 69.83\% accuracy on held-out test data while maintaining an average of 88.82\% accuracy on audit data.

\begin{figure}[t]
    \centering
    \includegraphics[width=0.8\linewidth]{images/nn_mog_acc_fig.png}
    \caption{Performance of 226M-parameter neural networks trained on datasets constructed from a mixture of Gaussian distributions. Duplicity refers to the number of perturbed copies of the audit dataset included in the training data.}
    \label{fig:neuralnet}
\end{figure}

\subsection{Evading Statistical Methods for Attack Detection}
\iffull

We empirically examine our ability to observe data forging attacks by applying statistical tests to the datasets, as described in \S~\ref{sec:detection}. There is no singular way to determine whether two samples were drawn from the same distribution, so we apply some common statistical tools. In particular, our goal is to determine if the distribution from which the audit data is drawn is identical to the distribution from which the training data is drawn. We use Welch's $t$-test, which serves to determine whether two distributions have the same mean, and Levene's test, a one-way ANOVA for determining whether two distributions have the same variance. These tests are typically applied to 1-dimensional data, and so we apply them to each feature individually. The results of these experiments are given in Table~\ref{tab:stats}.
\begin{table*}
\centering
\caption{Summary and Test statistics for Age feature on ACSEmployment, conditioned on label. Test statistics used are Welch's $t$-test and Levene's test. Attack is undetectable when summary statistics are similar to honest ones, and when test statistics are close to 0. Comparisons are between fully honest and fully malicious datasets.} 
\begin{tabular}{|c|c|c|c|c|c|}
    \hline
    \multicolumn{2}{|c|}{\multirow{2}{*}{Age}} & \multicolumn{2}{|c|}{Label $=0$} & \multicolumn{2}{|c|}{Label $=1$} \\ \cline{3-6}
    \multicolumn{2}{|c|}{} & Honest & Attack & Honest & Attack \\ \hline
    Summary & $\mu$ & 41.6651 & 41.9657 & 43.9184 & 43.8131 \\ \cline{2-6}
    Statistics & $\sigma^2$ & 804.5804 & 810.8822 & 223.1269 & 221.42394 \\ \hline
    Test & $t$-test & 0.6521 & 0.0033 & 0.7067 & 0.0110 \\ \cline{2-6}
    Statistics & ANOVA & 0.6200 & 0.0026 & 1.6500 & 0.0186 \\ \hline
    \hline
    \multicolumn{2}{|c|}{\multirow{2}{*}{Education}} & \multicolumn{2}{|c|}{Label $=0$} & \multicolumn{2}{|c|}{Label $=1$} \\ \cline{3-6}
    \multicolumn{2}{|c|}{} & Honest & Attack & Honest & Attack \\ \hline
    Summary & $\mu$ & 13.39761692 & 13.41700338 & 18.45539675 & 18.50545506 \\ \cline{2-6}
    Statistics & $\sigma^2$ & 42.99789908 & 42.16899485 & 9.979327135 & 8.943082831 \\ \hline
    Test & $t$-test & 0.7984001575 & 0.0356390553 & 0.9499974697 & 0.1302788154 \\ \cline{2-6}
    Statistics & ANOVA & 0.4844657261 & 0.0003374130653 & 1.227829625 & 0.02531893152 \\ \hline
    \hline
    \multicolumn{2}{|c|}{\multirow{2}{*}{Military Status}} & \multicolumn{2}{|c|}{Label $=0$} & \multicolumn{2}{|c|}{Label $=1$} \\ \cline{3-6}
    \multicolumn{2}{|c|}{} & Honest & Attack & Honest & Attack \\ \hline
    Summary & $\mu$ & 2.5794 & 2.5834 & 3.8121 & 3.8302 \\ \cline{2-6}
    Statistics & $\sigma^2$ & 3.2749 & 3.2648 & 0.3507 & 0.3265 \\ \hline
    Test & $t$-test & 0.4997 & 0.0313 & 0.8699 & 0.1755 \\ \cline{2-6}
    Statistics & ANOVA & 1.0240 & 0.0009 & 1.2394 & 0.0304 \\ \hline
\end{tabular}
\label{tab:stats}
\end{table*}

We observe that the summary statistics of the malicious training data closely match the values for the honest data, suggesting that comparing these two values would not be a successful detection mechanism. This is compounded by the fact that the test statistics for Welch's $t$-test and Levene's test for the malicious training data are considerably smaller on average than the same test statistics for the honest training data, corroborating higher rate of passing the hypothesis tests we observe. At a significance level of $\alpha=0.05$, we expect a false positive rate of approximately $5\%$. On the other hand, we observe a $0\%$ true positive rate. We note that in a practical application of this attack, the auditor would have access only to the honest or malicious values over a single training run, and would thus be unable to easily distinguish between the two cases by comparing the values or by looking at averages over many runs as we have done here. That being said, an auditor may find it suspicious if the p-value returned by a statistical test is extremely low (even though such a scenario may be very plausible for some distributions); an attacker can safely relax this attack to a comfortable degree, though doing so will increase the risk of failing the audit.\else Empirical and theoretical results demonstrating that statistical methods are insufficient to detect data forging, including a proof of Corollary~\ref{cor:welch-main}, are deferred to the full version of the paper due to page limit of the USENIX format.
\fi

\ifneurips
\clearpage
\newpage
\section*{NeurIPS Paper Checklist}

\begin{enumerate}

\item {\bf Claims}
    \item[] Question: Do the main claims made in the abstract and introduction accurately reflect the paper's contributions and scope?
    \item[] Answer: \answerYes{} 
    \item[] Justification: Our main technical results are in \S\ref{sec:attack}.
    \item[] Guidelines:
    \begin{itemize}
        \item The answer NA means that the abstract and introduction do not include the claims made in the paper.
        \item The abstract and/or introduction should clearly state the claims made, including the contributions made in the paper and important assumptions and limitations. A No or NA answer to this question will not be perceived well by the reviewers. 
        \item The claims made should match theoretical and experimental results, and reflect how much the results can be expected to generalize to other settings. 
        \item It is fine to include aspirational goals as motivation as long as it is clear that these goals are not attained by the paper. 
    \end{itemize}

\item {\bf Limitations}
    \item[] Question: Does the paper discuss the limitations of the work performed by the authors?
    \item[] Answer: \answerYes{} 
    \item[] Justification: We elaborate on limitations of the undetectability of our attacks in \S\ref{sec:detection}.
    \item[] Guidelines:
    \begin{itemize}
        \item The answer NA means that the paper has no limitation while the answer No means that the paper has limitations, but those are not discussed in the paper. 
        \item The authors are encouraged to create a separate "Limitations" section in their paper.
        \item The paper should point out any strong assumptions and how robust the results are to violations of these assumptions (e.g., independence assumptions, noiseless settings, model well-specification, asymptotic approximations only holding locally). The authors should reflect on how these assumptions might be violated in practice and what the implications would be.
        \item The authors should reflect on the scope of the claims made, e.g., if the approach was only tested on a few datasets or with a few runs. In general, empirical results often depend on implicit assumptions, which should be articulated.
        \item The authors should reflect on the factors that influence the performance of the approach. For example, a facial recognition algorithm may perform poorly when image resolution is low or images are taken in low lighting. Or a speech-to-text system might not be used reliably to provide closed captions for online lectures because it fails to handle technical jargon.
        \item The authors should discuss the computational efficiency of the proposed algorithms and how they scale with dataset size.
        \item If applicable, the authors should discuss possible limitations of their approach to address problems of privacy and fairness.
        \item While the authors might fear that complete honesty about limitations might be used by reviewers as grounds for rejection, a worse outcome might be that reviewers discover limitations that aren't acknowledged in the paper. The authors should use their best judgment and recognize that individual actions in favor of transparency play an important role in developing norms that preserve the integrity of the community. Reviewers will be specifically instructed to not penalize honesty concerning limitations.
    \end{itemize}

\item {\bf Theory assumptions and proofs}
    \item[] Question: For each theoretical result, does the paper provide the full set of assumptions and a complete (and correct) proof?
    \item[] Answer: \answerYes{} 
    \item[] Justification: We give a set of proofs in \S\ref{sec:appendix}.
    \item[] Guidelines:
    \begin{itemize}
        \item The answer NA means that the paper does not include theoretical results. 
        \item All the theorems, formulas, and proofs in the paper should be numbered and cross-referenced.
        \item All assumptions should be clearly stated or referenced in the statement of any theorems.
        \item The proofs can either appear in the main paper or the supplemental material, but if they appear in the supplemental material, the authors are encouraged to provide a short proof sketch to provide intuition. 
        \item Inversely, any informal proof provided in the core of the paper should be complemented by formal proofs provided in appendix or supplemental material.
        \item Theorems and Lemmas that the proof relies upon should be properly referenced. 
    \end{itemize}

    \item {\bf Experimental result reproducibility}
    \item[] Question: Does the paper fully disclose all the information needed to reproduce the main experimental results of the paper to the extent that it affects the main claims and/or conclusions of the paper (regardless of whether the code and data are provided or not)?
    \item[] Answer: \answerYes{} 
    \item[] Justification: We explain how we performed the experiments in \S\ref{sec:eval}.
    \item[] Guidelines:
    \begin{itemize}
        \item The answer NA means that the paper does not include experiments.
        \item If the paper includes experiments, a No answer to this question will not be perceived well by the reviewers: Making the paper reproducible is important, regardless of whether the code and data are provided or not.
        \item If the contribution is a dataset and/or model, the authors should describe the steps taken to make their results reproducible or verifiable. 
        \item Depending on the contribution, reproducibility can be accomplished in various ways. For example, if the contribution is a novel architecture, describing the architecture fully might suffice, or if the contribution is a specific model and empirical evaluation, it may be necessary to either make it possible for others to replicate the model with the same dataset, or provide access to the model. In general. releasing code and data is often one good way to accomplish this, but reproducibility can also be provided via detailed instructions for how to replicate the results, access to a hosted model (e.g., in the case of a large language model), releasing of a model checkpoint, or other means that are appropriate to the research performed.
        \item While NeurIPS does not require releasing code, the conference does require all submissions to provide some reasonable avenue for reproducibility, which may depend on the nature of the contribution. For example
        \begin{enumerate}
            \item If the contribution is primarily a new algorithm, the paper should make it clear how to reproduce that algorithm.
            \item If the contribution is primarily a new model architecture, the paper should describe the architecture clearly and fully.
            \item If the contribution is a new model (e.g., a large language model), then there should either be a way to access this model for reproducing the results or a way to reproduce the model (e.g., with an open-source dataset or instructions for how to construct the dataset).
            \item We recognize that reproducibility may be tricky in some cases, in which case authors are welcome to describe the particular way they provide for reproducibility. In the case of closed-source models, it may be that access to the model is limited in some way (e.g., to registered users), but it should be possible for other researchers to have some path to reproducing or verifying the results.
        \end{enumerate}
    \end{itemize}

\item {\bf Open access to data and code}
    \item[] Question: Does the paper provide open access to the data and code, with sufficient instructions to faithfully reproduce the main experimental results, as described in supplemental material?
    \item[] Answer: \answerNo{} 
    \item[] Justification: We will provide open access to the data and code in the future.
    \item[] Guidelines:
    \begin{itemize}
        \item The answer NA means that paper does not include experiments requiring code.
        \item Please see the NeurIPS code and data submission guidelines (\url{https://nips.cc/public/guides/CodeSubmissionPolicy}) for more details.
        \item While we encourage the release of code and data, we understand that this might not be possible, so “No” is an acceptable answer. Papers cannot be rejected simply for not including code, unless this is central to the contribution (e.g., for a new open-source benchmark).
        \item The instructions should contain the exact command and environment needed to run to reproduce the results. See the NeurIPS code and data submission guidelines (\url{https://nips.cc/public/guides/CodeSubmissionPolicy}) for more details.
        \item The authors should provide instructions on data access and preparation, including how to access the raw data, preprocessed data, intermediate data, and generated data, etc.
        \item The authors should provide scripts to reproduce all experimental results for the new proposed method and baselines. If only a subset of experiments are reproducible, they should state which ones are omitted from the script and why.
        \item At submission time, to preserve anonymity, the authors should release anonymized versions (if applicable).
        \item Providing as much information as possible in supplemental material (appended to the paper) is recommended, but including URLs to data and code is permitted.
    \end{itemize}

\item {\bf Experimental setting/details}
    \item[] Question: Does the paper specify all the training and test details (e.g., data splits, hyperparameters, how they were chosen, type of optimizer, etc.) necessary to understand the results?
    \item[] Answer: \answerYes{} 
    \item[] Justification: We explain how we performed the experiments in \S\ref{sec:eval}.
    \item[] Guidelines:
    \begin{itemize}
        \item The answer NA means that the paper does not include experiments.
        \item The experimental setting should be presented in the core of the paper to a level of detail that is necessary to appreciate the results and make sense of them.
        \item The full details can be provided either with the code, in appendix, or as supplemental material.
    \end{itemize}

\item {\bf Experiment statistical significance}
    \item[] Question: Does the paper report error bars suitably and correctly defined or other appropriate information about the statistical significance of the experiments?
    \item[] Answer: \answerNo{} 
    \item[] Justification: 
    \item[] Guidelines:
    \begin{itemize}
        \item The answer NA means that the paper does not include experiments.
        \item The authors should answer "Yes" if the results are accompanied by error bars, confidence intervals, or statistical significance tests, at least for the experiments that support the main claims of the paper.
        \item The factors of variability that the error bars are capturing should be clearly stated (for example, train/test split, initialization, random drawing of some parameter, or overall run with given experimental conditions).
        \item The method for calculating the error bars should be explained (closed form formula, call to a library function, bootstrap, etc.)
        \item The assumptions made should be given (e.g., Normally distributed errors).
        \item It should be clear whether the error bar is the standard deviation or the standard error of the mean.
        \item It is OK to report 1-sigma error bars, but one should state it. The authors should preferably report a 2-sigma error bar than state that they have a 96\% CI, if the hypothesis of Normality of errors is not verified.
        \item For asymmetric distributions, the authors should be careful not to show in tables or figures symmetric error bars that would yield results that are out of range (e.g. negative error rates).
        \item If error bars are reported in tables or plots, The authors should explain in the text how they were calculated and reference the corresponding figures or tables in the text.
    \end{itemize}

\item {\bf Experiments compute resources}
    \item[] Question: For each experiment, does the paper provide sufficient information on the computer resources (type of compute workers, memory, time of execution) needed to reproduce the experiments?
    \item[] Answer: \answerYes{} 
    \item[] Justification: We explain how we performed the experiments in \S\ref{sec:eval}.
    \item[] Guidelines:
    \begin{itemize}
        \item The answer NA means that the paper does not include experiments.
        \item The paper should indicate the type of compute workers CPU or GPU, internal cluster, or cloud provider, including relevant memory and storage.
        \item The paper should provide the amount of compute required for each of the individual experimental runs as well as estimate the total compute. 
        \item The paper should disclose whether the full research project required more compute than the experiments reported in the paper (e.g., preliminary or failed experiments that didn't make it into the paper). 
    \end{itemize}
    
\item {\bf Code of ethics}
    \item[] Question: Does the research conducted in the paper conform, in every respect, with the NeurIPS Code of Ethics \url{https://neurips.cc/public/EthicsGuidelines}?
    \item[] Answer: \answerYes{} 
    \item[] Justification: 
    \item[] Guidelines:
    \begin{itemize}
        \item The answer NA means that the authors have not reviewed the NeurIPS Code of Ethics.
        \item If the authors answer No, they should explain the special circumstances that require a deviation from the Code of Ethics.
        \item The authors should make sure to preserve anonymity (e.g., if there is a special consideration due to laws or regulations in their jurisdiction).
    \end{itemize}

\item {\bf Broader impacts}
    \item[] Question: Does the paper discuss both potential positive societal impacts and negative societal impacts of the work performed?
    \item[] Answer: \answerNA{} 
    \item[] Justification: 
    \item[] Guidelines:
    \begin{itemize}
        \item The answer NA means that there is no societal impact of the work performed.
        \item If the authors answer NA or No, they should explain why their work has no societal impact or why the paper does not address societal impact.
        \item Examples of negative societal impacts include potential malicious or unintended uses (e.g., disinformation, generating fake profiles, surveillance), fairness considerations (e.g., deployment of technologies that could make decisions that unfairly impact specific groups), privacy considerations, and security considerations.
        \item The conference expects that many papers will be foundational research and not tied to particular applications, let alone deployments. However, if there is a direct path to any negative applications, the authors should point it out. For example, it is legitimate to point out that an improvement in the quality of generative models could be used to generate deepfakes for disinformation. On the other hand, it is not needed to point out that a generic algorithm for optimizing neural networks could enable people to train models that generate Deepfakes faster.
        \item The authors should consider possible harms that could arise when the technology is being used as intended and functioning correctly, harms that could arise when the technology is being used as intended but gives incorrect results, and harms following from (intentional or unintentional) misuse of the technology.
        \item If there are negative societal impacts, the authors could also discuss possible mitigation strategies (e.g., gated release of models, providing defenses in addition to attacks, mechanisms for monitoring misuse, mechanisms to monitor how a system learns from feedback over time, improving the efficiency and accessibility of ML).
    \end{itemize}
    
\item {\bf Safeguards}
    \item[] Question: Does the paper describe safeguards that have been put in place for responsible release of data or models that have a high risk for misuse (e.g., pretrained language models, image generators, or scraped datasets)?
    \item[] Answer: \answerNA{} 
    \item[] Justification: 
    \item[] Guidelines:
    \begin{itemize}
        \item The answer NA means that the paper poses no such risks.
        \item Released models that have a high risk for misuse or dual-use should be released with necessary safeguards to allow for controlled use of the model, for example by requiring that users adhere to usage guidelines or restrictions to access the model or implementing safety filters. 
        \item Datasets that have been scraped from the Internet could pose safety risks. The authors should describe how they avoided releasing unsafe images.
        \item We recognize that providing effective safeguards is challenging, and many papers do not require this, but we encourage authors to take this into account and make a best faith effort.
    \end{itemize}

\item {\bf Licenses for existing assets}
    \item[] Question: Are the creators or original owners of assets (e.g., code, data, models), used in the paper, properly credited and are the license and terms of use explicitly mentioned and properly respected?
    \item[] Answer: \answerNA{} 
    \item[] Justification: 
    \item[] Guidelines:
    \begin{itemize}
        \item The answer NA means that the paper does not use existing assets.
        \item The authors should cite the original paper that produced the code package or dataset.
        \item The authors should state which version of the asset is used and, if possible, include a URL.
        \item The name of the license (e.g., CC-BY 4.0) should be included for each asset.
        \item For scraped data from a particular source (e.g., website), the copyright and terms of service of that source should be provided.
        \item If assets are released, the license, copyright information, and terms of use in the package should be provided. For popular datasets, \url{paperswithcode.com/datasets} has curated licenses for some datasets. Their licensing guide can help determine the license of a dataset.
        \item For existing datasets that are re-packaged, both the original license and the license of the derived asset (if it has changed) should be provided.
        \item If this information is not available online, the authors are encouraged to reach out to the asset's creators.
    \end{itemize}

\item {\bf New assets}
    \item[] Question: Are new assets introduced in the paper well documented and is the documentation provided alongside the assets?
    \item[] Answer: \answerNA{} 
    \item[] Justification: 
    \item[] Guidelines:
    \begin{itemize}
        \item The answer NA means that the paper does not release new assets.
        \item Researchers should communicate the details of the dataset/code/model as part of their submissions via structured templates. This includes details about training, license, limitations, etc. 
        \item The paper should discuss whether and how consent was obtained from people whose asset is used.
        \item At submission time, remember to anonymize your assets (if applicable). You can either create an anonymized URL or include an anonymized zip file.
    \end{itemize}

\item {\bf Crowdsourcing and research with human subjects}
    \item[] Question: For crowdsourcing experiments and research with human subjects, does the paper include the full text of instructions given to participants and screenshots, if applicable, as well as details about compensation (if any)? 
    \item[] Answer: \answerNA{} 
    \item[] Justification: 
    \item[] Guidelines:
    \begin{itemize}
        \item The answer NA means that the paper does not involve crowdsourcing nor research with human subjects.
        \item Including this information in the supplemental material is fine, but if the main contribution of the paper involves human subjects, then as much detail as possible should be included in the main paper. 
        \item According to the NeurIPS Code of Ethics, workers involved in data collection, curation, or other labor should be paid at least the minimum wage in the country of the data collector. 
    \end{itemize}

\item {\bf Institutional review board (IRB) approvals or equivalent for research with human subjects}
    \item[] Question: Does the paper describe potential risks incurred by study participants, whether such risks were disclosed to the subjects, and whether Institutional Review Board (IRB) approvals (or an equivalent approval/review based on the requirements of your country or institution) were obtained?
    \item[] Answer: \answerNA{} 
    \item[] Justification: 
    \item[] Guidelines:
    \begin{itemize}
        \item The answer NA means that the paper does not involve crowdsourcing nor research with human subjects.
        \item Depending on the country in which research is conducted, IRB approval (or equivalent) may be required for any human subjects research. If you obtained IRB approval, you should clearly state this in the paper. 
        \item We recognize that the procedures for this may vary significantly between institutions and locations, and we expect authors to adhere to the NeurIPS Code of Ethics and the guidelines for their institution. 
        \item For initial submissions, do not include any information that would break anonymity (if applicable), such as the institution conducting the review.
    \end{itemize}

\item {\bf Declaration of LLM usage}
    \item[] Question: Does the paper describe the usage of LLMs if it is an important, original, or non-standard component of the core methods in this research? Note that if the LLM is used only for writing, editing, or formatting purposes and does not impact the core methodology, scientific rigorousness, or originality of the research, declaration is not required.
    \item[] Answer: \answerYes{} 
    \item[] Justification: The authors used LLMs to polish writing, and for assistance with literature search in some components of this paper. We also used generative AI to create some of the icons in Figure~\ref{fig:auditprocess}. In addition, we used an LLM for assistance with Lemma~\ref{lem:t-dist-bound}. We checked the proof assistance thoroughly by hand before including it in the paper.
    \item[] Guidelines:
    \begin{itemize}
        \item The answer NA means that the core method development in this research does not involve LLMs as any important, original, or non-standard components.
        \item Please refer to our LLM policy (\url{https://neurips.cc/Conferences/2025/LLM}) for what should or should not be described.
    \end{itemize}

\end{enumerate}
\fi


\end{document}